\theoremstyle{plain}\newtheorem{theorem}{Theorem}[section]
\theoremstyle{plain}\newtheorem{lemma}[theorem]{Lemma}
\theoremstyle{plain}\newtheorem{corollary}[theorem]{Corollary}
\theoremstyle{plain}
\theoremstyle{plain}\newtheorem{proposition}[theorem]{Proposition}
\theoremstyle{definition}
\theoremstyle{remark}\newtheorem{remark}{Remark}
\theoremstyle{definition}\newtheorem{def:and:lemma}[theorem]{Definition and Lemma}
\theoremstyle{plain}
\numberwithin{equation}{section}
\newcounter{remarks}
\newcommand{\D}{\textnormal{d}}
\newcommand{\lsp}{\big \langle }
\newcommand{\rsp}{\big \rangle }
\newcommand{\im}{\operatorname{Im}}
\newcommand{\re}{\operatorname{Re}}
\newcommand{\sno}{\vert\hspace{-1pt}\vert}
\newcommand{\I}{\big|}
\newcommand{\tc}{\textcolor{white}{.}}
\newcommand{\op}{\text{\tiny{$\rm op$}}}
\newcommand{\HS}{\text{\tiny{$\rm HS$}}}
\newcommand{\2}{\text{\tiny{$L^2$}}}
\newcommand{\1}{\text{\tiny{$L^1$}}}
\newcommand{\su}{\text{\tiny{$L^\infty$}}}
\newcommand{\Fock}{\text{\tiny{$\mathcal F$}}}
\newcommand{\h}{\text{\tiny{$\mathscr H$}}}
\newcommand{\PP}{P_\psi}
\newcommand{\QQ}{Q_\psi}
\begin{document}

\bibliographystyle{alpha}

\title{\Huge Ubiquity of bound states for the\\ strongly coupled polaron}

\author{
David Mitrouskas\thanks{Institute of Science and Technology Austria (ISTA), Am Campus 1, 3400 Klosterneuburg, Austria.\\ Email: \texttt{david.mitrouskas@ist.ac.at}}
\phantom{i}and Robert Seiringer\thanks{Institute of Science and Technology Austria (ISTA), Am Campus 1, 3400 Klosterneuburg, Austria.\\ Email: \texttt{robert.seiringer@ist.ac.at}}
}

\date{\today}

\maketitle

\frenchspacing

\begin{spacing}{1.15} 
 
\begin{abstract}
We study the spectrum of the Fr\"ohlich Hamiltonian for the polaron  at fixed total momentum. We prove the existence of excited eigenvalues between the ground state energy and the essential spectrum at strong coupling. In fact, our main result shows that the number of excited energy bands  diverges in the strong coupling limit. To prove this we derive upper bounds for the min-max values of the corresponding fiber Hamiltonians and compare them with the bottom of the essential spectrum, a lower bound on which was recently obtained in \cite{Brooks22}. The upper bounds are given in terms of the ground state energy band shifted by momentum-independent excitation energies determined by an effective Hamiltonian of Bogoliubov-type.
\end{abstract}

\allowdisplaybreaks

\tableofcontents

\section{Introduction and Main Result}

In this article we study the energy-momentum spectrum of the strongly coupled Fr\"ohlich polaron. The Fr\"ohlich polaron is a translation invariant model for a charged particle interacting with a polar crystal that is described by a continuous non-relativistic quantum field. A particular feature of this model is the assumption that the excitations of the quantum field  have a constant dispersion.

The energy-momentum spectrum is defined as the energy spectrum as a function of the conserved total momentum.  For the Fr\"ohlich polaron,  the essential part of the energy spectrum does not depend on the momentum. For all momenta $P$, it is given by the interval $[E_\alpha(0)+\alpha^{-2},\infty)$ where $E_\alpha(0)$ is the ground state energy at $P=0$ and $\alpha>0$ denotes the coupling constant \cite{Moeller2006,LMM22}. Moreover, for a certain range of momenta around zero the ground state energy lies strictly below the essential spectrum \cite{Miyao2010,Froehlich1974,Spohn1988,Polzer22,MMS22}. These observations naturally pose the question if further bound states exist between the ground state energy and the bottom of the essential spectrum. The aim of this work is to prove that such bound states do exist for large enough values of $\alpha$. In fact, our result shows that the number of energy bands below the essential spectrum diverges in the limit of strong coupling. To our knowledge, this is the first result about the existence of excited bound states for the polaron. In numerical investigations of the problem \cite{Prokofev2000}, they have so far not been seen, which may be due to the necessary restriction to relatively small $\alpha$. 

From a technical point of view, our analysis can be viewed as an extension of \cite{MMS22} where we gave an upper bound on  the ground state energy as a function of the total momentum. A crucial ingredient in our proof is the lower bound on the absolute ground state energy $E_\alpha(0)$ recently obtained in \cite{Brooks22}, which implies a corresponding lower bound on the edge of the essential spectrum.

The Hilbert space of the polaron is
\begin{align}
\mathscr{H} \, =\,  L^2(\mathbb R^3,\D x) \otimes \mathcal F
\end{align}
with $\mathcal F$ the bosonic Fock space over $L^2(\mathbb R^3)$, and the Fr\"ohlich Hamiltonian reads
\begin{align}\label{eq: Froehlich Hamiltonian}
H_\alpha \,  =\,  -\Delta_x + \alpha^{-2} \mathbb N + \alpha^{-1}\phi(h_x),
\end{align}
where $-\Delta_x = (-i \nabla_x)^2$, with $x\in \mathbb R^3$ and $-i\nabla_x$ the position and momentum of the electron, and $\mathbb N = \int a_y^\dagger a_y \, \D y$ the number operator on Fock space. The interaction between the electron and the quantum field is described by the linear field operator
\begin{align}
\phi(h_x) =a^\dagger(h_x) + a(h_x) = \int h_x(y)\, (a_y^\dagger + a_y)\, \D y
\end{align}
with $x$-dependent form factor
\begin{align}
\label{eq: def of h_x(y)}
h_x(y) \,  =\,  - \frac{1}{ 2\pi^2 \vert x - y\vert^2 }.
\end{align}
The creation and annihilation operators satisfy the bosonic canonical commutation relations
\begin{align}
\big[ a(f  ), a^\dagger( g ) \big] \, =\,  \lsp f | g  \rsp_\2, \quad  \big[ a( f ), a( g ) \big]\,  =\,  0 
\end{align}
and we shall adopt the usual notation $a(f) = \int \D y \overline{f(y)} a_y$.

After setting $\hbar = 1$ and $m_{\rm el} = 1/2$ (the mass of the electron) the polaron model depends on a single dimensionless parameter $\alpha>0$. By rescaling all lengths by a factor $1/\alpha$, the operator $\alpha^2 H_\alpha$ is unitarily equivalent to the Hamiltonian $ H_{\alpha}^{\rm {Pol}} \,  =\, -\Delta_x + \mathbb N + \sqrt \alpha \phi(h_x)$ which is more common in the physics literature. Let us further note that \eqref{eq: Froehlich Hamiltonian} is to be understood in the sense of quadratic forms, since $ h_x \notin L^2(\mathbb R^3)$. Since $(1+|\cdot |)^{-1}\widehat h_x \in L^2(\mathbb R^3)$ the quadratic form is bounded from below \cite{Lieb1997,Lieb1958}.

An important property of the Fr\"ohlich Hamiltonian is that it commutes with the total momentum operator, $[H_\alpha, -i\nabla_x + P_f] \, = \, 0$, where 
\begin{align}
P_f 
= \int \D y\, a^\dagger_y (-\nabla_y) a_y
\end{align}
denotes the momentum operator of the phonons. It follows that the total momentum and the energy are simultaneously diagonalizable, which is best implemented by the Lee–Low–Pines \cite{LeeLowPines} transformation $S:\mathscr H \to \mathcal F$, $S = F\circ e^{i P_f x } $ where $F$ indicates the Fourier transformation w.r.t. the electron coordinate. A straightforward computation reveals that $S (-i\nabla_x + P_f) S^\dagger = \int_{\mathbb R^3}^\oplus P \, \D P$ and $S H_\alpha S^\dagger = \int^{\oplus}_{\mathbb R^3} H_\alpha(P)\, \D P$ with fiber Hamiltonians
\begin{align}
H_\alpha (P) & \, =\,  (P_f-P)^2 + \alpha^{-2} \mathbb N + \alpha^{-1} \phi (h_0).
\end{align}
The operator $H_\alpha (P)$ acts on the Fock space $\mathcal F$ and describes the system at total momentum $P\in \mathbb R^3$. Using this fiber decomposition we can introduce the energy-momentum spectrum
\begin{align}
\Sigma_\alpha & : = \big\{ (E,P) \in \mathbb R\times \mathbb R^3 \, | \, E \in \sigma (H_\alpha(P)) \big\}.
\end{align}
By rotation invariance of the Hamiltonian, we have $(E,P) \in \Sigma_\alpha \Leftrightarrow (E, RP)\in \Sigma_\alpha$ for every $R \in O(3)$. The function $P \mapsto E_\alpha(P) : = 
 \inf \sigma (H_\alpha(P)) $ is called the energy-momentum relation or ground state energy band. It has a unique global minimum at $P=0$ \cite{LMM22,Polzer22}, thus in particular $E_\alpha( 0) = \inf \sigma (H_\alpha)$. It is known that $E_\alpha(P)$ is strictly increasing below the essential spectrum \cite{Polzer22} and that the set $\{P : E_\alpha(P) < \inf\sigma_{\rm ess}(H_\alpha(P)) \}$ is non-empty \cite{Miyao2010,Froehlich1974,Spohn1988,Polzer22,MMS22}. Moreover, by a Perron--Frobenius type argument the ground state of $H_\alpha(P)$, if it exists, is non-degenerate \cite{Miyao2010}. In this work we are interested in the existence of other discrete eigenvalues in $\Sigma_\alpha$. As already mentioned,  the essential spectrum of $H_\alpha(P)$ is known to be $P$-independent and continuous. For regular polaron models (for instance the Fr\"ohlich Hamiltonian with UV cutoff) the precise location of the essential spectrum has been studied via the HVZ theorem in \cite{Moeller2006} (see also \cite{Froehlich1974}). This can be extended to the Fr\"ohlich Hamiltonian via a suitable approximation argument \cite{GriesemerW2016}. Combining both results, one can prove the following statement \cite{LMM22}.
\begin{lemma}\label{lem:essential:spectrum} For every $\alpha > 0$ and $P\in \mathbb R^3$ the essential spectrum of $H_\alpha(P)$ is given by
\begin{align}\label{eq:essential:spectrum}
\sigma_{\rm ess}(H_\alpha(P)) = [ E_\alpha(0)+\alpha^{-2}, \infty).
\end{align}
\end{lemma}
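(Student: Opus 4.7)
My plan is to derive the lemma by combining the HVZ theorem for an ultraviolet-regularized fiber Hamiltonian with a norm-resolvent approximation argument, following the route indicated in the discussion preceding the lemma statement.

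For $\Lambda < \infty$ I would first introduce the cutoff Hamiltonian $H_\alpha^\Lambda(P)$ obtained from \eqref{eq: Froehlich Hamiltonian} by replacing $h_0$ with its Fourier truncation to the ball $\{|k| \leq \Lambda\}$. The regularized form factor lies in $L^2(\mathbb R^3)$, so $H_\alpha^\Lambda(P)$ is self-adjoint on the natural domain of $(P_f - P)^2 + \mathbb N$, and the HVZ theorem of Møller \cite{Moeller2006} characterizes $\sigma_{\rm ess}(H_\alpha^\Lambda(P))$ via $n$-phonon scattering asymptotics. In particular, the bottom of the essential spectrum equals $\inf_{n \geq 1}\inf_{k_1, \ldots, k_n \in \mathbb R^3}[E_\alpha^\Lambda(P - \sum_i k_i) + n\alpha^{-2}]$, and since the phonon dispersion is the positive constant $\alpha^{-2}$ and $E_\alpha^\Lambda$ attains its global minimum at zero momentum (by the same Perron--Frobenius and rotation-invariance arguments as used for the full model in \cite{LMM22,Polzer22}), this infimum is attained at $n = 1$, $k_1 = P$ and equals $E_\alpha^\Lambda(0) + \alpha^{-2}$. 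A routine inspection of the Weyl sequences at $n \geq 1$ asymptotic phonons, combined with continuity of $E_\alpha^\Lambda$, then shows that $\sigma_{\rm ess}(H_\alpha^\Lambda(P))$ is the full half-line $[E_\alpha^\Lambda(0) + \alpha^{-2}, \infty)$.

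Second, I would pass to the UV limit $\Lambda \to \infty$ via the norm-resolvent convergence $H_\alpha^\Lambda(P) \to H_\alpha(P)$ established in \cite{GriesemerW2016}. Norm-resolvent convergence is stable on the level of the essential spectrum and implies $E_\alpha^\Lambda(0) \to E_\alpha(0)$, so the cutoff identity transfers directly to \eqref{eq:essential:spectrum}.

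The main obstacle is the UV limit rather than HVZ itself. Since $h_0 \notin L^2(\mathbb R^3)$, the operator $H_\alpha(P)$ is defined only through its quadratic form, for which strong-resolvent convergence from the cutoffs would be immediate but insufficient --- strong-resolvent convergence alone can cause the essential spectrum to jump in the limit. The key content of \cite{GriesemerW2016} used here is that in three spatial dimensions the mild form-boundedness $(1+|\cdot|)^{-1}\widehat h_0 \in L^2(\mathbb R^3)$ suffices to upgrade strong-resolvent convergence to norm-resolvent convergence, with no renormalization or divergent energy subtraction required.
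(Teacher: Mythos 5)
Your proposal follows essentially the same route the paper indicates: apply the HVZ theorem of M\o ller to the UV-regularized fiber Hamiltonian to identify $\sigma_{\rm ess}(H_\alpha^\Lambda(P))=[E_\alpha^\Lambda(0)+\alpha^{-2},\infty)$, then pass to the UV limit using the norm-resolvent convergence of \cite{GriesemerW2016}; the paper itself delegates the details to \cite{LMM22}. One caveat worth making explicit: the blanket statement that ``norm-resolvent convergence is stable on the level of the essential spectrum'' is not a general theorem, and the conclusion here really uses the extra structure at hand --- that each cutoff essential spectrum is a half-line $[\theta_\Lambda,\infty)$ with $\theta_\Lambda\to\theta_\infty$, so that norm-resolvent spectral convergence gives $[\theta_\infty,\infty)\subset\sigma(H_\alpha(P))$ (and every point of a half-line of spectrum is automatically essential), while norm convergence of the below-threshold spectral projections $E_{H_\alpha^\Lambda(P)}((-\infty,\mu])$, which have finite rank for $\Lambda$ large whenever $\mu<\theta_\infty$, forces the limit projection to have finite rank and hence $\sigma_{\rm ess}(H_\alpha(P))\cap(-\infty,\theta_\infty)=\varnothing$. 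With that step spelled out, the argument is complete.
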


Before we continue with the main result, let us define the critical momentum (at  strong-coupling)
\begin{align}\label{eq:characteristic:mom}
P_{\rm c}(\alpha) : =\sqrt{2 M^{\rm LP}}\alpha
\end{align}
where $M^{\rm LP} = \frac{2}{3} \sno \nabla\varphi \sno_\2^2$ is the ($\alpha$-independent) semi-classical effective mass \cite{Landau1948,FeliciangeliRS21} (the classical field $\varphi$ is defined in \eqref{eq: optimal phonon mode}). The momentum $P_{\rm c}(\alpha)$ separates the ground state energy band $P\mapsto E_\alpha(P)$ for large $\alpha$ into two different regimes. On one side, there is the quasi-particle regime $|P| \lesssim P_{\rm c}(\alpha)$ where $E_\alpha(P)$ is approximately a quadratic function of $P$ and on the other side, the radiation regime $|P|\gtrsim P_{\rm c}(\alpha)$ where $E_\alpha(P)$ is expected to coincide with the bottom of the essential spectrum; see Figure \ref{fig:my-label} for a sketch of $E_\alpha(P)$. 
This picture has been mathematically substantiated in recent works \cite{MMS22,Brooks22,Brooks22b}. 

In the present work we shall  focus on the quasi-particle regime. 
Our main result states that the number of energy bands below the essential spectrum diverges in the limit $\alpha \to \infty$.
\begin{theorem}\label{thm:main:result:1} Let $g :[0,\infty) \to [0,\infty)$ be a function with $\lim_{x\to\infty} g(x) = 0$. 
For all $|P|\le g(\alpha) P_{\rm c}(\alpha)$, we have $| \sigma_{\rm disc}(H_\alpha(P)) | \to \infty$ as $\alpha \to \infty$, where $\sigma_{\rm disc}(H_\alpha(P))$ is the discrete spectrum of $H_\alpha(P)$. 
\end{theorem}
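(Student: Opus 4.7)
The plan is to establish, for any fixed $N$ and all large enough $\alpha$, the existence of $N+1$ approximately orthonormal trial vectors $\Psi_{0,P},\ldots,\Psi_{N,P}\in\mathcal F$ whose quadratic forms satisfy
\[
\langle\Psi_{n,P},H_\alpha(P)\Psi_{n,P}\rangle \le E_\alpha(0)+\alpha^{-2}e_n+o(\alpha^{-2}),
\]
uniformly in $|P|\le g(\alpha)P_{\mathrm c}(\alpha)$, where $0=e_0\le e_1\le\cdots$ is the non-decreasing enumeration of the eigenvalues of the Bogoliubov Hamiltonian associated with the Hessian of the Pekar functional at its minimizer. Because the non-zero modes of this Hessian lie strictly below the phonon mass $1$ and accumulate there, arbitrarily many $e_n$ are strictly less than $1$. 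Combined with Lemma~\ref{lem:essential:spectrum} and the min--max principle, this yields at least $N+1$ discrete eigenvalues of $H_\alpha(P)$ below $E_\alpha(0)+\alpha^{-2}$, and choosing $N=N(\alpha)\to\infty$ slowly enough gives Theorem~\ref{thm:main:result:1}.

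Concretely, the steps are the following. \emph{(1) Effective Bogoliubov picture.} Let $\psi$ be a Pekar minimizer with associated classical field $\varphi$ (cf.\ \eqref{eq: optimal phonon mode}). Conjugating $H_\alpha(P)$ by the Weyl operator $W(\alpha\varphi)$ and expanding around $(\psi,\varphi)$ produces $\alpha^2 E^{\mathrm{Pek}}$ at leading order, the electron Hamiltonian with the Pekar potential (whose ground state is $\psi$), and a quadratic fluctuation piece which, after a symplectic Bogoliubov transformation, takes the form $\mathbb{H}_{\mathrm{Bog}}=E^{\mathrm{Bog}}+\sum_j\Omega_j b_j^\dagger b_j$ on $\mathcal F$, with $\Omega_j\in(0,1)$ accumulating at $1$. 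Let $\Phi_n\in\mathcal F$ denote the orthonormal eigenstates of $\mathbb{H}_{\mathrm{Bog}}$ with eigenvalues $E^{\mathrm{Bog}}+e_n$. \emph{(2) Trial states at fixed momentum.} For $n=0,\ldots,N$, define
\[
\Psi_{n,P}=\frac{1}{\sqrt{Z_{n,P}}}\int_{\mathbb R^3} e^{-iP\cdot y}\,\psi(y)\,e^{iP_f\cdot y}\,W(\alpha\varphi)\,\Phi_n\,\D y,
\]
i.e.\ the Lee--Low--Pines fiber at momentum $P$ of the product state $\psi\otimes W(\alpha\varphi)\Phi_n$. This generalizes the ground-state ansatz of \cite{MMS22} by inserting a Bogoliubov excitation $\Phi_n$. \emph{(3) Energy expansion.} Computing $\langle\Psi_{n,P},H_\alpha(P)\Psi_{n,P}\rangle$ along the lines of \cite{MMS22}: the $y$-average produces $E^{\mathrm{Pek}}+P^2/(2M^{\mathrm{LP}}\alpha^4)+o(\alpha^{-2})$, the Bogoliubov block contributes $\alpha^{-2}(E^{\mathrm{Bog}}+e_n)$, and the cubic and quartic remainders are absorbed into an $o(\alpha^{-2})$ error via number-operator bounds. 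Since $|P|\le g(\alpha)P_{\mathrm c}(\alpha)$ with $P_{\mathrm c}(\alpha)^2/(2M^{\mathrm{LP}}\alpha^4)=\alpha^{-2}$, the effective-mass term is bounded by $g(\alpha)^2\alpha^{-2}=o(\alpha^{-2})$, uniformly in $P$. \emph{(4) Approximate orthogonality.} The relations $\langle\Psi_{n,P},\Psi_{m,P}\rangle=\delta_{nm}+o(1)$ for $0\le n,m\le N$ follow from the orthogonality of the $\Phi_n$ in $\mathcal F$ together with a Laplace-asymptotic analysis of the $y$-integrals, using the concentration of the Pekar coherent state $W(\alpha\varphi)$. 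Combining with the sharp lower bound $E_\alpha(0)\ge E^{\mathrm{Pek}}+\alpha^{-2}E^{\mathrm{Bog}}-o(\alpha^{-2})$ from \cite{Brooks22} then yields the upper bound on $\lambda_{n+1}(H_\alpha(P))$ above.

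The main obstacle is step (3): to separate the Bogoliubov excitations from the essential spectrum one must match the trial-state energy to accuracy $o(\alpha^{-2})$, which is precisely the order at which the lower bound of \cite{Brooks22} identifies $E_\alpha(0)$. Every remainder coming from the cubic and quartic fluctuations around the coherent state $W(\alpha\varphi)$, from the Laplace expansion of the momentum-projection integral, and from the $P$-dependence of the effective-mass term must therefore be quantitatively controlled at this precision, uniformly in $P$ over the entire quasi-particle window. A secondary difficulty is that higher Bogoliubov eigenstates $\Phi_n$ carry progressively more phonons, so the error estimates must be tracked as functions of $n$ to determine how fast $N(\alpha)\to\infty$ is admissible; this is what ultimately dictates the rate at which the number of discrete eigenvalues diverges.
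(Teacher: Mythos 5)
Your overall strategy---variational upper bounds on the min--max values, compared against the essential spectrum threshold via Lemma~\ref{lem:essential:spectrum} and the lower bound of \cite{Brooks22}---is exactly the paper's. But there is a genuine gap in step~(2): the trial states you write, the Lee--Low--Pines fiber of the bare product state $\psi\otimes W(\alpha\varphi)\Phi_n$, do \emph{not} produce the Bogoliubov energy at order $\alpha^{-2}$. After conjugating by $W(\alpha\varphi)$, the Hamiltonian takes the form $h^{\rm Pek}+\alpha^{-2}\mathbb N+\alpha^{-1}\phi(h_x+\varphi)+\dots$. The linear-in-field term has vanishing first-order expectation in $\psi$ (since $\lsp\psi|h_\cdot\psi\rsp_\2=-\varphi$), but its \emph{second-order} contribution $-\alpha^{-2}\lsp\psi|\phi(h^1_{K,\cdot})R\,\phi(h^1_{K,\cdot})\psi\rsp_\2$ is precisely the off-diagonal piece that turns $\mathbb N_1$ into $\mathbb H_K$. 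To capture it variationally, the trial electron wavefunction must be deformed by $R\phi(h^1_{K,\cdot})\psi$: this is exactly the $-\tfrac{1}{\alpha}\mathscr G^1$ correction in the paper's map $\mathcal S_P$ in \eqref{eq:SP:map}. You describe your ansatz as ``generalizing the ground-state ansatz of \cite{MMS22},'' but in fact you have dropped the correction that \cite{MMS22} already needs for the ground state. Your state produces only the $\mathcal E^\Gamma$ contribution in Lemma~\ref{lem: energy identity}, and misses $\mathcal G^\Gamma+\mathcal K^\Gamma$.

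This is not an ``$o(\alpha^{-2})$ remainder.'' Your quadratic correction becomes $\alpha^{-2}\lsp\Phi_n|\mathbb N_1\Phi_n\rsp_\Fock$ rather than $\alpha^{-2}\lsp\Phi_n|\mathbb H_K\Phi_n\rsp_\Fock=\alpha^{-2}\bigl(\inf\sigma(\mathbb H_K)+\Lambda^{(n)}\bigr)$. The difference is the strictly positive quantity $\alpha^{-2}\lsp\Phi_n|\lsp\psi|\phi R\phi\,\psi\rsp_\2\Phi_n\rsp_\Fock$, of order $\alpha^{-2}$, which does not cancel when you compare against $E_\alpha(0)+\alpha^{-2}$ (recall $E_\alpha(0)$ from \cite{Brooks22} already contains the full Bogoliubov correction $\tfrac{1}{2\alpha^2}\textnormal{Tr}_{L^2}(\sqrt{H^{\rm Pek}}-1)$). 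So the step-(3) claim that ``the Bogoliubov block contributes $\alpha^{-2}(E^{\rm Bog}+e_n)$'' is unjustified for this ansatz, and the step-(4) conclusion does not follow: you cannot guarantee that \emph{any} of your trial energies, let alone a growing number of them, fall below the essential spectrum. A secondary, less serious omission is using $\varphi$ rather than $\varphi_P=\varphi+\xi_P$; the paper needs this to get the $(2M^{\rm LP})^{-1}$ effective-mass coefficient (Theorem~\ref{theorem: main estimate} holds for all $P$), though in the window $|P|\le g(\alpha)P_{\rm c}(\alpha)$ the $P^2/\alpha^4$-term is automatically $o(\alpha^{-2})$ for any bounded coefficient. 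Once you reinstate the $\mathscr G^1$ correction (and, for the clean effective-mass term, $\xi_P$), the argument becomes the paper's, and your remaining remarks about approximate orthogonality and $n$-uniform error control correspond to Proposition~\ref{prop:overlap}, Lemma~\ref{lem:linear:independence} and Proposition~\ref{theorem: main estimate 2}.
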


We are not aware of any previous result about the existence of excited bound states or energy bands for the polaron. A complementary result, however, was recently obtained in \cite{Seiringer22}, proving the absence of excited eigenvalues at $P=0$ for sufficiently weak coupling. (In a similarly fashion \cite{JonasPHD17} showed that for sufficiently weak coupling and small momentum, there are at most two eigenvalues below the essential spectrum). This suggests, somewhat analogous to the case of Schr\"odinger operators with a binding potential, that the number of eigenvalues below the essential spectrum is an increasing function of the coupling constant.

To prove Theorem \ref{thm:main:result:1} we derive upper bounds for the min-max values of $H_\alpha(P)$ and show that for large $\alpha$ they lie strictly below the essential spectrum. By the min-max principle, this implies that the min-max values correspond to eigenvalues of $H_\alpha(P)$. To compare the upper bounds for the min-max values with $\inf \sigma_{\rm ess}(H_\alpha(P))$, we will  use Lemma \ref{lem:essential:spectrum} and a recently obtained two-term lower bound for $E_\alpha(0)$ \cite{Brooks22}. More concretely, our upper bounds are given in terms of the ground state energy band at large coupling shifted by momentum-independent excitation energies. For $|P| \lesssim  P_{\rm c}(\alpha)$ and $\alpha \to \infty$, the ground state energy band is a quadratic function whose coefficients are determined by the semi-classical approximation \cite{MMS22,Brooks22b}. While the leading term in $E_\alpha(P)$ is of order one, the excitation energies are proportional to $\alpha^{-2}$, hence they are comparable to the energy of a free phonon which characterizes the bottom of the essential spectrum. In fact the excitation energies correspond to excitations of the quantum field around its classical value and they  are explicitly determined by the eigenvalues of a quadratic Bogoliubov-type Hamiltonian. 

A qualitative picture of the bounds for the min-max values is shown in Figure \ref{fig:my-label}. The precise statement and its proof require a certain amount of preparations. In the next section we recall the semi-classical theory of the polaron and define the Bogoliubov Hamiltonian that describes the fluctuations of the quantum field. In Section \ref{sec:upper:bounds} we state the upper bounds for the min-max values, which are then used to prove Theorem \ref{thm:main:result:1}. The more technical part concerning the derivation of the upper bounds is postponed to Section \ref{sec:proof:main:proposition}.

\begin{figure}[b!] 
        \center{ \includegraphics[width=0.8\textwidth,height=4.75cm]
        {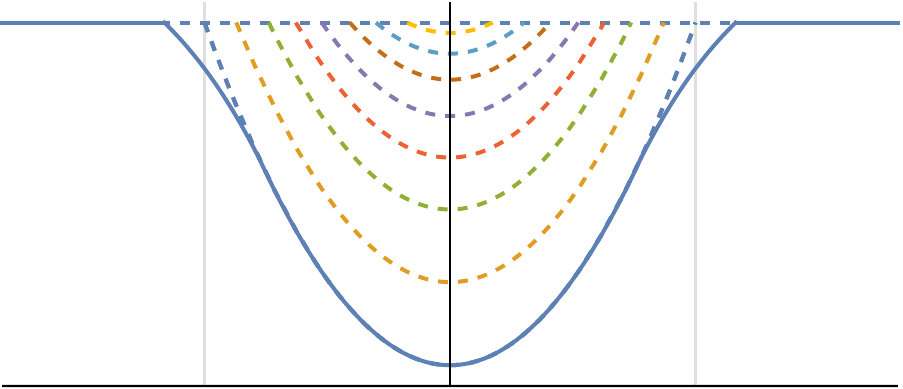} } 
        \caption{\label{fig:my-label}\small{Schematic plot of the ground state energy band $P\mapsto E_\alpha(P)$ (lowest curve) and upper bounds for the excited energy bands (dashed curves). The horizontal line represents the bottom of the essential spectrum and the vertical lines indicate the critical momentum $P_{\rm c}(\alpha) = \sqrt{2 M^{\rm LP}} \alpha$ that separates the parabolic quasi-particle regime from the flat radiation regime. In the limit $\alpha \to \infty$ the eigenvalues accumulate at the edge of the essential spectrum for all $|P|\ll P_c(\alpha)$.
 \label{figure 1}}} 
\end{figure} 

\section{Upper Bounds for the Min-Max Values}

\subsection{Pekar functionals and Bogoliubov Hamiltonian\label{sec:Pekar:Bogoliubov}}

The semi-classical theory of the polaron \cite{Pekar54} arises naturally in the context of strong coupling and corresponds to minimizing the Fröhlich Hamiltonian over product states of the form $ \Psi_{u,v} \,  =\, u \otimes e^{a^\dagger (\alpha v ) - a (\alpha v) } \Omega $,  where $u\in H^1(\mathbb R^3)$ is a normalized electron wave function, $\Omega = (1,0,0,\ldots)$ the Fock space vacuum and $ e^{a^\dagger (\alpha v) - a (\alpha v) } \Omega$ the coherent state associated with the classical field $ v\in L^2(\mathbb R^3)$. A simple computation leads to the Pekar energy 
\begin{align}
\mathcal G(u,v) \, =\,  \frac{\lsp \Psi_{u,v} | H_\alpha   \Psi_{u,v} \rsp_\h }{\lsp \Psi_{u,v} | \Psi_{u,v} \rsp_\h } \, =\,   \lsp  u | (-\Delta + V^{v} ) u \rsp_\2 + \sno v \sno^2_\2
\end{align}
where the polarization potential is given by
\begin{align}\label{eq: def of effective potential}
V^{v}(x) \,  =\,   2 \re \lsp v | h_x \rsp_\2 \,  =\,  - \re \int \frac{ v (y)}{\pi ^2\vert x-y\vert^2 } \D y.
\end{align}
Minimizing over the field variable one obtains the Pekar functional for the electron
\begin{align}\label{eq: electronic pekar functional}
\mathcal E(u) & \,  =\, \inf_{v \in L^2} \mathcal G(u,v) \,  =\, \int \vert \nabla u (x) \vert^2 \D x - \frac{1}{4\pi}\iint \frac{ \vert u(x)\vert^2 \vert u(y)\vert^2 }{ \vert x-y\vert } \D x \D y ,
\end{align}
which is known \cite{Lieb1977} to admit a unique rotation invariant minimizer $\psi >  0$ (the minimizer is unique up to translations and multiplication by a constant phase). Alternatively, one can minimize $\mathcal G$ w.r.t. the electron wave function first. This leads to the Pekar functional for the classical field
\begin{align}
\mathcal F(v) & \,  =\, \inf_{\sno u\sno_\2 = 1} \mathcal G(u,v) \,  =\, \inf \sigma (-\Delta + V^{v}) + \sno v\sno^2_\2 \label{eq: phonon pekar functional}
\end{align}
where $\inf \sigma (-\Delta + V^{v})$ is the ground state energy of the Schr\"odinger operator $-\Delta + V^{v}$ acting on $L^2(\mathbb R^3)$. The unique rotation invariant minimizer of $\mathcal F(v)$ is given by
\begin{align}\label{eq: optimal phonon mode}
\varphi(z)  \,  =\, -  \lsp \psi \I h_\cdot (z)  \psi \rsp_\2 \,  =\,  \int \frac{\vert \psi (y)\vert ^2}{2\pi^2 \vert z-y\vert^2}\D y.
\end{align}
Note that $\psi$ and $\varphi$ are both real-valued (in fact positive) functions.

The semi-classical ground state energy is called the Pekar energy
\begin{align}\label{eq: Pekar energy}
e^{\rm Pek} \,  =\, \mathcal E(\psi) \,  =\, \mathcal F(\varphi),\quad e^{\rm Pek}<0,
\end{align}
and by the variational principle $\inf \sigma (H_\alpha) \le e^{\rm Pek}$. The mathematical validity of Pekar's ansatz was verified by Donsker and Varadhan \cite{Donsker1983} who proved that $ \lim_{\alpha \to \infty} \inf \sigma(H_\alpha) =  e^{\rm Pek}$ and by Lieb and Thomas \cite{Lieb1997} who quantified the error by showing that $\inf \sigma (H_\alpha) \, \ge\,  e^{\rm Pek} + O(\alpha^{ - 1 / 5 })$.

For the potential $V^\varphi$ we consider the Schr\"odinger operator for the electron 
\begin{align}\label{eq: hPek definition}
h^{\rm Pek} \,  =\, -\Delta + V^\varphi - \lambda^{\rm pek}, \quad  \lambda^{\rm Pek} = \inf \sigma (-\Delta + V^{\varphi}),
\end{align}
with $\psi>0$ the unique ground state of $h^{\rm Pek}$ and $\lambda^{\rm Pek} \,  =\,  e^{\rm Pek} - \sno \varphi\sno_\2^2$. It follows from general arguments for Schr\"odinger operators that $h^{\rm Pek}$ has a finite spectral gap above zero and thus the reduced resolvent
\begin{align}\label{eq: def of resolvent}
 R \,  =\,\QQ (h^{\rm Pek})^{-1} \QQ \quad \text{with} \quad \QQ \,  =\, 1-\PP,\quad \PP \,  =\, |\psi \rangle \langle \psi |  
\end{align}
defines a bounded operator ($\PP$ denotes the orthogonal projection onto $\text{Span}\{\psi\}$). Note that $R$ has a real-valued kernel.

Next we introduce the operator $H^{\rm Pek} :L^2(\mathbb R^3) \to L^2(\mathbb R^3) $ with (real-valued) integral kernel
\begin{align}\label{eq: Hessian kernel}
H^{\rm Pek}(y,z) \,  =\,  \delta(y-z) -4   \lsp \psi \I h_\cdot(y) R  h_\cdot(z) \psi \rsp_\2,
\end{align}
which arises as the Hessian of the energy functional \eqref{eq: phonon pekar functional} at its minimizer $\varphi$ (see \cite{MMS22,FeliciangeliRS20,FeliciangeliRS21}). The following lemma was proved in \cite[Lemma 2.1]{MMS22}.
\begin{lemma}\label{lem: Hessian} 
The operator $H^{\rm Pek}$ satisfies the following properties.
\begin{itemize}
\item[(i)] $0\le H^{\rm Pek} \le  1$
\item[(ii)]  $\textnormal{Ker} H^{\rm Pek} = \textnormal{Span}\{\partial_i \varphi \, : \, i = 1,2,3 \}$ with $\varphi \in H^1(\mathbb R^3)$ defined in \eqref{eq: optimal phonon mode}
\item[(iii)] $H^{\rm Pek} \ge \tau >0$ when restricted to $(\textnormal{Ker} H^{\rm Pek})^\perp$ \item[(iv)] ${\rm Tr}_{L^2}(1- \sqrt{H^{\rm Pek}}) < \infty $.
\end{itemize}
\end{lemma}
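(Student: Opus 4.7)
The approach is to recognize $H^{\rm Pek}$ as the Hessian of the Pekar field functional $\mathcal F$ at its minimizer $\varphi$, and to exploit the factorization $H^{\rm Pek}= 1 - K$ with $K = T^\ast R\, T$ and $(Tf)(x) = \psi(x)\int h_x(y) f(y)\,\D y$. A second-order Rayleigh--Schr\"odinger expansion of $\inf\sigma(-\Delta + V^{\varphi + \epsilon f})$ together with the term $\|\varphi + \epsilon f\|_{L^2}^2$ shows that the first-order term in $\epsilon$ vanishes by the Euler--Lagrange relation $\varphi(y) = -\langle\psi|h_y\psi\rangle$, while the second-order term equals $\epsilon^2 \langle f | H^{\rm Pek} f\rangle$. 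From this, (i) is immediate: $H^{\rm Pek}\ge 0$ because $\varphi$ minimizes $\mathcal F$, and $H^{\rm Pek}\le 1$ because $R\ge 0$ on the range of $Q_\psi$.

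For (ii), the inclusion $\supseteq$ is immediate from translation invariance of $\mathcal F$: since $\mathcal F(\varphi(\cdot - a)) = \mathcal F(\varphi)$ for all $a\in\mathbb R^3$, differentiating twice at $a=0$ yields $\langle \partial_i\varphi | H^{\rm Pek} \partial_i\varphi\rangle = 0$, which together with $H^{\rm Pek} \ge 0$ forces $H^{\rm Pek}\partial_i\varphi = 0$. The reverse inclusion is the rigidity step, which I would deduce from the uniqueness (up to translations) of the Pekar minimizer, whose infinitesimal form is the non-degeneracy of the Hessian of the electronic functional $\mathcal E$ at $\psi$ (the linearized Choquard/Pekar operator, whose kernel is $\textrm{Span}\{\partial_i\psi\}$). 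A transfer argument using the resolvent $R$ and the identity $\varphi = -\langle\psi|h_\cdot|\psi\rangle$ then puts $\ker H^{\rm Pek}$ in one-to-one correspondence with the kernel of the electronic Hessian, giving $\textrm{Span}\{\partial_i\varphi\}$.

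For (iii), the compactness of $K = T^\ast R T$ reduces to compactness of $T$. Writing $V_f(x) := \int h_x(y) f(y)\,\D y = c\,(-\Delta)^{-1/2} f$ via the Fourier identity $\widehat{|x|^{-2}}(k) \propto |k|^{-1}$ in three dimensions, the operator $T$ factors as multiplication by $\psi$ composed with $(-\Delta)^{-1/2}$; the fast (indeed exponential) decay of $\psi$ inherited from the strict negativity of $\lambda^{\rm Pek}$ then delivers compactness of $T$ by a standard Birman--Schwinger-type argument. Weyl's theorem gives $\sigma_{\rm ess}(H^{\rm Pek}) = \{1\}$, so the spectrum of $H^{\rm Pek}$ below $1$ consists of eigenvalues of finite multiplicity accumulating only at $1$; combined with the three-dimensional kernel from (ii), this produces the strictly positive spectral gap $\tau$.

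The main obstacle is (iv). Via the elementary inequality $0 \le 1 - \sqrt{1-x} \le x$ on $[0,1]$ and the monotonicity of the trace on positive operators, it suffices to show that $K$ is trace class. An explicit computation yields
\begin{align*}
\operatorname{Tr}_{L^2} K \,=\, c \iint \psi(x)\psi(x')\,R(x,x')\,\frac{\D x\,\D x'}{|x-x'|},
\end{align*}
where the $|x-x'|^{-1}$ factor arises from the scaling integral $\int |x-y|^{-2} |x'-y|^{-2}\,\D y$. In three dimensions the resolvent kernel $R(x,x')$ of the second-order Schr\"odinger operator $h^{\rm Pek}$ has a short-distance singularity of order $|x-x'|^{-1}$, so the integrand is at worst of order $|x-x'|^{-2}$ near the diagonal, which is integrable in $\mathbb R^3$; the decay at infinity is supplied by the exponential decay of $\psi$ together with the spectral gap of $h^{\rm Pek}$. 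Balancing these short- and long-distance estimates is, in my view, the technical heart of the proof.
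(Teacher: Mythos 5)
The paper itself does not supply a proof of Lemma \ref{lem: Hessian} but cites \cite[Lemma~2.1]{MMS22}, so one can only compare against the expected route. Your overall strategy --- positivity of the Hessian of $\mathcal F$ at the minimizer for the lower bound, the factorization $H^{\rm Pek}=1-K$ with $K=4T^\ast R T$ (you dropped the constant factor $4$, which is harmless) for the upper bound, translation invariance for the inclusion $\operatorname{Span}\{\partial_i\varphi\}\subseteq\operatorname{Ker}H^{\rm Pek}$, compactness of $K$ plus a three-dimensional kernel for the gap $\tau$, and $\operatorname{Tr}(1-\sqrt{H^{\rm Pek}})\le\operatorname{Tr}(1-H^{\rm Pek})=\operatorname{Tr}K$ for (iv) --- is consistent with that reference, and the elementary parts (the Rayleigh--Schr\"odinger identification, the $\supseteq$ step, the reduction $1-\sqrt{1-x}\le x$, the use of Weyl's theorem once $K$ is compact) are fine.

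The two places where you assert rather than prove are genuinely the nontrivial parts. First, the reverse inclusion in (ii): you reduce it to the non-degeneracy of the electronic Hessian at $\psi$ (Lenzmann's theorem for the Choquard/Pekar equation) followed by a ``transfer argument using the resolvent $R$'' --- but this transfer, which turns $H^{\rm Pek}\Pi_1 f=0$ for $f\perp\{\partial_i\varphi\}$ into a statement about the linearized electronic operator acting on $R\phi(h^1_\cdot)\psi$-type vectors, is precisely where the work is; as written it is a pointer to a known result rather than an argument. Second, for (iv): the formula $\operatorname{Tr}K=c\iint\psi(x)R(x,x')\psi(x')|x-x'|^{-1}\,\D x\,\D x'$ is correct (it comes from $\int h_x(y)h_{x'}(y)\,\D y\propto|x-x'|^{-1}$), but two things need justification. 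You need to upgrade the heuristic ``integrable diagonal implies trace class'' to a rigorous statement, e.g.\ by writing $K=(R^{1/2}T)^\ast(R^{1/2}T)$ and showing the Hilbert--Schmidt norm of $R^{1/2}T$ is finite, which is literally the same integral but now with a clean operator-theoretic meaning. And you need to substantiate the claimed $|x-x'|^{-1}$ short-distance singularity of the reduced resolvent kernel $R(x,x')=Q_\psi(h^{\rm Pek})^{-1}Q_\psi(x,x')$, which you take for granted; the intended mechanism in the cited work is the smoothing bound $\|\nabla R^{1/2}\|_{\rm op}<\infty$ (Lemma \ref{lem: bound for R} here) combined with the Lieb--Yamazaki commutator method, rather than an explicit asymptotic for $R(x,x')$. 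Without these two ingredients the proposal remains an outline of the proof rather than a proof.
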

Let $\Pi_0$ and $\Pi_1$ be the orthogonal projections defined by
\begin{align}\label{eq: def of Pi_i}
\text{Ran}(\Pi_0) \, = \, \text{Ker}H^{\rm Pek}, \quad  
\text{Ran}(\Pi_1) \, = \, (\text{Ker}H^{\rm Pek})^\perp.
\end{align}
The decomposition $ L^2 (\mathbb R^3) \, =\, \text{Ran}(\Pi_0) \oplus \text{Ran}(\Pi_1) $
implies the factorization
\begin{align}\label{eq: Fock space factorization}
\mathcal F = \mathcal F_0 \otimes \mathcal F_1 \quad \textnormal{with} \quad \mathcal F_0 = \mathcal F( \text{Ran}(\Pi_0)) \quad \textnormal{and} \quad \mathcal F_1 = \mathcal F( \text{Ran}(\Pi_1)).
\end{align}
Through the isomorphisms 
\begin{subequations}
\begin{align}
\mathcal F_0 & \cong \text{Span}\big\{ (a^\dagger(\partial_1 \varphi))^{j_1} (a^\dagger (\partial_2 \varphi))^{j_2} ( a^\dagger(\partial_3 \varphi)) ^{j_3} \Omega\, | \, (j_1,j_2,j_3) \in \mathbb N^3_0  \big\} \subset \mathcal F \label{eq:iso:0}\\[1mm]
\mathcal F_1 & \cong \text{Ker}(a(\partial_1 \varphi)) \cap \text{Ker}(a(\partial_2 \varphi)) \cap \text{Ker}(a(\partial_3 \varphi)) \subset \mathcal F \label{eq:iso:1}
\end{align}
\end{subequations}
one can view $\mathcal F_0$ and $\mathcal F_1$ also as subspaces of $\mathcal F$.

The Bogoliubov Hamiltonian describing the fluctuations of the quantum field acts non-trivially only in $\mathcal F_1$. For a simple notation, however, we define it directly as an operator on the full Fock space $\mathcal F$. Moreover, for technical reasons, we need to introduce the Bogoliubov Hamiltonian $\mathbb H_K $ with a momentum cutoff $K\in (0,\infty]$. More concretely we set 
\begin{align}\label{eq: Bogoliubov Hamiltonian maintext}
\mathbb  H_K = 1_{\mathcal F_0} \otimes \tilde{\mathbb H}_K 
\end{align}
with the non-trivial part $\tilde {\mathbb H}_K : \mathcal F_1 \to \mathcal F_1$ defined by
\begin{align}
\tilde {\mathbb H}_K \,  =\, \mathbb N_1  - \lsp \psi \I  \phi( h_{K,\cdot}^1 ) R \phi( h_{K,\cdot}^1 )  \psi \rsp_\2  \quad \text{with}\quad \mathbb N_1 = \int \D y \D z\, \Pi_1(y,z) a^\dagger_y a_z ,
\end{align}
where $\mathbb N_1$ corresponds to the number operator in $\mathcal F_1$ and the new form factor is defined by\footnote{Our definition of $\mathbb H_K$ is different from the Bogoliubov Hamiltonian used in \cite{LeopoldMRSS2020,Mitrouskas21} to describe the effective dynamics of the polaron. On the one hand, we have the cutoff $K$, which is needed for technical reasons in the proof. On the other hand, more importantly, there is the additional restriction to $\mathcal F_1$, which is related to the fact that we study the spectrum of the fiber Hamiltonian $H_\alpha(P)$.}
\begin{align}\label{def: cut off coupling function}
h_{K,x}^1(y) \,  =\, \int \D z\,  \Pi_1(y,z) h_{K,x}(z) \quad \text{with}\quad h_{K,x}(y) \,  =\, \frac{1}{(2\pi)^3} \int_{|k|\le K} \frac{e^{ik(x-y)}}{|k|} \D k.
\end{align}
Note that the second term in $\mathbb H _K$ defines the quadratic operator
\begin{align}
&  \lsp \psi \I \phi( h^1_{K,\cdot} ) R \phi( h^1_{K,\cdot} )  \psi \rsp_\2  \notag \\
& \quad \quad    = \iint \D y  \D z \,  \lsp \psi \I (h^1_{K,\cdot}) (y) R ( h^1_{K,\cdot})(z)  \psi \rsp_\2 (a_y^\dagger + a_y) (a_z^\dagger + a_z) .
\end{align}

In Lemma \ref{prop: diagonalization of HBog} below we shall show that $\mathbb H_K$ is bounded from below and diagonalizable by a unitary Bogoliubov transformation. For the precise statement, we need to introduce the operator $H^{\rm Pek}_K$, $K\in (0,\infty]$, defined by
\begin{align}
H^{\rm Pek}_K \restriction  \textnormal{Ran} (\Pi_0) & \, =\,  0 ,  \quad  H^{\rm Pek}_K \restriction  \textnormal{Ran} (\Pi_1)  \, =\,  \Pi_1 - 4 T_K\label{eq: def of H:Pek:K:1}
\end{align}
where the $\Pi_i$ are given by \eqref{eq: def of Pi_i} (note that they do not depend on $K$) and $T_K$ is defined by the integral kernel
\begin{align}\label{eq: Hessian kernel with cutoff}
T_K(y,z)&  \, = \,  \lsp \psi \I  h^1_{K,\cdot}(y) R  h^1_{K,\cdot}(z)  \psi \rsp_\2.
\end{align}
Since $ H^{\rm Pek}  \Pi_1  = \Pi_1  - 4 T_\infty $ it follows that $H^{\rm Pek}_\infty = H^{\rm Pek}$. With $\Theta_K = (H^{\rm Pek}_K)^{1/4}$ we further consider $A_K,B_K:L^2(\mathbb R^3) \to L^2(\mathbb R^3)$ given by 
\begin{subequations}
\begin{align}
A_K \restriction \text{Ran}(\Pi_0)  & \, =\,  \Pi_0 && \hspace{-2cm} B_K \restriction \text{Ran}(\Pi_0)  \, =\,  0  \\
A_K \restriction \text{Ran}(\Pi_1)  & \, =\, \frac{\Theta_K^{-1} + \Theta_K }{2} && \hspace{-2cm} B_K \restriction \text{Ran}(\Pi_1) \, = \, \frac{\Theta_K^{-1} - \Theta_K }{2} . \label{eq: A and B on Pi0}
\end{align}
\end{subequations}
The next lemma, whose proof is given in Section \ref{Sec: Remaining Proofs}, collects useful properties of these operators.

\begin{lemma}\label{lem: regularized Hessian} For $K_0$ large enough and $K\in (K_0,\infty]$ let $H_K^{\rm Pek}:L^2(\mathbb R^3) \to L^2(\mathbb R^3)$ be defined by \eqref{eq: def of H:Pek:K:1}. There exist constants $\beta \in (0,1)$ and $C>0$ such that for all $K\in (K_0,\infty]$
\begin{itemize}
\item[(i)] $0 \le H^{\rm Pek}_K \le 1$ and $ (H^{\rm Pek}_K - \beta) \restriction  \textnormal{Ran} (\Pi_1)  \ge 0$
\item[(ii)] $(B_K)^2 \le C( 1- H_K^{\rm Pek})$
\item[(iii)] $ {\rm{Tr}}_{L^2}(1- H_K^{\rm Pek}) \le C  $
\item[(iv)] ${\rm{Tr}}_{L^2}((-i\nabla)( 1- H_K^{\rm Pek} )(-i\nabla))  \le C K $
\item[(v)] $|\sigma_{\rm disc} (H^{\rm Pek}_K) \cap [0,1)| = \infty$ with only accumulation point at one.
\end{itemize}
Let $(\lambda^{(n)}_K)_{n\in \mathbb N }\subset (0,1)$ be the non-zero eigenvalues of $H_K^{\rm Pek}$ below one  (numbered increasingly counting multiplicity) and denote the corresponding normalized eigenfunctions $\mathfrak u_K^{(n)} \in \textnormal{Ran}(\Pi_1)$. Then
\begin{itemize}
\item[(vi)] $| \lambda_K^{(n)} - \lambda^{(n)}_\infty |  \le C K^{-1/2}$ for all $n\in \mathbb N $, $K\ge K_0$
\item[(vii)] $\sno \nabla \mathfrak u_K^{(n)} \sno_\2 \le C \sqrt{K} (1-\lambda_K^{(n)})^{-1/2} $ for all $n\in \mathbb N $, $K\ge K_0$.
\end{itemize} 
\end{lemma}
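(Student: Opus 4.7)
The strategy is to treat $H_K^{\rm Pek}$ as a controlled UV perturbation of $H^{\rm Pek}$ on the subspace $\textnormal{Ran}(\Pi_1)$: both operators act as $0$ on $\textnormal{Ran}(\Pi_0)$, while on $\textnormal{Ran}(\Pi_1)$ they differ only through the replacement of the form factor $h_x$ by its momentum cutoff $h_{K,x}$ inside the quadratic operator $T_K$. Most of (i)--(vi) then follow by combining Lemma~\ref{lem: Hessian} with quantitative estimates in $K$; (ii) is a one-line functional calculus, and (vii) is extracted from the eigenvalue equation. The first preparatory step is an operator-norm bound of the form $\sno T_\infty - T_K \sno_{\rm op}\le C K^{-1/2}$: in Fourier the tail $h_x - h_{K,x}$ has symbol $\mathbf{1}_{|k|>K}|k|^{-1}$, and the boundedness of $M_\psi R M_\psi$ (with $\psi$ exponentially decaying and $R$ smoothing by two derivatives on $\psi^\perp$) converts this UV tail into the advertised rate.

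With this in hand, items (i), (v), (vi) are quick. The upper bound $H_K^{\rm Pek}\le 1$ reduces to $T_K\ge 0$, clear from the factorization $T_K = G_K^* M_\psi R M_\psi G_K$ with $R\ge 0$, where $G_K: L^2(\mathbb R^3)\to L^2(\mathbb R^3)$ is defined by $f\mapsto \int h^1_{K,\cdot}(y)f(y)\,\D y$; the gap $H_K^{\rm Pek}\restriction\textnormal{Ran}(\Pi_1)\ge \beta$ comes from the uncut gap $\tau$ (Lemma~\ref{lem: Hessian}(iii)) plus the operator-norm bound. For (v) the operator $T_K\restriction\textnormal{Ran}(\Pi_1)$ is compact, positive, and of infinite rank (compactness follows from $M_\psi R M_\psi$ being Hilbert--Schmidt, by exponential decay of $\psi$ and the Yukawa-type kernel of $R$), so $\Pi_1 - 4T_K$ has essential spectrum $\{1\}$ and infinitely many discrete eigenvalues in $[0,1)$ accumulating at $1$. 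For (vi) the min--max principle combined with the operator-norm estimate yields $|\lambda_K^{(n)} - \lambda_\infty^{(n)}|\le 4\sno T_K - T_\infty\sno_{\rm op}\le CK^{-1/2}$. Item (ii) is direct functional calculus on $\textnormal{Ran}(\Pi_1)$: the gap gives $\Theta_K^2\ge \beta^{1/2}$, and $(1-x^2)^2\le 1-x^4$ for $x\in[0,1]$ yields
\[
(B_K)^2 \,=\, \tfrac14 \Theta_K^{-2}(1-\Theta_K^2)^2 \,\le\, \tfrac{1}{4}\beta^{-1/2}(1-\Theta_K^4) \,=\, \tfrac{1}{4}\beta^{-1/2}(1-H_K^{\rm Pek}),
\]
which extends trivially to all of $L^2$ since $B_K$ vanishes on $\textnormal{Ran}(\Pi_0)$.

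Items (iii), (iv) and (vii) quantify trace and gradient behavior in $K$. For (iii), write $\textnormal{Tr}(1-H_K^{\rm Pek}) = 3 + 4\,\textnormal{Tr}(T_K)$ (with $3=\dim\textnormal{Ran}(\Pi_0)$) and use cyclicity $\textnormal{Tr}(T_K) = \textnormal{Tr}(M_\psi R M_\psi G_K G_K^*)$; the kernel of $G_K G_K^*$ is a cutoff, $\Pi_1$-projected Coulomb kernel whose $|x-x'|^{-1}$ singularity is absorbed by the $H^2$-smoothing of $M_\psi R M_\psi$ and whose polynomial tail is controlled by the exponential decay of $\psi$, producing a $K$-uniform bound. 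For (iv), inserting $\nabla\cdot\ldots\nabla$ adds a factor $|k|^2$ in Fourier in $G_K(-\Delta)G_K^*$; combined with the $|k|^{-2}$ inside $G_K$ and the support $|k|\le K$ this costs one factor of $K$ relative to (iii). For (vii) the eigenvalue equation gives $4T_K\mathfrak u_K^{(n)} = (1-\lambda_K^{(n)})\mathfrak u_K^{(n)}$, so testing $\langle -\Delta\mathfrak u_K^{(n)},\cdot\rangle$ on both sides and using an interpolation-type bound $\langle \phi, T_K(-\Delta)T_K \phi\rangle\le CK\langle \phi, T_K \phi\rangle$ (itself a consequence of the UV cutoff applied Fourier-wise) yields $\sno \nabla \mathfrak u_K^{(n)}\sno_\2^2 \le CK(1-\lambda_K^{(n)})^{-1}$, which is the claim.

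\textbf{Main obstacle.} The most delicate step is the $K$-uniform trace bound in (iii): unlike the $K$-dependent estimates (iv) and (vii), which follow from straightforward factor-counting in Fourier, (iii) requires a genuine cancellation between the singular Coulomb kernel in $G_K G_K^*$ and the $H^2$-smoothing provided by $M_\psi R M_\psi$. Tracking the constants in (iv) and (vii) carefully enough to match the claimed $K$- and $\lambda$-dependence, and verifying the interpolation-type bound for (vii), are the secondary sources of technical work.
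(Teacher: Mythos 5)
Your overall structure matches the paper's: items (i)--(iv) are outsourced (the paper cites [MMS22, Lemma~2.2]), (ii) is the same pure functional-calculus computation, and (vi) is the same min--max-plus-operator-norm argument. Two points are worth flagging.

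The one genuine gap is in (v). You correctly argue that compactness of $T_K\restriction\text{Ran}(\Pi_1)$ together with positivity and \emph{infinite rank} would give the conclusion, and you justify compactness — but you assert infinite rank with no argument. Infinite rank does not follow from compactness and positivity, and it is precisely the substantive content of (v). The paper supplies it by characterizing the kernel explicitly: using the factorization
\begin{align*}
\lsp v | (1-H_K^{\rm Pek}) v\rsp_\2 \,=\, 4\Big\| R^{1/2}\int \D y\, v(y)\, h_{K,\cdot}(y)\,\psi\Big\|_\2^2
\end{align*}
together with $\textnormal{Ker}(R^{1/2})=\textnormal{Span}\{\psi\}$ and $\psi>0$, one sees that $v\in\textnormal{Ker}(1-H_K^{\rm Pek})$ iff $\int v(y)h_{K,x}(y)\D y=0$ for all $x$, i.e.\ iff $\textnormal{Supp}(\widehat v)\subset[K,\infty)$. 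Since the orthogonal complement $\{v:\textnormal{Supp}(\widehat v)\subset[0,K)\}$ is infinite-dimensional, $T_K\restriction\Pi_1$ has infinite rank and the claim follows. You should include something of this sort.

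For (vii), your route via the eigenvalue equation $4T_K\mathfrak u_K^{(n)}=(1-\lambda_K^{(n)})\mathfrak u_K^{(n)}$ and an interpolation-type bound $T_K(-\Delta)T_K\le CK\,T_K$ does produce the stated estimate, but it reintroduces exactly the content of (iv) in disguised form (the two are equivalent after taking square roots). The paper's argument is shorter and uses (iv) directly: write $(1-\lambda_K^{(n)})^{1/2}\mathfrak u_K^{(n)}=(1-H_K^{\rm Pek})^{1/2}\mathfrak u_K^{(n)}$ and bound
\begin{align*}
(1-\lambda_K^{(n)})\sno\nabla\mathfrak u_K^{(n)}\sno_\2^2 \le \sno(-i\nabla)(1-H_K^{\rm Pek})^{1/2}\sno_\op^2 \le \sno(-i\nabla)(1-H_K^{\rm Pek})^{1/2}\sno_\HS^2\le CK,
\end{align*}
where the last step is exactly (iv). This avoids re-deriving an interpolation inequality you have already proved.
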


Next we introduce the transformation
\begin{align} \label{eq: def of U}
\mathbb U^{\textcolor{white}{.}}_K a(f) \mathbb U^\dagger_K \,  & = \,  a( A_K f ) + a^\dagger (  B_{K} \overline{ f} ) \quad \text{for all}\ f\in L^2(\mathbb R^3)
\end{align}
which, for $K$ large enough, defines a unitary  operator $\mathbb U_K :\mathcal F\to \mathcal F$. This follows from the Shale--Stinespring condition (see \cite{Ruijsenaars77,Shale62,JPS2007}) because $A_K$ is bounded and $B_K$ is Hilbert--Schmidt by Lemma \ref{lem: regularized Hessian}. Note that $\mathbb U_K$ acts as the identity in the first component in $\mathcal F = \mathcal F_0 \otimes \mathcal F_1$ and does not mix the two components.

\begin{lemma}\label{prop: diagonalization of HBog} 
For $K\in (K_0,\infty]$ with $K_0$ large enough and $\mathbb U_K$ defined by \eqref{eq: def of U}, we have
\begin{subequations}
\begin{align}\label{eq: diagonalization of HBog} 
\mathbb U^{\textcolor{white}.}_K \mathbb H^{\textcolor{white}.}_K \mathbb U^\dagger_K  \, & = \, \D \Gamma ((H_K^{\rm Pek})^{1/2})  +   \frac{1}{2}\textnormal{Tr}_{ L^2  } \big(  (H_K^{\rm Pek})^{1/2} - 1 \big) + \frac{3}{2}  \\
\label{rem: Bog ground state energy} 
\inf \sigma( \mathbb H_K) \, &  = \,   \frac{1}{2}\textnormal{Tr}_{ L^2  } \big(  (H_K^{\rm Pek})^{1/2} - 1 \big) + \frac{3}{2}.
\end{align}
\end{subequations}
\end{lemma}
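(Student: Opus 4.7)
The plan is the standard Bogoliubov diagonalization of the positive quadratic bosonic Hamiltonian $\mathbb{H}_K$. The only novel feature is that $\mathbb{H}_K$ acts trivially on $\mathcal{F}_0$, so the non-trivial diagonalization takes place in $\mathcal{F}_1$ on modes in $\textnormal{Ran}(\Pi_1)$; the three-dimensional subspace $\textnormal{Ran}(\Pi_0)$ is what produces the $+3/2$ in \eqref{rem: Bog ground state energy}.

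First I would verify that $\mathbb{U}_K$ is unitary via the Shale--Stinespring conditions. Since $A_K,B_K$ are real, self-adjoint, and commute (both being functions of $H_K^{\rm Pek}$), these reduce to $A_K^2 - B_K^2 = \id$ and $B_K \in \textnormal{HS}$. The first identity is trivial on $\textnormal{Ran}(\Pi_0)$ and on $\textnormal{Ran}(\Pi_1)$ follows from $(\Theta_K^{-1}+\Theta_K)^2 - (\Theta_K^{-1}-\Theta_K)^2 = 4\Pi_1$. The second follows from Lemma \ref{lem: regularized Hessian}(ii)--(iii), which give $\textnormal{Tr}(B_K^2) \le C\,\textnormal{Tr}(1-H_K^{\rm Pek}) < \infty$.

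Next I would rewrite $\mathbb{H}_K$ as a quadratic form in $a, a^\dagger$. Using the real symmetry of $T_K(y,z)$ and the canonical commutation relations,
\begin{align*}
\mathbb{H}_K = \D\Gamma(\Pi_1 - 2T_K) - \int T_K(y,z)\big(a^\dagger_y a^\dagger_z + a_y a_z\big) \D y \D z - \textnormal{Tr}(T_K),
\end{align*}
with one-body coefficients $h := \Pi_1 - 2T_K$ and $k := -2T_K$ satisfying $h-k = \Pi_1$ and $h+k = H_K^{\rm Pek}$ on $\textnormal{Ran}(\Pi_1)$, using $T_K = (\Pi_1 - H_K^{\rm Pek})/4$ on this subspace. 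Substituting $\mathbb{U}_K a(f)\mathbb{U}_K^\dagger = a(A_K f) + a^\dagger(B_K \bar f)$ and using the identities $A_K \pm B_K = \Theta_K^{\mp 1}$ on $\textnormal{Ran}(\Pi_1)$, a direct but lengthy algebraic computation shows that the off-diagonal terms proportional to $a^\dagger a^\dagger$ and $aa$ cancel, while the $a^\dagger a$ piece combines to $\D\Gamma(\Theta_K^2 \Pi_1) = \D\Gamma(\sqrt{H_K^{\rm Pek}})$. The remaining constant I would evaluate via the vacuum expectation $E_0 = \langle \Omega|\mathbb{U}_K\mathbb{H}_K\mathbb{U}_K^\dagger|\Omega\rangle$; by Wick's theorem for quasi-free states this equals $\textnormal{Tr}((\Pi_1 - 2T_K)B_K^2) - 2\,\textnormal{Tr}(T_K A_K B_K) - \textnormal{Tr}(T_K)$, and a short simplification using the explicit forms of $A_K, B_K, T_K$ in terms of $\Theta_K$ yields $E_0 = \frac{1}{2}\textnormal{Tr}_{\textnormal{Ran}(\Pi_1)}(\sqrt{H_K^{\rm Pek}} - \Pi_1)$, which equals $\frac{1}{2}\textnormal{Tr}_{L^2}(\sqrt{H_K^{\rm Pek}} - 1) + \frac{3}{2}$ since $\sqrt{H_K^{\rm Pek}} = 0$ on the three-dimensional $\textnormal{Ran}(\Pi_0)$. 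This proves \eqref{eq: diagonalization of HBog}; equation \eqref{rem: Bog ground state energy} follows at once because $\D\Gamma(\sqrt{H_K^{\rm Pek}})\ge 0$ with equality on the Fock vacuum.

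The main obstacle will be the algebraic bookkeeping in the conjugation step --- specifically, verifying that the off-diagonal terms cancel exactly and that the vacuum expectation simplifies to the claimed trace; finiteness of the trace throughout is guaranteed by Lemma \ref{lem: regularized Hessian}(iii).
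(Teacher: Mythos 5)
Your proposal is correct and follows the same standard Bogoliubov-diagonalization computation that the paper invokes (referring to [MMS22, Lemma 2.3] for the explicit algebra): writing $\tilde{\mathbb{H}}_K$ in normal form with one-body blocks $h\pm k$, checking $(A_K\pm B_K)(h\mp k)(A_K\pm B_K)=\Theta_K^2\Pi_1$ on $\textnormal{Ran}(\Pi_1)$, and then recovering the constant from the vacuum expectation via Wick. Your identification of the $+\tfrac32$ as coming from the three-dimensional kernel $\textnormal{Ran}(\Pi_0)$ and your Wick computation of $E_0=\tfrac12\textnormal{Tr}_{\textnormal{Ran}(\Pi_1)}(\sqrt{H_K^{\rm Pek}}-\Pi_1)$ both check out.
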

The proof is obtained by an explicit computation; for details we refer to \cite[Lemma 2.3]{MMS22}. From Lemma \ref{lem: regularized Hessian}, we have $\inf \sigma(\mathbb H_K) > -\infty$. Note that by \eqref{eq: Bogoliubov Hamiltonian maintext} the statement holds trivially also as an identity for operators on $\mathcal F_1$ if we replace $\mathbb H_K$ by $\tilde {\mathbb H}_K$ (and $\mathbb U_K$ and $H_K^{\rm Pek}$ by their restrictions to $\mathcal F_1$ and $\text{Ran}(\Pi_1)$).

From now on we shall always assume $K\ge K_0$ with $K_0$ large enough such that Lemmas \ref{lem: regularized Hessian} and \ref{prop: diagonalization of HBog} are applicable.

\subsection{Excitation spectrum of the Bogoliubov Hamiltonian\label{sec:Bog:exc:spectrum}}

To introduce the Bogoliubov spectrum we consider the operator $\tilde{ \mathbb H}_K : \mathcal F_1 \to \mathcal F_1$ for $K\in (K_0,\infty]$. We are interested in the excitation spectrum of $\tilde {\mathbb H}_K$ below $\inf \sigma( \tilde {\mathbb H}_K)+1$. This part of the spectrum is purely discrete since it is given by the free bosonic excitation spectrum of particles with energies determined by the non-zero eigenvalues of $H_K^{\rm Pek}$. More concretely, one can write the excitation spectrum below one as
\begin{align}
\sigma \big(\tilde {\mathbb H}_K - \inf \sigma (\tilde {\mathbb H}_K) \big)  \cap [0,1) = \bigcup_{n=0}^\infty \big\{ \Lambda_K^{(n)} \big\}
\end{align}
with the sequence of eigenvalues (counting multiplicities) 
\begin{align}
0= \Lambda_K^{(0)} < \Lambda_K^{(1)} \le \Lambda_K^{(2)} \le \ldots < 1, \quad \Lambda^{(n)}_K =  \sum_{j \in \mathfrak J_K^{(n)}} ( \lambda^{(j)}_K  )^{1/2} \label{eq:Lambda:eigenvalue}
\end{align}
where $\mathfrak J_K^{(n)}$ are suitable (finite) integer sets\footnote{For instance
\begin{align*}
\mathfrak J_K^{(0)}=\varnothing, \quad  \mathfrak J_K^{(1)}=\{1\} \quad \text{and }\quad \mathfrak J_K^{(2)} = \begin{cases}
\{2\}   \quad  & \text{if}\quad \lambda_K^{(2)} < 2 \lambda_K^{(1)}\\
\{1,1\}  & \text{if}  \quad \lambda_K^{(2)} > 2 \lambda_K^{(1)}
\end{cases}.
\end{align*}
} and $0< \lambda^{(1)}_K\le \lambda^{(2)}_K \le \ldots ... < 1$ the sequence of non-zero eigenvalues of $H_K^{\rm Pek}$ below one, cf. Lemma \ref{lem: regularized Hessian}. We shall write $\Lambda^{(n)} = \Lambda_\infty^{(n)}$ for $K=\infty$. The eigenvalues $(\Lambda_K^{(n)})_{n\in \mathbb N_0}$ only accumulate at one and they satisfy
\begin{align}\label{eq:Lambda:lambda:bound}
\Lambda^{(n)}_K\le (  \lambda^{(n)}_K  )^{1/2} \quad \forall n\in \mathbb N.
\end{align}
Generally $\mathfrak J_K^{(n)}$ is a subset of integers drawn from $\{ 1 , \ldots , n \}$ with repetitions allowed. Since $\lambda_K^{(1)} \ge \beta > 0$ (with $\beta $ the constant from Lemma \ref{lem: regularized Hessian}), the length of $\mathfrak J_K^{(n)}$ is bounded by
\begin{align}\label{eq:max:length}
|\mathfrak J_K^{(n)}| \le \mathfrak m\quad \text{with} \quad \mathfrak m : = \big\lceil \, 1/\sqrt \beta \, \big\rceil = \min \big\{ j \in \mathbb N : j \ge 1/\sqrt \beta \big\}.
\end{align}
We emphasize that $\mathfrak m $ does not depend on $K$ and $n$.

\subsection{Upper bounds for the min--max values and proof of Theorem \ref{thm:main:result:1}\label{sec:upper:bounds}}

We define the min-max values of $H_\alpha(P)$ by
\begin{align}\label{def:minmax:values}
\mu_n(H_\alpha(P))  = \inf_{\mathcal V_{n+1} \subset \mathcal F } \, \sup \Big\{ \lsp \Psi |  H_\alpha(P) \Psi \rsp_\Fock \, | \, \Psi \in \mathcal V_{n+1},\ \sno \Psi \sno_\Fock = 1 \Big\}, \quad n \in \mathbb N_0,
\end{align}
where the infimum is taken over all $(n+1)$-dimensional subspaces $\mathcal V_{n+1} \subset \mathcal  Q(H_\alpha(P))$ with $\mathcal Q(H_\alpha(P)) = \mathcal Q(P_f^2 + \mathbb N)$ the form domain of $ H_\alpha(P) $.

\begin{theorem} \label{theorem: main estimate} Let $M^{\rm LP} = \frac{2}{3} \sno \nabla \varphi \sno^2_\2$ with $\varphi$ defined in \eqref{eq: optimal phonon mode}, $e^{\rm Pek}$ the Pekar energy \eqref{eq: Pekar energy} and $H^{\rm Pek}$ defined by \eqref{eq: Hessian kernel}. Moreover let $(\Lambda^{(n)})_{n\in \mathbb N_0}$ be the sequence of eigenvalues introduced in \eqref{eq:Lambda:eigenvalue} for $K=\infty$. For all $\varepsilon>0$ and $n\in \mathbb N_0$ there exist constants $C (n,\varepsilon) > 0$ and $\alpha(n) \ge 1$ such that
\begin{align}\label{eq: main bound}
& \mu_n (H_\alpha(P))  \le   e^{\rm Pek}  + \frac{1}{\alpha^2} \Big( \textnormal{Tr}_{L^2}(\sqrt{H^{\rm Pek}} - 1 )  + \Lambda^{(n)}   \Big) + \frac{P^2}{2 \alpha^4 M^{\rm LP}} +   C(n,\varepsilon)  \alpha^{-\frac{5}{2}+\varepsilon}
\end{align}
for all $P\in \mathbb R^3$ and $\alpha \ge  \alpha(n)$.
\end{theorem}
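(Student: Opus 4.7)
The strategy is to exhibit, for each $n\in\mathbb N_0$ and $P\in\mathbb R^3$, an $(n+1)$-dimensional trial subspace $\mathcal V_{n+1}\subset\mathcal Q(H_\alpha(P))\subset\mathcal F$ on which the expectation of $H_\alpha(P)$ is uniformly bounded by the right-hand side of \eqref{eq: main bound}; the conclusion then follows from \eqref{def:minmax:values}. The trial vectors generalize the ground-state trial state of \cite{MMS22}: to the Pekar coherent state $W(\alpha\varphi)\Omega$ one first applies the Bogoliubov dressing $\mathbb U_K^\dagger$ from \eqref{eq: def of U} and then replaces $\Omega$ by the $j$-th excited eigenvector $\xi_j^{(K)}\in\mathcal F_1$ of $\tilde{\mathbb H}_K$ (which exists by Section \ref{sec:Bog:exc:spectrum} and has eigenvalue $\inf\sigma(\tilde{\mathbb H}_K)+\Lambda_K^{(j)}$). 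The cutoff $K$ will be chosen as a suitable power of $\alpha$ so that $|\Lambda_K^{(n)}-\Lambda^{(n)}|\le CK^{-1/2}$ from Lemma \ref{lem: regularized Hessian}(vi) is absorbed in the $\alpha^{-5/2+\varepsilon}$ error.

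To build states at total momentum $P$ I work on $\mathscr H$ and use Lee--Low--Pines. Let $f_\alpha\in C_c^\infty(\mathbb R^3)$ be a normalized wave packet whose width grows as some power $\alpha^\eta$, and set
\begin{align*}
\Phi_{j,P} \,=\, \int_{\mathbb R^3}\D y\,e^{iP\cdot y}\,f_\alpha(y)\,U_y\bigl[\,\psi\otimes W(\alpha\varphi)\,\mathbb U_K^\dagger\,\xi_j^{(K)}\,\bigr]\,\in\,\mathscr H,
\end{align*}
where $U_y=e^{iy\cdot(-i\nabla_x+P_f)}$ is the total translation. By translation covariance, $\Phi_{j,P}$ is a generalized eigenstate of $-i\nabla_x+P_f$ with eigenvalue $P$, so under $S$ it lifts to a vector $\tilde\Psi_{j,P}\in\mathcal F$ in the fiber above $P$, and the Rayleigh quotients coincide:
\begin{align*}
\frac{\lsp\Phi_{j,P}|H_\alpha|\Phi_{j,P}\rsp_{\h}}{\sno\Phi_{j,P}\sno^2_{\h}}\,=\,\frac{\lsp\tilde\Psi_{j,P}|H_\alpha(P)|\tilde\Psi_{j,P}\rsp_{\Fock}}{\sno\tilde\Psi_{j,P}\sno^2_{\Fock}}.
\end{align*}
Conjugating $H_\alpha$ by $W(\alpha\varphi)$ shifts away the coherent field and produces, to leading order in $\alpha^{-1}$, the Pekar energy $e^{\rm Pek}$ plus the Bogoliubov operator $\alpha^{-2}\mathbb H_K$. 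Lemma \ref{prop: diagonalization of HBog} then diagonalizes the latter and yields the $\alpha^{-2}$ contribution of \eqref{eq: main bound} upon evaluation on $\xi_j^{(K)}$. The fiber kinetic term $(P_f-P)^2$ combined with the translation-covariant convolution structure of $\Phi_{j,P}$ produces the quasi-particle dispersion $P^2/(2\alpha^4 M^{\rm LP})$ via the semi-classical effective-mass argument from \cite{MMS22,FeliciangeliRS21}, with the null directions $\partial_i\varphi$ of $H^{\rm Pek}$ (Lemma \ref{lem: Hessian}(ii)) responsible for $M^{\rm LP}$. Linear independence of the $\tilde\Psi_{j,P}$ follows from orthonormality of the $\xi_j^{(K)}$ after verifying that their Gram matrix is close to the identity.

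The main obstacle is controlling the remainder terms uniformly over the excited eigenvectors: cubic and quartic $a^{(\dagger)}$-monomials left over after shifting the field with $W(\alpha\varphi)$, the off-diagonal $2P\cdot P_f/\alpha^2$ term from expanding $(P_f-P)^2$, the wave-packet kinetic error $\sim\alpha^{-2-2\eta}$ from $f_\alpha$, and cross terms arising because $U_y$ translates both electron and phonon modes including those of $\xi_j^{(K)}$. Each must be $O(\alpha^{-5/2+\varepsilon})$. The analogous bounds in \cite{MMS22} were derived on the Bogoliubov vacuum $\mathbb U_K^\dagger\Omega$; here they must be extended to $\xi_j^{(K)}$, which by \eqref{eq:max:length} contain at most $\mathfrak m$ quasi-phonons, so moment estimates on $\mathbb N$, $P_f^2$, and $\mathbb U_K a^{(\dagger)}(h^1_{K,\cdot})\mathbb U_K^\dagger$ evaluated on $\xi_j^{(K)}$ acquire $n$-dependent prefactors, which is consistent with the $n$-dependence of $C(n,\varepsilon)$. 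The tension between making $K$ large to reduce the cutoff error $|\Lambda_K^{(n)}-\Lambda^{(n)}|\lesssim K^{-1/2}$ and keeping $K$ small to control $\sno\nabla\mathfrak u_K^{(j)}\sno_{\2}=O(\sqrt K)$ (Lemma \ref{lem: regularized Hessian}(vii)) and $\textnormal{Tr}((-i\nabla)(1-H_K^{\rm Pek})(-i\nabla))=O(K)$ (Lemma \ref{lem: regularized Hessian}(iv)) will dictate a choice $K=\alpha^\delta$ for some explicit $\delta\in(0,1)$, together with a matching choice of the wave-packet exponent $\eta$.
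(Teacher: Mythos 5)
Your overall strategy — excited Bogoliubov eigenvectors on top of a dressed Pekar state, with a momentum‑cutoff $K$ chosen as a power of $\alpha$ to balance errors — is the right idea and matches the paper's in spirit, but the trial‑state ansatz as written has three genuine gaps, and the reduction to the fiber Hamiltonian is not correct as stated.

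\textbf{First}, the claim that $\Phi_{j,P}=\int\D y\,e^{iPy}f_\alpha(y)\,U_y[\cdots]$ is ``a generalized eigenstate of $-i\nabla_x+P_f$ with eigenvalue $P$'' is false when $f_\alpha$ is compactly supported: one has $U_z\Phi_{j,P}=\int\D y\,e^{iP(y-z)}f_\alpha(y-z)U_y[\cdots]$, which equals $e^{-iPz}\Phi_{j,P}$ only if $f_\alpha$ is constant. With a nontrivial $f_\alpha$ the state is a wave packet whose fiber decomposition spreads over a range of momenta, so the Rayleigh quotient of $\Phi_{j,P}$ in $\mathscr H$ is a \emph{weighted average} of fiber Rayleigh quotients, not the Rayleigh quotient of a single fiber vector. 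To get an upper bound for $\mu_n(H_\alpha(P))$ you must exhibit an $(n+1)$‑dimensional trial subspace living in the fiber Fock space $\mathcal F$. The paper does exactly this via $\mathcal S_P:\mathcal F\to\mathcal F$ in \eqref{eq:SP:map}, where the integral $\int\D x\,e^{i(P_f-P)x}(\cdots)$ with \emph{constant weight} over all of $\mathbb R^3$ is the exact fiber projection; no $f_\alpha$ is needed and the min‑max principle applies directly once Lemma~\ref{lem:linear:independence} guarantees $\dim(\mathcal S_P\mathcal V_K^{(n+1)})=n+1$.

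\textbf{Second}, your trial state uses the Weyl operator $W(\alpha\varphi)$ with the static Pekar field $\varphi$. This cannot produce the effective‑mass term $P^2/(2\alpha^4 M^{\rm LP})$ at the stated accuracy. The paper uses $W(\alpha\varphi_P)$ with $\varphi_P=\varphi+\xi_P$ and $\xi_P=\tfrac{i}{\alpha^2 M^{\rm LP}}(P\cdot\nabla)\varphi$ as in \eqref{eq: def of varphi_P}; the imaginary part $\xi_P$ is the Landau--Pekar boost of the classical field and is precisely what generates the parabolic dispersion. Without it one would lose the $P^2$ contribution (or produce it with the wrong coefficient), which appears explicitly in the $A_{P,y}$‑ and $w_{P,y}$‑dependent terms in Lemma~\ref{lem: energy identity}.

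\textbf{Third}, the ansatz $\psi\otimes W(\alpha\varphi)\mathbb U_K^\dagger\gamma$ is missing the first‑order electron correction $-\tfrac1\alpha\mathscr G^1(x)=-\tfrac1\alpha u_\alpha(x)(R\phi(h^1_{K,\cdot})\psi)(x)$ of \eqref{eq:SP:map}. It is \emph{not} true that conjugating $H_\alpha$ by $W(\alpha\varphi)$ and evaluating on $\psi\otimes\mathbb U_K^\dagger\gamma$ yields $e^{\rm Pek}+\alpha^{-2}\mathbb H_K$: that evaluation gives $\mathbb N_1$ from the number operator, but the second piece of the Bogoliubov Hamiltonian, $-\langle\psi|\phi(h^1_{K,\cdot})R\phi(h^1_{K,\cdot})\psi\rangle$, only arises from the \emph{cross term} between $\mathscr G^0$ and $\mathscr G^1$ (this is $\mathcal G^\Gamma$ in Lemma~\ref{lem: energy identity} and Proposition~\ref{prop: bound for G}). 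Dropping $\mathscr G^1$ therefore gives the wrong $\alpha^{-2}$ coefficient, not merely a larger error term. Finally, a small point: the paper's optimization yields $K=\alpha$, not $K=\alpha^\delta$ with $\delta<1$; with $K=\alpha$ both $K^{-1/2}\alpha^{-2}$ and $K^{1/2}\alpha^{-3}$ become $\alpha^{-5/2}$, giving the stated $\alpha^{-5/2+\varepsilon}$.
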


\begin{remark}\label{rem:main:statement:1} For $n=0$, since $E_\alpha(P) = \mu_0(H_\alpha(P))$ and $\Lambda ^{(n)}=0$, this coincides with \cite[Theorem 2.1]{MMS22}. A compatible lower bound on $\mu_0(H_\alpha(P))$ was obtained in \cite{Brooks22b}. 
\end{remark}

\begin{remark} We expect that the min-max values $\mu_n(H_\alpha(P))$ are in fact asymptotically described, as $\alpha \to \infty$, by the r.h.s. in \eqref{eq: main bound}. In other words, that the excited energy bands are given by the ground state energy band $P\mapsto E_\alpha(P)$ shifted by the momentum-independent excitation energies $\Lambda^{(n)}$. The derivation of a compatible lower bound, however, seems out of reach with current methods.
\end{remark}

Using Theorem \ref{theorem: main estimate} in combination with Lemma \ref{lem:essential:spectrum} and the lower bound for $E_\alpha(0)$ obtained in \cite{Brooks22}, we can prove our main result.

\begin{proof}[Proof of Theorem \ref{thm:main:result:1}] By the min-max theorem the inequality $\mu_n(H_\alpha(P)) < \inf\sigma_{\rm ess}(H_\alpha(P))$ implies that $\mu_n(H_\alpha(P))$ corresponds to the $(n+1)$th discrete eigenvalue of $H_\alpha(P)$ when counted with multiplicity starting from the lowest eigenvalue.

The main result in \cite{Brooks22} is the following lower bound for the ground state energy of \eqref{eq: Froehlich Hamiltonian}
\begin{align}\label{eq:lower:bound:Brooks}
\inf\sigma (H_\alpha) \ge e^{\rm Pek} + \frac{1}{2 \alpha^2} \text{Tr}_{L^2}(\sqrt{H^{\rm Pek}} -1 )   - \frac{1}{\alpha^{2+ s }}
\end{align}
for all $0<s< \tfrac{1}{29}$ and $\alpha\ge \alpha(s)$ for some constant $\alpha(s)$. Since $E_\alpha(0) = \inf\sigma (H_\alpha)$, we can combine \eqref{eq: main bound} for $ \varepsilon = \tfrac{7}{15}$ and \eqref{eq:lower:bound:Brooks} for $s=\tfrac{1}{30}$ to
\begin{align}
& \mu_n (H_\alpha(P)) \, \le \, E_\alpha(0) + \frac{1}{\alpha^2} \Lambda^{(n)} + \frac{P^2}{2\alpha^4 M^{\rm LP}} + C(n) \frac{1}{\alpha^{2+1/30}}
\end{align}
for all $\alpha \ge \alpha(n)$. Invoking Lemma \ref{lem:essential:spectrum} and \eqref{eq:Lambda:lambda:bound} leads to
\begin{align}\label{eq:condition:essential}
& \mu_n (H_\alpha(P)) - \inf \sigma_{\rm ess}(H_\alpha(P)) \le \frac{1}{\alpha^2} \bigg[ (\lambda_\infty^{(n)})^{1/2}  - 1 + \frac{P^2}{2 \alpha^2 M^{\rm LP}}  +   C(n)   \frac{1}{\alpha^{1/30}} \bigg]
\end{align}
and thus by the min-max theorem
\begin{align}
& |  \sigma_{\rm disc}(H_\alpha(P)) |  \ge \sup \bigg\{n+1 \in \mathbb N \, : \, \underbrace{ \frac{P^2}{2 \alpha^2 M^{\rm LP}} +   C (n)  \frac{1}{\alpha^{1/30}}  < 1 -   (\lambda^{(n)}_\infty)^{1/2} }_{(\star)} \, \bigg\}.
\end{align}
Since $\lambda^{(n)}_\infty \xrightarrow{n\to \infty}1$ but $\lambda^{(n)}_\infty < 1 $ for all $n \in \mathbb N$ by Lemma \ref{lem: regularized Hessian}, the r.h.s. of ($\star $) is always larger than some small $\delta(n) > 0$. On the l.h.s. the second term is smaller than $\delta(n)/4$ for $\alpha \ge \alpha(n)$ with $\alpha(n)$ sufficiently large. For $ |P| \le g(\alpha) \sqrt{2 M^{\rm LP} } \alpha$ with $g(\alpha) \xrightarrow{\alpha\to \infty}0$, also the first term is smaller than $\delta(n)/4$ for $\alpha\ge \alpha(n)$ with $\alpha(n)$ large enough. This implies the statement of Theorem \ref{thm:main:result:1}.
\end{proof}

In the next section we prove Theorem \ref{theorem: main estimate}. To this end we first introduce a suitable family of trial subspaces and state the corresponding variational estimates in Proposition \ref{theorem: main estimate 2}. The proof of Theorem \ref{theorem: main estimate} is obtained as a corollary of Proposition \ref{theorem: main estimate 2}.

\subsection{Variational estimates and proof of Theorem \ref{theorem: main estimate}\label{Sec: trial state}} 

To estimate the min-max values, we need to introduce a suitable family of trial subspaces. We first define the linear  operator $\mathcal S_P : \mathcal F \to \mathcal F$, for $P\in \mathbb R^3$, given by
\begin{align}\label{eq:SP:map}
\Gamma \mapsto \mathcal S_P\Gamma =  \, \int \D x \,  e^{ i ( P_f - P ) x }\, e^{a^\dagger(\alpha \varphi_P) - a(\alpha \varphi_P)} \big( \mathscr G^{0}(x) - \tfrac{1}{\alpha} \mathscr G^{1}(x) \big)  \Gamma,
\end{align}
where
\begin{align}\label{eq: def of varphi_P}
\varphi_P \, =\,  \varphi + \xi_P  \quad \text{with} \quad \xi_P \,  =\, \frac{i}{\alpha^2 M^{\rm LP}} (P  \nabla ) \varphi, \quad  M^{\rm LP} = \frac{2}{3} \sno \nabla \varphi \sno^2_\2,
\end{align}
and
\begin{align}
\mathscr G^{0}(x) & \,  =\,  \psi (x) , \quad \mathscr G^{1}(x) = u_\alpha(x) (R \phi( h^1_{K,\cdot} ) \psi )(x)
\end{align}
with form factor $h_{K,\cdot}^1$ defined in \eqref{def: cut off coupling function}, and $u_\alpha :\mathbb R^3 \to [0,1]$ a radial function satisfying
\begin{align}\label{eq: properties of ualpha}
u_\alpha (x) \,  =\,
\begin{cases}
  1 & \forall \ |x| \le \alpha  \\
  0 & \forall\ |x| \ge  2\alpha
\end{cases} \quad \quad\text{and}\quad \quad \sno \nabla u_\alpha \sno_\su + \sno \Delta u_\alpha \sno_\su  \, \le \,  C \alpha^{-1}
\end{align}
for some constant $C>0$. For completeness, we recall that $\psi > 0$ and $\varphi > 0 $ are the unique rotation invariant minimizers of the Pekar functionals \eqref{eq: electronic pekar functional} and \eqref {eq: phonon pekar functional}, respectively. The operator $\mathcal S_P$ depends of course also on $\alpha$, $K$ and $u_\alpha$, but we omit this in our notation. 

\begin{remark} \label{remark: HS isomorphism}
Note that $\mathscr G^{i}(x)\Gamma$ are elements of $L^2(\mathbb R^3, \mathcal F)$ and
\begin{align}
(R \phi( h^1_{K,\cdot} ) \psi )(x) \,  =\, \iint \D z \D y \, R(x,y) h_{K,y}^1 (z)  \psi(y)\,  \big( a^\dagger_z + a_z \big)
\end{align}
defines an $x$-dependent Fock space operator. Via $L^2(\mathbb R^3, \mathcal F) \simeq \mathscr H$, we can view these states also as elements in $ \mathscr H$. In this case we shall use the notation
\begin{align}\label{eq: alternative definition of G}
\mathscr G^{0}_{\Gamma} \,  =\,  \psi \otimes  \Gamma , \quad \mathscr G_{\Gamma}^{1} \,  =\,  u_\alpha R \phi(h^1_{K,\cdot}) \psi \otimes \Gamma .
\end{align}
\end{remark}


To specify the choice of the state $\Gamma$, let us recall that the eigenfunctions of $H_K^{\rm Pek}$ corresponding to non-zero eigenvalues below one are denoted by $\mathfrak u_K^{(n)} \in \text{Ran}(\Pi_1)$, $n\in \mathbb N$, cf. Lemma \ref{lem: regularized Hessian}. We define
\begin{align}
\label{eq:excited:states:Gamma}
\Gamma^{(n)}_K & = \mathbb U^\dagger_K \gamma_K^{(n)}  \quad \text{with}\quad  \gamma_K^{(n)} = \prod_{j \in \mathfrak J_K^{(n)}} a^\dagger(\mathfrak u_K^{(j)}) \Omega, \quad  n\in \mathbb N_0
\end{align}
where $\Omega \in \mathcal F$ is the Fock space vacuum,  $\mathbb U_K$ the unitary operator \eqref{eq: def of U} and $(\mathfrak J_K^{(n)})_{n\in \mathbb N_0}$ the sequence of integer sets defined by \eqref{eq:Lambda:eigenvalue}. Since $\mathfrak J_K^{(0)} = \varnothing$, $\gamma_K^{(0)} = \Omega$. For $n\ge 1$, we have $|\mathfrak J_K^{(n)}|\le \mathfrak m$ and $\mathfrak u_K^{(n)} \in \text{Ran}(\Pi_1)$, and thus
\begin{align}\label{eq:Fockspace:m:1}
\gamma_K^{(n)} \in \mathcal F^{(\le \mathfrak m)}_1 \quad \forall n,K \quad \text{with} \quad \mathcal F_1^{(\le \mathfrak m)} : =  \mathcal F_1 \cap \mathcal F^{(\le \mathfrak m)} \subset \mathcal F,
\end{align}
where $\mathcal F_1 \subset \mathcal F$ was introduced in \eqref{eq: Fock space factorization} and $\mathcal F^{(\le \mathfrak m)}$ denotes the truncated Fock space
\begin{align}\label{eq:Fock:space:m:particles}
\mathcal F^{(\le \mathfrak m)} = \bigoplus_{j=0}^{\mathfrak m} \mathcal F^{(j)},\quad \mathcal F^{(j)} =   L^2(\mathbb R^3 )^{\otimes_{\rm{sym}} j } .
\end{align} 
Let us now consider the linear subspace
\begin{align}\label{eq:Bogo:subspace}
\mathcal V_K^{(n+1)}  = \textnormal{Span}\big\{ \Gamma^{(0)}_K,\ldots , \Gamma_K^{(n)} \big\}\subset \mathcal F_1 .
\end{align}
With $\langle \Gamma_K^{(i)} | \Gamma_K^{(j)} \rangle_\Fock = \delta_{ij}$ it follows from Lemma \ref{prop: diagonalization of HBog} that
\begin{align}\label{eq:sup:Bog:ev}
\sup \big\{ \lsp \Gamma | \mathbb H_K \Gamma \rsp_\Fock \, | \, \Gamma\in \mathcal  V_K^{(n+1)},\ \sno \Gamma\sno_\Fock = 1 \big\} \, =\, \inf \sigma(\mathbb H_K) + \Lambda_K^{(n)}.
\end{align}

The next proposition provides a variational bound on the energy for states of the form $\Psi_P = \mathcal S_P \Gamma$ with $\mathcal S_P$ defined in \eqref{eq:SP:map} and $\Gamma\in \mathcal V_K^{(n+1)} $. 


\begin{proposition}\label{theorem: main estimate 2} Let $M^{\rm LP} = \frac{2}{3} \sno \nabla \varphi \sno^2_\2$ with $\varphi$ defined in \eqref{eq: optimal phonon mode}, $e^{\rm Pek}$ the Pekar energy \eqref{eq: Pekar energy} and $H_K^{\rm Pek}$ defined by \eqref{eq: Hessian kernel}. For all $\varepsilon >0$ and $n\in \mathbb N_0$ there exist constants $C(n,\varepsilon)> 0$ and $\alpha(n)\ge 1$ such that
\begin{align}\label{eq: main estimate 2}
& \Bigg| \frac{ \lsp \mathcal S_P \Gamma | H_\alpha(P) \mathcal S_P \Gamma \rsp_{\Fock}}{ \lsp \mathcal S_P \Gamma | \mathcal S_P \Gamma \rsp_\Fock}    - \bigg( e^{\rm Pek} + \frac{1}{\alpha^2}\Big( \lsp \Gamma | \mathbb H_K \Gamma \rsp_\Fock  - \tfrac{3}{2}  \Big) + \frac{P^2}{2 \alpha^4 M^{\rm LP}} \bigg) \Bigg| \notag\\[1mm]
& \hspace{7cm} \le C(n,\varepsilon) \alpha^\varepsilon \big( K^{-1/2} \alpha^{-2} + K^{1/2} \alpha^{-3} \big)  
\end{align}
for all normalized $\Gamma \in \mathcal V_K^{(n+1)}$, $|P|  \le \sqrt{2 M^{\rm LP}} \alpha $, $ K_0 \le K \le \alpha $ and $\alpha \ge \alpha(n)$.
\end{proposition}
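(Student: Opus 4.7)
The plan is to compute the Rayleigh quotient in \eqref{eq: main estimate 2} explicitly, following the same strategy as in the proof of \cite[Theorem 2.1]{MMS22} but now allowing arbitrary $\Gamma \in \mathcal V_K^{(n+1)}$ instead of only the Bogoliubov vacuum $\Gamma_K^{(0)} = \mathbb U_K^\dagger \Omega$. My first step is to conjugate with the Weyl operator $W(\alpha \varphi_P) = e^{a^\dagger(\alpha\varphi_P) - a(\alpha\varphi_P)}$ and pull it through the translation $e^{i(P_f - P)x}$, using $e^{iP_f x} W(\alpha \varphi_P) e^{-iP_f x} = W(\alpha \varphi_P(\cdot - x))$. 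The overlap $\langle \mathcal S_P \Gamma | \mathcal S_P \Gamma' \rangle_\Fock$ then becomes a single integral over the relative translation $y$ with integrand proportional to $e^{-iPy}$ times the coherent-state overlap $\exp(-\tfrac{1}{2}\alpha^2 \| \varphi_P - \varphi_P(\cdot - y)\|_2^2)$, which decays on the scale $|y| \lesssim \alpha^{-1}$; the matrix elements thus localize to small $y$ and can be Taylor-expanded.

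The conjugated fiber Hamiltonian $W(\alpha\varphi_P)^* H_\alpha(P) W(\alpha\varphi_P)$ splits into (i) a classical part giving $e^{\rm Pek}$, (ii) linear terms in $a, a^\dagger$ of order $\alpha^{-1}$ coupling the electron to phonon fluctuations, and (iii) quadratic terms of order $\alpha^{-2}$. Using the rotation-invariance identity $\langle \partial_i \varphi | \partial_j \varphi \rangle_2 = \delta_{ij} M^{\rm LP}/2$ together with the specific choice $\xi_P = i(P\nabla)\varphi/(\alpha^2 M^{\rm LP})$, the second-order expansion of the Gaussian-type $y$-integral reproduces both the classical Pekar energy $e^{\rm Pek}$ and the quasi-particle kinetic contribution $P^2/(2\alpha^4 M^{\rm LP})$; this part of the computation is essentially identical to \cite{MMS22}.

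The purpose of the electronic correction $-\alpha^{-1}\mathscr G^1(x) = -\alpha^{-1} u_\alpha(x) (R\phi(h_{K,\cdot}^1)\psi)(x)$ is to cancel the linear terms in (ii): since $R = Q_\psi (h^{\rm Pek})^{-1} Q_\psi$ inverts the electronic Schr\"odinger operator on the $\psi^\perp$ subspace, the cross-contribution between $\mathscr G^0 = \psi$ and $\alpha^{-1}\mathscr G^1$ cancels the term $\alpha^{-1}\phi(h_K^1)\psi$ up to controllable errors from the cutoff $K$ and the spatial cutoff $u_\alpha$. The residual quadratic terms, restricted to the fluctuation Fock space $\mathcal F_1$, assemble into exactly $\alpha^{-2} \mathbb H_K$ acting on $\Gamma$, producing the term $\alpha^{-2}(\langle\Gamma | \mathbb H_K \Gamma \rangle_\Fock - \tfrac{3}{2})$ after accounting for the constant $3/2$ coming from the three zero modes $\partial_i\varphi$ on which $\mathbb H_K$ acts trivially.

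The main technical obstacle is controlling all error terms uniformly in $\Gamma \in \mathcal V_K^{(n+1)}$. The essential a priori input is the particle-number bound $\Gamma \in \mathcal F_1^{(\le \mathfrak m)}$ from \eqref{eq:Fockspace:m:1}, which yields constants depending only on $n$ (absorbed into $C(n,\varepsilon)$). The two sources of $K$-dependent errors are: (a) the mismatch between the cutoff form factor $h_{K,\cdot}^1$ used in $\mathscr G^1$ and $\mathbb H_K$ versus the full form factor $h_\cdot$, contributing $K^{-1/2}\alpha^{-2}$ via Lemma \ref{lem: regularized Hessian}(vi); and (b) Taylor remainders in $y$ involving $\|\nabla \mathfrak u_K^{(j)}\|_2 \le C\sqrt{K}(1-\lambda_K^{(j)})^{-1/2}$ from Lemma \ref{lem: regularized Hessian}(vii), contributing $K^{1/2}\alpha^{-3}$. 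The factor $\alpha^\varepsilon$ absorbs small polynomial losses from Sobolev-type estimates required to bound $L^p$-norms of products involving $\psi$, $R\phi(h_K^1)\psi$, and their translates.
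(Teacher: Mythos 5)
Your proposal follows essentially the same strategy as the paper's proof, which generalizes the $n=0$ computation of \cite[Theorem 2.1]{MMS22} to the $(n+1)$-dimensional trial subspaces $\mathcal V_K^{(n+1)}$: the paper likewise decomposes $\langle\mathcal S_P\Gamma|H_\alpha(P)\mathcal S_P\Gamma\rangle$ via Lemma~\ref{lem: energy identity}, localizes the $y$-integrals through the Gaussian lemma, cancels the linear field terms using $\mathscr G^1$, and assembles the quadratic part into $\mathbb H_K$, with uniformity in $\Gamma$ resting on $\Gamma\in\mathcal F_1^{(\le\mathfrak m)}$ and the momentum bound $\|P_f\Gamma\|_\Fock\le C(n)\sqrt K$ of Lemma~\ref{lem: bounds for P_f:new}. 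Two small misattributions in your sketch: in this proposition the $K^{-1/2}\alpha^{-2}$ error arises from the Lieb--Yamazaki commutator bound applied to $h_\cdot-h_{K,\cdot}$ (Corollary~\ref{cor: X h Y bounds}), not from Lemma~\ref{lem: regularized Hessian}(vi) (which enters only later when passing from $\mathbb H_K$ to $\mathbb H_\infty$ in Theorem~\ref{theorem: main estimate}); and the $-3/2$ comes from a residual \emph{linear}-in-$\phi$ contribution ($\mathcal E_2^\Gamma$), computed via the Pekar equation and the Gaussian moment $\int y^2 e^{-y^2}\,\D y=\tfrac32\pi^{3/2}$, rather than directly from the quadratic terms—although your intuition that it counterbalances the $+3/2$ zero-mode constant in $\inf\sigma(\mathbb H_K)$ is correct.
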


To use Proposition \ref{theorem: main estimate 2} to estimate the min-max values, we have to show that the subspace $\mathcal S_P \mathcal V_K^{(n+1)}$ has dimension $n+1$. That this holds for large enough $\alpha$ is the content of the following lemma, whose proof is given in Section \ref{Sec: Remaining Proofs}. 

\begin{lemma}\label{lem:linear:independence} For every $n\in \mathbb N$ there exists a constant $\alpha(n)\ge 1 $ such that  
\begin{align}
\textnormal{dim} \big( \mathcal S_P \mathcal V_K^{(n+1)} \big)  = n+1
\end{align}
for all $|P| \le \sqrt{2M^{\rm LP}} \alpha $, $K_0\le K\le \alpha$ and $\alpha \ge \alpha(n)$. 
\end{lemma}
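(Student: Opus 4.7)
Since $\mathbb U_K$ is unitary and the Fock states $\gamma_K^{(i)}$ are built from the orthonormal family $\{\mathfrak u_K^{(j)}\}$ with pairwise distinct occupation patterns, the $\Gamma_K^{(i)}=\mathbb U_K^\dagger \gamma_K^{(i)}$ are pairwise orthogonal and nonzero, so $\dim \mathcal V_K^{(n+1)}=n+1$. The lemma is thus equivalent to invertibility of the Gram matrix
\begin{equation*}
G_{ij}^{(\alpha,P,K)} := \bigl\langle \mathcal S_P \Gamma_K^{(i)} \,\bigm|\, \mathcal S_P \Gamma_K^{(j)} \bigr\rangle_{\mathcal F}, \qquad 0\le i,j\le n.
\end{equation*}
My plan is to compute $G^{(\alpha,P,K)}$ asymptotically as $\alpha\to\infty$ and show that, after a positive rescaling, it converges to a positive definite diagonal matrix.

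Write $W = e^{a^\dagger(\alpha\varphi_P)-a(\alpha\varphi_P)}$ and $\tilde{\mathscr G}(x) = \psi(x) - \alpha^{-1}\mathscr G^{1}(x)$. The identity $e^{iP_f x}We^{-iP_f x}=W^{(x)}$, where $W^{(x)}$ is the Weyl operator for the shifted mode $\alpha\varphi_P(\cdot-x)$, together with the Weyl composition rule, yields after passing to the relative coordinate $r=y-x$ and integrating out the center of mass
\begin{equation*}
G_{ij}^{(\alpha,P,K)} = \int_{\mathbb R^3} \D r\, e^{-iPr}\, \exp\!\Bigl(-\tfrac{\alpha^2}{2}\bigl\|\varphi_P - \varphi_P(\cdot-r)\bigr\|^2_{L^2}\Bigr)\,\bigl\langle \Gamma_K^{(i)} \,\bigm|\, \mathcal M_{r}\, \Gamma_K^{(j)}\bigr\rangle_{\mathcal F},
\end{equation*}
where $\mathcal M_r = \mathcal M_{r,\alpha,K,P}$ is a bounded Fock operator assembled from $W^\dagger W^{(r)}$, $e^{iP_f r}$, and the $\tilde{\mathscr G}$-factors, satisfying $\mathcal M_0 - \mathbf 1 = O(\alpha^{-1})$ and $\mathcal M_r - \mathcal M_0 = O(|r|)$ when tested against vectors in $\mathcal V_K^{(n+1)}$, with constants depending on $n$ but not on $P$ or $K\in[K_0,\alpha]$.

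Because $\|\varphi_P - \varphi_P(\cdot-r)\|^2_{L^2} = \|\nabla\varphi\|^2_{L^2}|r|^2 + O(|r|^4)$, the Gaussian prefactor has width $\sim \alpha^{-1}$ and the Laplace method localises the $r$-integration to $|r|\lesssim \alpha^{-1}\log\alpha$. On that scale $\mathcal M_r = \mathbf 1 + O(\alpha^{-1+o(1)})$ on the finite-dimensional subspace $\mathcal V_K^{(n+1)}\subset \mathcal F_1^{(\le \mathfrak m)}$ (cf.\ \eqref{eq:Fockspace:m:1}), and the orthogonality $\langle \Gamma_K^{(i)}|\Gamma_K^{(j)}\rangle_{\mathcal F}=\|\Gamma_K^{(i)}\|^2\delta_{ij}$ yields
\begin{equation*}
G_{ij}^{(\alpha,P,K)} = c_\alpha\, \|\Gamma_K^{(i)}\|^2\,\delta_{ij} + o(c_\alpha), \qquad c_\alpha \asymp \alpha^{-3},
\end{equation*}
uniformly in $|P|\le \sqrt{2M^{\rm LP}}\alpha$ and $K_0\le K\le \alpha$. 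Consequently $c_\alpha^{-1}G^{(\alpha,P,K)}$ converges to a strictly positive diagonal matrix as $\alpha \to \infty$, so that $G^{(\alpha,P,K)}$ is invertible for all $\alpha\ge \alpha(n)$.

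The main obstacle is the bound on $\mathcal M_r - \mathbf 1$ acting on the dressed vectors $\Gamma_K^{(i)} = \mathbb U_K^\dagger \gamma_K^{(i)}$. Its proof combines Weyl-algebra identities with number-weighted estimates controlled through Lemma \ref{lem: regularized Hessian}: parts (iv) and (vii) for the gradient and momentum control of $\mathscr G^1$ and of the $\mathfrak u_K^{(j)}$, together with (ii)--(iii) for the Hilbert--Schmidt and trace bounds on $B_K$ needed to estimate $(\mathbb N+1)^k\Gamma_K^{(i)}$ for $k\lesssim \mathfrak m$. These are precisely the ingredients entering the proof of Proposition \ref{theorem: main estimate 2}, hence Lemma \ref{lem:linear:independence} can be obtained as a byproduct of the same asymptotic analysis, applied to the sesquilinear form $(\Gamma,\Gamma')\mapsto \langle \mathcal S_P\Gamma|\mathcal S_P\Gamma'\rangle$ in place of the energy quadratic form.
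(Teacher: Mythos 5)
Your overall strategy is the right one and matches the paper's in spirit: to show the $n+1$ vectors $\mathcal S_P \Gamma_K^{(j)}$ are linearly independent, one shows their Gram matrix is close to a strictly positive multiple of the identity for large $\alpha$. However, the paper does not re-derive the Gram asymptotics from scratch at this point. It simply cites Proposition \ref{prop:overlap}, which already states exactly what you set out to compute,
\begin{equation*}
\Bigl| \lsp \mathcal S_P \Gamma_K^{(i)} \mid \mathcal S_P \Gamma_K^{(j)} \rsp_\Fock - \delta_{ij}\,\bigl(\tfrac{\pi}{\lambda\alpha^2}\bigr)^{3/2} \Bigr| \le C(n,\varepsilon)\,\alpha^{-\frac{7}{2}+\varepsilon},
\end{equation*}
and then closes the argument with a short contradiction (take the inner product of a hypothetical vanishing linear combination against the $\Gamma_K^{(\ell)}$ with maximal coefficient). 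Your plan effectively re-proves Proposition \ref{prop:overlap} by a Laplace-method computation of the Gram matrix; this would indeed yield the same conclusion, but the heavy lifting you defer --- the bound $\mathcal M_r - \mathbf 1 = O(\alpha^{-1+o(1)})$ on $\mathcal V_K^{(n+1)}$, uniformly in $P$ and $K$ --- is precisely the content of that proposition (it requires the decomposition into $\mathcal N_0,\mathcal N_{1a},\mathcal N_{1b},\mathcal N_2$, Lemmas \ref{lem:X:gamma:W:chi}, \ref{lem:error:terms:e-kappaN}, \ref{lem: bounds for P_f:new}, and the Gaussian localization lemmas), so nothing is saved. Your route also has a cosmetic inconsistency to fix: the Laplace localization you invoke is only valid after passing from the Weyl factor to the weight $n_{0,1}$ and invoking Lemma \ref{lem: Gaussian lemma}; on faces the exponent $\|\varphi_P - \varphi_P(\cdot-r)\|^2$ you write is not exactly the Gaussian and a careful comparison (and the $\mathcal F_0$/$\mathcal F_1$ split, since $\widetilde w^0_{P,y}$ contributes the quadratic growth while $\widetilde w^1_{P,y}$ must be handled separately) is needed. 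In short, the idea is correct and essentially the same as the paper's, but you should simply cite Proposition \ref{prop:overlap} rather than re-derive its content, and then the rest follows from two lines of linear algebra.
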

We can now combine Proposition \ref{theorem: main estimate 2} and Lemmas \ref{lem: regularized Hessian} and \ref{lem:linear:independence} to prove Theorem \ref{theorem: main estimate}.

\begin{proof}[Proof of Theorem \ref{theorem: main estimate}] For $|P|> \sqrt{2M^{\rm LP} } \alpha$, we use $\mu_n(H_\alpha(P)) \le \sigma_{\rm ess}(H_\alpha(P)) \le \mu_n(H_\alpha(0)) + \alpha^{-2}$ by definition of the min-max values and Lemma \ref{lem:essential:spectrum}. Hence it is sufficient to consider $|P|\le \sqrt{2 M^{\rm LP}} \alpha$. 

With the aid of Lemma \ref{lem:linear:independence} we can choose $\mathcal V_{n+1} = \mathcal S_P \mathcal V_K^{(n+1)}  $ in \eqref{def:minmax:values}. Applying Proposition \ref{theorem: main estimate 2} together with \eqref{eq:sup:Bog:ev}, we obtain
\begin{align}\label{eq: main estimate 3}
\mu_n (H_\alpha(P))  & \, \le \,  e^{\rm Pek} + \frac{1}{\alpha^2}\Big( \inf \sigma (\mathbb H_K) + \Lambda_K^{(n)} - \tfrac{3}{2} \Big) \notag\\
& \hspace{2.25cm} + \frac{P^2}{2 \alpha^4 M^{\rm LP}}  + C (n,\varepsilon ) \alpha^{\varepsilon}  \big( K^{-1/2}\alpha^{-2} + \sqrt K \alpha^{-3} \big)
\end{align}
for all $|P| \le \sqrt {2 M^{\rm LP} } \alpha $, $K_0 \le K\le \alpha $ and $\alpha \ge \alpha(n)$. Next we use $| \inf \sigma (\mathbb H_K ) - \inf \sigma (\mathbb H_\infty)  | \le C K^{-1/2}$ (see \cite[Section 3.9]{MMS22}) and $ |  \Lambda^{(n)}_K  -  \Lambda^{(n)}   | \le C K^{-1/2} $. The second inequality is a consequence of \eqref{eq:Lambda:eigenvalue} together with the uniform bound \eqref{eq:max:length} and $|(\lambda_K^{(n)})^{1/2} - (\lambda_\infty^{(n)})^{1/2} | \le C K^{-1/2} $. The latter is obtained from Lemma \ref{lem: regularized Hessian} (vi) and $\lambda_K^{(n)}\ge \beta >0$. The proof of Theorem \ref{theorem: main estimate} now follows by using \eqref{rem: Bog ground state energy} for $K=\infty$, and choosing $K=\alpha$ in the term quantifying the error.
\end{proof}

The remainder of the paper is dedicated to the proof of Proposition \ref{theorem: main estimate 2}. After reviewing some general relations, we start in Section \ref{sec:total:energy} with a suitable formula for the expectation value of $H_\alpha(P)$ in states of the form $\mathcal S_P \Gamma$. Sections \ref{sec:Gaussian:lemma} -- \ref{sec:technical:lemmas} collect further preparations for the proof of Proposition \ref{theorem: main estimate 2}. Some of the statements in Sections \ref{sec:Gaussian:lemma} and \ref{sec:further:preliminary} were already used in \cite{MMS22}; for the reader's convenience we provide the statements and refer to the proofs in \cite{MMS22}.  In Sections \ref{sec: norm} -- \ref{sec: energy K} we derive estimates for the norm of our trial states and for the different terms in the energy formula that was obtained in Section \ref{sec:total:energy}. A heuristic sketch of this part of the proof is given in Section \ref{Sec: proof guide}. Some of the terms can be estimated similarly as the corresponding ones in \cite{MMS22}; for these terms some details will be skipped. The main contributions, however, require substantial modifications and thus we present them in more detail. The results from Sections \ref{sec: norm} -- \ref{sec: energy K} are combined in Section \ref{Sec: concluding the proof} to complete the proof of Proposition \ref{theorem: main estimate 2}. All remaining proofs of auxiliary lemmas are postponed until Section \ref{Sec: Remaining Proofs}.

Throughout the remainder we will abbreviate constants by the letter $C$ (and $c$) and write $C(\tau)$ or $C_\tau$ whenever we want to specify that it depends on a parameter $\tau$. As usual, the value of a constant may change from one line to the next.

\section{Proof of Proposition \ref{theorem: main estimate 2}\label{sec:proof:main:proposition}}

We recall the definition of the field operators
\begin{align}
\phi(f) = a^\dagger(f) + a(f) , \quad \pi(f) = \phi(i f )
\end{align}
and the Weyl operator
\begin{align}\label{eq: BCH for Weyl}
W(f) \,  =\, e^{-i\pi(f) } \,  =\, e^{a^\dagger(f) - a(f)} \,  =\, e^{a^\dagger(f)} e^{-a(f)} e^{-\frac{1}{2} \sno f\sno^2_\2 }.
\end{align}
The Weyl operator is unitary and satisfies
\begin{align}\label{eq: Weyl identities}
W^\dagger(f) \,  =\,  W(-f) , \quad W(f) W(g) \,  = \,    W(f + g ) e^{ i \im \langle g | f \rangle_\2 }.
\end{align}
For later use we recall that the Weyl operator shifts the creation and annihilation operators by complex numbers,
\begin{align}
W(g)^\dagger a^\dagger (f) W(g) \,  =\, a^\dagger(f) + \langle g| f\rangle_\2 , \quad W(g)^\dagger a (f) W(g) \,  =\, a(f) + \overline{\langle g| f\rangle_\2},
\end{align}
and thus
\begin{subequations}
\begin{align}
W(g)^\dagger \phi(f) W(g) & \,  =\,  \phi(f) +  2 \re \lsp f | g\rsp_\2  \label{eq: W phi W},\\[0.5mm]
W(g)^\dagger \mathbb N W (g) & \,  =\, \mathbb N +  \phi(g) +  \sno g \sno^2_\2, \label{eq: W N W}\\[0.5mm]
W(g)^\dagger P_f W(g) & \,  =\, P_f - a^\dagger( i \nabla  g) - a( i \nabla  g) -  \lsp g | i \nabla   g \rsp_\2 .
\end{align}
\end{subequations}

\subsection{The total energy\label{sec:total:energy}}

The proof of Proposition \ref{theorem: main estimate 2} starts with a convenient formula for the expectation value of the fiber Hamiltonian $H_\alpha(P)$ in a state $\Psi_P = \mathcal S_P \Gamma$ for general $\Gamma \in \mathcal F$. The precise statement requires some more notation. 

We introduce the $y$-dependent $L^2(\mathbb R^3)$-function
\begin{align}\label{eq:def of w}
w_{P,y} \,  =\, (1-e^{-y\nabla})\varphi_P 
\end{align}
with $\varphi_P$ defined in \eqref{eq: def of varphi_P}, and the $y$-dependent Fock space operator
\begin{align}\label{eq: definition AP}
A_{P,y} \,  =\, i P_f y + i g_{P}(y)  , \quad g_{P}(y) \,  =\,  - \frac{2}{M^{\rm LP}} \int_0^1 \D s\, \langle \varphi | e^{-sy  \nabla} (y \nabla)^3 (P \nabla) \varphi \rangle_\2 .
\end{align}
Since $g_{P}(y)$ is real-valued, it satisfies $(A_{P,y})^\dagger = - A_{P,y}$. 

We further introduce the Hamiltonian (acting on $L^2(\mathbb R^3) \otimes \mathcal F$)
\begin{subequations}
\begin{align}
\widetilde H_{\alpha,P} \,  & = \, h^{\rm Pek} + \alpha^{-2} \mathbb N + \alpha^{-1} \phi(h_{x} + \varphi_P )
\end{align} 
\end{subequations}
with $h^{\rm Pek}$ defined in \eqref{eq: hPek definition}, and the shift operator (that will exclusively act on the electron coordinate $x$)
\begin{align}\label{eq:shift:operator}
T_y:L^2(\mathbb R^3) \to L^2(\mathbb R^3), \quad (T_y f)(x) = f(x+y) \quad \text{for all}\quad f\in L^2(\mathbb R^3).
\end{align}

\begin{lemma} \label{lem: energy identity} For every $P\in \mathbb R^3$ and $\Gamma \in \mathcal V_K^{(n+1)}$ we have
\begin{align}
\lsp \mathcal S_P \Gamma |  H_\alpha(P)\mathcal S_P \Gamma \rsp_\Fock & \,  =\, \Big( e^{\rm Pek} + \frac{P^2}{2 \alpha^4 M^{\rm LP}} \Big)\, \sno \mathcal S_P \Gamma  \sno^2_\Fock  + \mathcal E^\Gamma +  \mathcal G^\Gamma + \mathcal K^\Gamma
\end{align}
where
\begin{subequations}
\begin{align}
\mathcal E^\Gamma & \,  =\, \int \D y\,  \lsp \mathscr G_{\Gamma}^{0} | \widetilde H_{\alpha,P} T_y  e^{A_{P,y}} W(\alpha w_{P,y}) \mathscr G_{\Gamma}^{ 0} \rsp_\h \label{eq: E}\\
\mathcal G^\Gamma & \,  =\, - \frac{2}{\alpha }  \int \D y\, \re \lsp \mathscr G_{\Gamma}^{ 0} | \widetilde H_{\alpha,P} T_y e^{A_{P,y}} W(\alpha w_{P,y})\mathscr G_{\Gamma}^{ 1}  \rsp_\h \label{eq: G}\\
\mathcal K^\Gamma & \,  =\, \frac{1}{\alpha^2} \int \D y\,  \lsp \mathscr G_{\Gamma}^{1} | \widetilde H_{\alpha,P} T_y e^{A_{P,y}} W(\alpha w_{P,y}) \mathscr G_{\Gamma}^{1} \rsp_\h \label{eq: K}
\end{align}
\end{subequations}
with $\mathscr G_\Gamma^{i}$ defined in \eqref{eq: alternative definition of G}.
\end{lemma}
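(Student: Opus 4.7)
The plan is to reduce $\lsp\mathcal{S}_P\Gamma | H_\alpha(P)\mathcal{S}_P\Gamma\rsp_\Fock$ to a single translation integral over $y$ by exploiting the Lee--Low--Pines structure of $\mathcal{S}_P$. Setting $\Xi_\Gamma(x) := (\mathscr{G}^{0}(x) - \alpha^{-1}\mathscr{G}^{1}(x))\Gamma$ and $\Phi_P(x) := W(\alpha\varphi_P)\Xi_\Gamma(x)$, so that $\mathcal{S}_P\Gamma = \int\D x\, e^{i(P_f-P)x}\Phi_P(x)$, I would first verify the fiber identity
\begin{align*}
H_\alpha(P)\mathcal{S}_P\Gamma \, = \, \int \D x'\, e^{i(P_f-P)x'}(H_\alpha\Phi_P)(x'),
\end{align*}
where $H_\alpha = -\Delta_x + \alpha^{-2}\mathbb N + \alpha^{-1}\phi(h_x)$ is the full Hamiltonian on $\mathscr{H}$; this uses integration by parts for $-\Delta_x$, $[\mathbb N, P_f] = 0$, and $e^{-iP_f x}\phi(h_0)e^{iP_f x} = \phi(h_x)$. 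The change of variables $y = x' - x$ in the resulting double integral then yields
\begin{align*}
\lsp\mathcal{S}_P\Gamma | H_\alpha(P)\mathcal{S}_P\Gamma\rsp_\Fock \, = \, \iint \D x\, \D y\, e^{-iPy}\, \lsp \Phi_P(x) \,|\, e^{iP_f y}(H_\alpha\Phi_P)(x+y)\rsp_\Fock.
\end{align*}

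Next I would compute $H_\alpha\Phi_P$ by commuting $W(\alpha\varphi_P)$ through each term of $H_\alpha$ via \eqref{eq: W phi W} and \eqref{eq: W N W}: the Laplacian passes since $\varphi_P$ is $x$-independent, $\mathbb N$ generates $\alpha^{-1}\phi(\varphi_P) + \sno\varphi_P\sno_\2^2$, and $\phi(h_x)$ generates the polarization potential $V^\varphi(x) = 2\re\lsp h_x | \varphi \rsp_\2$ (the $\xi_P$ contribution vanishes since $\xi_P$ is purely imaginary and $h_x$ is real). Together with the $-\lambda^{\rm Pek}$ inside $h^{\rm Pek}$ this yields
\begin{align*}
(H_\alpha\Phi_P)(x) \, = \, W(\alpha\varphi_P)\bigl((\widetilde H_{\alpha,P}\Xi_\Gamma)(x) + (\lambda^{\rm Pek} + \sno\varphi_P\sno_\2^2)\Xi_\Gamma(x)\bigr).
\end{align*}
The identities $\lambda^{\rm Pek} + \sno\varphi\sno_\2^2 = e^{\rm Pek}$ and $\sno\xi_P\sno_\2^2 = P^2/(2\alpha^4 M^{\rm LP})$ (via $\sno\nabla\varphi\sno_\2^2 = 3M^{\rm LP}/2$) identify the scalar as $e^{\rm Pek} + P^2/(2\alpha^4 M^{\rm LP})$; reinstating the same double integration with the identity in place of $H_\alpha$ reproduces $\sno \mathcal{S}_P\Gamma\sno_\Fock^2$, so this scalar contributes exactly the advertised prefactor times $\sno\mathcal{S}_P\Gamma\sno_\Fock^2$.

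The remaining contribution requires processing $W^\dagger(\alpha\varphi_P)e^{iP_f y}W(\alpha\varphi_P)$: using $e^{iP_f y}W(f)e^{-iP_f y} = W(T_y f)$ together with the Weyl composition \eqref{eq: Weyl identities} it equals $W\bigl(\alpha(T_y\varphi_P - \varphi_P)\bigr)\, e^{-i\alpha^2\im\lsp T_y\varphi_P | \varphi_P\rsp_\2}\, e^{iP_f y}$. An appropriate relabeling of the $y$-integral converts $T_y\varphi_P - \varphi_P$ into $w_{P,y}$, the signs being absorbed by transferring the Weyl operator across the inner product; the electron-coordinate shift $\Xi_\Gamma(x\pm y) = (T_{\pm y}\Xi_\Gamma)(x)$ supplies the $T_y$ factor, and the composition phase combines with $e^{-iPy}$ to produce $e^{ig_P(y)}$. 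The key computation here is that Taylor-expanding $F(y) := \lsp(P\nabla)\varphi | e^{y\nabla}\varphi\rsp_\2$ shows all even-order terms vanish by rotation invariance (isotropic symmetric tensors of odd rank are zero); the linear part $(P\cdot y) M^{\rm LP}/2$ cancels $-iPy$ exactly; and Taylor's integral-remainder formula recasts the cubic-and-higher part as precisely the $g_P(y)$ of \eqref{eq: definition AP}. A final bilinear splitting $\Xi_\Gamma = \mathscr{G}^{0}_\Gamma - \alpha^{-1}\mathscr{G}^{1}_\Gamma$ partitions the remainder into its four cross-terms, producing $\mathcal{E}^\Gamma + \mathcal{G}^\Gamma + \mathcal{K}^\Gamma$ with the stated prefactors $1$, $-2\alpha^{-1}\re$, $\alpha^{-2}$. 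The main technical obstacle is this last step: tracking sign conventions through the $y$-relabeling so that exactly $w_{P,y}$ (not $-w_{P,-y}$) appears in the Weyl argument, and verifying that the cubic-and-higher Taylor remainder matches the explicit integral representation of $g_P(y)$ in \eqref{eq: definition AP}.
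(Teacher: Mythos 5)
Your proposal follows essentially the same route as the paper's own proof, which the authors describe as following verbatim the proof of \cite[Lemma 3.1]{MMS22}: exploit the Lee--Low--Pines fiber structure to collapse the double $x$-integral to a single $y$-translation integral, conjugate $H_\alpha$ through the coherent-state Weyl operator to produce $\widetilde H_{\alpha,P}$ plus the scalar $\lambda^{\rm Pek}+\|\varphi_P\|_\2^2 = e^{\rm Pek}+P^2/(2\alpha^4 M^{\rm LP})$, process $W^\dagger(\alpha\varphi_P)e^{iP_f y}W(\alpha\varphi_P)$ via the Weyl composition law, and split $\Xi_\Gamma = \mathscr{G}^0_\Gamma - \alpha^{-1}\mathscr{G}^1_\Gamma$ bilinearly into the four cross-terms giving $\mathcal E^\Gamma$, $\mathcal G^\Gamma$, $\mathcal K^\Gamma$. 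The Weyl-conjugation identity $W^\dagger(\alpha\varphi_P)H_\alpha W(\alpha\varphi_P) = \widetilde H_{\alpha,P} + \lambda^{\rm Pek} + \|\varphi_P\|_\2^2$ and the identification $\|\xi_P\|_\2^2 = P^2/(2\alpha^4 M^{\rm LP})$ are correct.

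The one place I would push back is the precision of the phase bookkeeping in your last step. You assert that Taylor-expanding $F(y)=\langle(P\nabla)\varphi|e^{y\nabla}\varphi\rangle_\2$, the linear part cancels $e^{-iPy}$ and the remainder reproduces $g_P(y)$ ``via Taylor's integral-remainder formula.'' But $g_P(y)$ as defined in \eqref{eq: definition AP} equals $-\frac{2}{M^{\rm LP}}\int_0^1\D s\,\langle\varphi|e^{-sy\nabla}(y\nabla)^3(P\nabla)\varphi\rangle_\2 = \frac{2}{M^{\rm LP}}\big(G''(1)-G''(0)\big)$ with $G(s):=\langle\varphi|e^{-sy\nabla}(P\nabla)\varphi\rangle_\2$, i.e.\ it is a boundary term of $G''$ rather than the second-order Taylor integral remainder $\int_0^1\frac{(1-s)^2}{2}G'''(s)\,\D s$ of $G(1)=F(y)$. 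These are different series in $y$, so the identification of the phase with $g_P(y)$ is not literally ``the cubic-and-higher Taylor remainder of $F$''; what is actually needed is a direct evaluation of the Weyl phase $\alpha^2\im\langle e^{y\nabla}\varphi_P|\varphi_P\rangle$ (in which both the shift and the $\xi_P$-contribution enter) together with the $e^{-iPy}$ factor, and a sign-careful manipulation to produce exactly $e^{iP_fy}\,e^{ig_P(y)}\,W(\alpha w_{P,y})$. Since you explicitly flag this last step as the main obstacle and indicate the required sign-tracking, I take this as an honest statement of an unverified step rather than a gap in the argument; but it is precisely here that a genuine proof would need to be completely explicit, because a sign error at this stage changes the form of $g_P(y)$ and hence invalidates the identity.
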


We omit the proof since it follows verbatim the proof of \cite[Lemma 3.1]{MMS22}.

\subsection{The Gaussian lemma\label{sec:Gaussian:lemma}}

With \eqref{eq: def of Pi_i}, \eqref{eq:def of w} and $\Theta_K =  (H^{\rm Pek}_K)^{1/4} $ we introduce
\begin{subequations}
\begin{align}
w_{P,y}^{0} & \, : = \,   \Pi_0 w_{P,y} \in \text{Ker}H^{\rm Pek} \label{eq: def of tilde w:0} \\[1mm]
w_{P,y}^{ 1 } & \, : =\,  \Pi_1 w_{P,y} \in ( \text{Ker}H^{\rm Pek})^\perp 
\\[1mm]
\widetilde w_{P,y}^1&  \, : =\,   \Theta_K \re( w_{P,y}^{1} )  + i \Theta_K^{-1} \im (w_{P,y}^{1} ) \label{eq: def of tilde w:1}\\[1mm]
\widetilde w_{P,y}&  \, : =\,   w_{P,y}^0 + \widetilde w_{P,y}^1   \label{eq: def of tilde w}.
\end{align}
\end{subequations}
\begin{remark}\label{rem: symmetries_of_w} For later use, let us note some symmetries of the above functions. As explained in \cite[Remark 3.1]{MMS22} $(y,z)\mapsto \re(w_{P,y}^{i})(z)$, $i=0,1$, are even as functions on $\mathbb{R}^6$, while $\im (w_{P,y}^i)(z)$, $i=0,1$, are odd on the same space. Moreover, it follows by rotation invariance of $\varphi$ that $ \re(w^0_{P,-y})(z) = -\re(w^0_{P,y})(z)$ for all $y,z\in \mathbb R^3$. (Note that $\re(w_{P,y}^i) = w_{0,y}^i$).
\end{remark} 

The following lemma is proved in \cite[Lemma 3.3]{MMS22}.

\begin{lemma} \label{lem: bound for w1 and w0} Let $\lambda = \frac{1}{6}\sno \nabla \varphi \sno^2_\2$ and $K_0>0$ large enough. For every $c>0$ there exists a constant $C>0$ such that
\begin{subequations}
\begin{align}
 \sno w_{P,y}^1 \sno^2_\2  +  \sno \widetilde w_{P,y}^1 \sno^2_\2  & \, \le\,  C \big(\alpha^{-2} y^2 + y^4  \big) \label{eq: bound for w perp a} \\[1mm]
\big| \sno w_{P,y}^0 \sno^2_\2 - 2 \lambda y^2 \big|  & \, \le \, C \big( \alpha^{-2} y^2  + y^4 + y^6  \big) \label{eq: bound for w 0 a} 
\end{align}
\end{subequations}
for all $y\in \mathbb R^3$, $|P|/\alpha \le c$, $K \in (K_0,\infty] $ and $\alpha>0$.
\end{lemma}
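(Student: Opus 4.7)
The plan is to Taylor-expand $(1 - e^{-y\nabla})$ and exploit two structural facts: that $(y\cdot\nabla)\varphi$ lies in $\textnormal{Ker}(H^{\rm Pek}) = \textnormal{Ran}(\Pi_0)$, and that $\xi_P$ is purely imaginary while $\varphi$ is real. I would begin by splitting
\begin{align*}
w_{P,y} \,=\, (1 - e^{-y\nabla})\varphi \,+\, (1 - e^{-y\nabla})\xi_P .
\end{align*}
Since $\xi_P = \tfrac{i}{\alpha^2 M^{\rm LP}}(P\cdot\nabla)\varphi$, the assumption $|P|/\alpha \le c$ gives $\sno \xi_P \sno_\2 + \sno \nabla \xi_P \sno_\2 \le C\alpha^{-1}$, and hence
\begin{align*}
\sno (1 - e^{-y\nabla})\xi_P \sno_\2 \,\le\, |y|\, \sno \nabla \xi_P \sno_\2 \,\le\, C\alpha^{-1}|y| ,
\end{align*}
so this piece contributes at most $C\alpha^{-2}y^2$ to any squared $L^2$-norm.

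For $w_{P,y}^1$, the key is that $\Pi_1$ annihilates $(y\cdot\nabla)\varphi$ by Lemma \ref{lem: Hessian}(ii). Taylor's theorem with integral remainder yields
\begin{align*}
(1 - e^{-y\nabla})\varphi - (y\cdot\nabla)\varphi \,=\, -\int_0^1 (1-s)\,(y\cdot\nabla)^2 e^{-sy\cdot\nabla}\varphi\,\D s ,
\end{align*}
whose $L^2$-norm is bounded by $C|y|^2$ from smoothness and decay of $\varphi$. Hence $\sno \Pi_1(1-e^{-y\nabla})\varphi \sno_\2^2 \le C y^4$, and combining with the $\xi_P$-estimate gives $\sno w_{P,y}^1 \sno_\2^2 \le C(\alpha^{-2}y^2 + y^4)$. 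For $\widetilde w_{P,y}^1$ the same bound follows by noting that Lemma \ref{lem: regularized Hessian}(i) implies $\beta^{1/4} \le \Theta_K \le 1$ on $\textnormal{Ran}(\Pi_1)$ uniformly in $K \ge K_0$; thus both $\Theta_K$ and $\Theta_K^{-1}$ are uniformly bounded on the subspace containing $\re(w_{P,y}^1)$ and $\im(w_{P,y}^1)$, and orthogonality of real and imaginary parts in complex $L^2$ yields $\sno \widetilde w_{P,y}^1 \sno_\2^2 \le \beta^{-1/2}\sno w_{P,y}^1 \sno_\2^2$.

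For $w_{P,y}^0 = \Pi_0 w_{P,y}$, the central observation is that the cross term between the $\varphi$- and $\xi_P$-contributions vanishes: the kernel of $\Pi_0$ is real (its range is spanned by the real functions $\partial_i\varphi$), so $\Pi_0(1-e^{-y\nabla})\varphi$ is real while $\Pi_0(1-e^{-y\nabla})\xi_P$ is purely imaginary, forcing $\re\lsp \cdot | \cdot \rsp_\2 = 0$. Therefore $\sno w_{P,y}^0 \sno_\2^2 = \sno \Pi_0(1-e^{-y\nabla})\varphi \sno_\2^2 + O(\alpha^{-2}y^2)$. For the first term I would expand in $\{\partial_i\varphi\}_{i=1,2,3}$: rotation invariance of $\varphi$ gives $\lsp \partial_i\varphi | \partial_j\varphi \rsp_\2 = \tfrac{1}{3}\sno\nabla\varphi\sno_\2^2\,\delta_{ij} = 2\lambda\,\delta_{ij}$, so the linear-in-$y$ term contributes exactly $\sno (y\cdot\nabla)\varphi \sno_\2^2 = 2\lambda y^2$. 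Since $\varphi$ is even and $\partial_i\varphi$ is odd, $\lsp \partial_i\varphi | (y\cdot\nabla)^k\varphi \rsp_\2 = 0$ for every even $k \ge 2$, so the first non-vanishing Taylor correction enters at order $y^3$ and contributes $O(y^4)$ to the squared norm; for large $|y|$ the crude bound $\sno \Pi_0(1-e^{-y\nabla})\varphi \sno_\2^2 \le 4 \sno\varphi\sno_\2^2$ is dominated by $Cy^6$, completing the estimate for $w_{P,y}^0$.

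The main obstacle is the careful bookkeeping of the Taylor remainders and verifying that all cross terms either vanish by parity/reality arguments or are absorbed into the stated error terms. Once these symmetry features are exploited, the remaining estimates reduce to routine $L^2$-bounds on shifts and derivatives of the smooth, rapidly decaying function $\varphi$.
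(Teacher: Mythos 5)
Your proof is essentially correct and identifies all the structural features that make the estimates work: that $(y\cdot\nabla)\varphi$ lies in $\textnormal{Ker}(H^{\rm Pek})$ so $\Pi_1$ kills the linear Taylor term; that $\varphi$ is radial, forcing the $(y\cdot\nabla)^2\varphi$ contribution to $\Pi_0 w_{P,y}$ to vanish by parity; that the $\varphi$-part of $w_{P,y}$ is real while the $\xi_P$-part is purely imaginary and $\Pi_0,\Pi_1,\Theta_K$ all have real kernels, which annihilates the otherwise troublesome cross term in $\|w^0_{P,y}\|^2$ (this cannot be absorbed by Cauchy--Schwarz, so the cancellation is genuinely needed); and that Lemma \ref{lem: regularized Hessian}(i) gives $\beta^{1/4}\le\Theta_K\le 1$ on $\textnormal{Ran}(\Pi_1)$ uniformly in $K\ge K_0$.

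Two small points deserve tightening. First, the step from the second-order remainder you wrote, $-\int_0^1(1-s)(y\cdot\nabla)^2 e^{-sy\nabla}\varphi\,\D s$, to a genuine $O(|y|^3)$ bound on $\lsp\partial_i\varphi|(1-e^{-y\nabla})\varphi\rsp_\2 - 2\lambda y_i$ requires an extra move: either expand Taylor to third order and invoke $\lsp\partial_i\varphi|(y\cdot\nabla)^2\varphi\rsp_\2=0$, or write $e^{-sy\nabla}=1+(e^{-sy\nabla}-1)$ inside the remainder integral and integrate the $(y\cdot\nabla)^2$ onto $\partial_i\varphi$. As written, the remainder you display only gives $O(|y|^2)$ directly, and plugging that into the squared norm produces an uncontrolled $O(|y|^3)$ term for small $y$ that is \emph{not} dominated by $\alpha^{-2}y^2+y^4+y^6$. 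Second, the argument uses $\nabla^2\varphi\in L^2$ and $\nabla^3\varphi\in L^2$; these do hold (e.g.\ from $\widehat\varphi(k)\propto|k|^{-1}\widehat{|\psi|^2}(k)$ with $\psi$ smooth and exponentially decaying), but merit a sentence since the paper only records $\varphi\in H^1$ explicitly. With these gaps filled, the proof is complete, and since the remainder estimates $\|(y\cdot\nabla)^k e^{-sy\nabla}\varphi\|_\2\le |y|^k\|\nabla^k\varphi\|_\2$ hold for all $y$, the separate ``large $|y|$ crude bound'' is actually not needed, though it is harmless.
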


For $0\le \delta <1$ and $\eta >0$ we introduce the weight function
\begin{align}\label{eq: definition of F}
n_{\delta , \eta  }(y) & \, =\,  \exp\bigg( - \frac{ \eta \alpha^{2(1-\delta)}  \sno \widetilde w_{P,y} \sno^2_\2}{2} \bigg) 
\end{align}
where we omit the dependence on $\alpha$, $P$ and $K$. By Remark \ref{rem: symmetries_of_w}, it follows that $n_{\delta , \eta }(y)$ is even as a function of $y$. In the limit of large $\alpha$ the dominant part of the weight function when integrated against suitably decaying functions comes from the term in the exponent that is quadratic in $y$. This is a crucial ingredient in our proofs and the content of the next lemma.

\begin{lemma}\label{lem: Gaussian lemma}
Let $\eta_0 > 0$, $c>0$, $\lambda = \frac{1}{6}\sno \nabla \varphi \sno^2_\2$ and $n_{\delta,\eta }$ defined in \eqref{eq: definition of F}. For every $n\in \mathbb N_0$ there exist constants $ d, C(n)>0$ such that
\begin{align}\label{eq: Gaussian estimate}
\int |y|^n  g(y)  \, \Big| n_{\delta,\eta }(y) - e^{ - \eta \lambda \alpha^{2(1-\delta)} y^2} \Big| \D y \, \le\,  C(n)  \frac{ \sno g \sno_\su  } {\alpha^{(4+n)(1-\delta) + \delta }}  + e^{- d   \alpha^{-2\delta+1} }\sno |\cdot |^n g \sno_\1
\end{align}
for all non-negative functions $g\in L^\infty(\mathbb R^3) \cap L^1(\mathbb R^3)$, $\eta\ge \eta_0$, $\delta \in [0,1) $, $|P|/\alpha \le c $ and $K,\alpha$ large enough.
\end{lemma}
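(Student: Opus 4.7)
The plan is to split the integration region at the scale $R := c\,\alpha^{-1/2}$, treating the inner region $\{|y|\le R\}$ via a Taylor-type expansion of the exponent defining $n_{\delta,\eta}$ and the outer region $\{|y|>R\}$ via exponential decay of both functions. The choice $R = c\,\alpha^{-1/2}$ is dictated by the requirement $\alpha^{2(1-\delta)}R^2 \sim \alpha^{1-2\delta}$, which matches the exponent appearing in the claimed exponential factor.

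Using the orthogonality $\mathrm{Ran}(\Pi_0)\perp\mathrm{Ran}(\Pi_1)$ together with Lemma \ref{lem: bound for w1 and w0}, I first write $\|\widetilde w_{P,y}\|_\2^2 = 2\lambda y^2 + r(y)$ with $|r(y)| \le C(\alpha^{-2}y^2 + y^4 + y^6)$. On the inner region, $\alpha^{2(1-\delta)}|r(y)|$ is uniformly bounded in $y,\alpha,\delta$, so the factorisation
\begin{align*}
n_{\delta,\eta}(y) - e^{-\eta\lambda\alpha^{2(1-\delta)}y^2} \,=\, e^{-\eta\lambda\alpha^{2(1-\delta)}y^2}\bigl(e^{-\eta\alpha^{2(1-\delta)}r(y)/2} - 1\bigr),
\end{align*}
combined with the elementary inequality $|e^x - 1|\le |x|\,e^{|x|}$, yields the pointwise estimate
\begin{align*}
\bigl|n_{\delta,\eta}(y) - e^{-\eta\lambda\alpha^{2(1-\delta)}y^2}\bigr| \,\le\, C\,\eta\,\alpha^{2(1-\delta)}\,|r(y)|\,e^{-c\,\eta\,\lambda\,\alpha^{2(1-\delta)}y^2}
\end{align*}
valid on $\{|y|\le R\}$. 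Pulling $\|g\|_\infty$ out, extending the domain to all of $\mathbb{R}^3$, and applying the standard Gaussian integrals $\int_{\mathbb R^3}|y|^m e^{-c\alpha^{2(1-\delta)}y^2}\,\D y \le C_m\,\alpha^{-(3+m)(1-\delta)}$ separately to each of the three pieces of $r(y)$ produces the first term on the right-hand side of \eqref{eq: Gaussian estimate}.

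For the outer region I want both $n_{\delta,\eta}(y)$ and $e^{-\eta\lambda\alpha^{2(1-\delta)}y^2}$ to be bounded by $e^{-d\alpha^{1-2\delta}}$ on $\{|y|>R\}$. In the annulus $R<|y|\le y_0$ with $y_0$ a small fixed constant, Lemma \ref{lem: bound for w1 and w0} yields $\|\widetilde w_{P,y}\|_\2^2 \ge \|w_{P,y}^0\|_\2^2 \ge \lambda y^2/2$, and since $\alpha^{2(1-\delta)}R^2 = c'\alpha^{1-2\delta}$ this gives the required smallness of $n_{\delta,\eta}(y)$. For $|y|\ge y_0$ I need a strictly positive uniform lower bound $\|\widetilde w_{P,y}\|_\2^2 \ge c'' > 0$; this is obtained by first establishing $\|w_{P,y}\|_\2^2 = 2\|\varphi_P\|_\2^2 - 2\,\mathrm{Re}\langle\varphi_P\,|\,e^{-y\nabla}\varphi_P\rangle_\2 \ge c'''$ uniformly on $|y|\ge y_0$ via a continuity/Riemann--Lebesgue argument (the bound $|P|/\alpha\le c$ keeps $\varphi_P$ in a compact family of $L^2$ perturbations of $\varphi$), and then transferring the bound to $\|\widetilde w_{P,y}\|_\2^2$ using the spectral bounds $\Theta_K \ge \beta^{1/4}\Pi_1$ and $\Theta_K^{-1} \ge \Pi_1$ on $\mathrm{Ran}(\Pi_1)$ from Lemma \ref{lem: regularized Hessian}(i). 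The resulting pointwise bound $|n_{\delta,\eta}(y) - e^{-\eta\lambda\alpha^{2(1-\delta)}y^2}|\le 2\,e^{-d\alpha^{1-2\delta}}$ integrated against $|y|^n g(y)$ gives the second term in \eqref{eq: Gaussian estimate}.

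The step I expect to require the most care is the uniform positive lower bound on $\|\widetilde w_{P,y}\|_\2^2$ for $|y|\ge y_0$, in particular its uniformity in $P\in\{|P|/\alpha\le c\}$ and in $K\ge K_0$: the $P$-uniformity relies on the continuity of $P\mapsto \varphi_P$ and on $\varphi_P$ lying in a compact subset of $L^2$, while the $K$-uniformity is built into Lemma \ref{lem: regularized Hessian}(i) (the constant $\beta$ is independent of $K$ for $K\ge K_0$).
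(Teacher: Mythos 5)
Your overall strategy --- splitting at $R\sim\alpha^{-1/2}$, writing $\sno\widetilde w_{P,y}\sno_\2^2 = 2\lambda y^2 + r(y)$ via Lemma~\ref{lem: bound for w1 and w0} and orthogonality, absorbing $e^{\eta\alpha^{2(1-\delta)}|r(y)|/2}$ into the Gaussian on the inner region, and establishing a uniform positive lower bound on $\sno\widetilde w_{P,y}\sno_\2^2$ on the outer region --- is the natural one, and the outer-region argument (including the $P$- and $K$-uniformity and the transfer from $\sno w_{P,y}\sno_\2$ to $\sno\widetilde w_{P,y}\sno_\2$ via $\Theta_K\ge\beta^{1/4}\Pi_1$, $\Theta_K^{-1}\ge\Pi_1$) is sound. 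The paper itself defers the proof to [MMS22], so I cannot compare line-by-line, but this is the expected route.

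There is, however, a genuine gap in the inner-region integration. After the pointwise bound $\bigl|n_{\delta,\eta}(y)-e^{-\eta\lambda\alpha^{2(1-\delta)}y^2}\bigr|\le C\eta\alpha^{2(1-\delta)}|r(y)|\,e^{-c\eta\lambda\alpha^{2(1-\delta)}y^2}$ on $\{|y|\le R\}$, you extend the integral to all of $\mathbb R^3$ and apply $\int_{\mathbb R^3}|y|^m e^{-c\eta\lambda\alpha^{2(1-\delta)}y^2}\,\D y\le C_m(\eta\alpha^{2(1-\delta)})^{-(3+m)/2}$ separately to the pieces $\alpha^{-2}y^2$, $y^4$, $y^6$ of $r(y)$. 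The $y^4$ piece then contributes
\begin{align*}
\eta\alpha^{2(1-\delta)}\bigl(\eta\alpha^{2(1-\delta)}\bigr)^{-(7+n)/2}\sim\alpha^{-(5+n)(1-\delta)},
\end{align*}
and $-(5+n)(1-\delta)$ exceeds the target $-(4+n)(1-\delta)-\delta$ whenever $\delta>1/2$ (the difference is $2\delta-1$); the $y^6$ piece fails analogously for $\delta>3/4$. Since the lemma is asserted for all $\delta\in[0,1)$, the bound you obtain is strictly weaker than claimed in that regime. The repair is to retain the cutoff $|y|\le R$ and exploit it in $r(y)$: on that set $y^4\le R^2y^2=c'^2\alpha^{-1}y^2$ and $y^6\le R^4y^2\le c'^4\alpha^{-2}y^2$, so that $|r(y)|\le C\alpha^{-1}y^2$ uniformly on $\{|y|\le R\}$. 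Then a single Gaussian moment suffices, and
\begin{align*}
\eta\alpha^{2(1-\delta)}\cdot\alpha^{-1}\bigl(\eta\alpha^{2(1-\delta)}\bigr)^{-(5+n)/2}\sim\alpha^{-1-(3+n)(1-\delta)}=\alpha^{-(4+n)(1-\delta)-\delta},
\end{align*}
which is exactly the claimed rate (note that the same pointwise bound $|r(y)|\le C\alpha^{-1}y^2$ on $\{|y|\le R\}$ is already what you implicitly need to justify absorbing $e^{\eta\alpha^{2(1-\delta)}|r(y)|/2}$ into the Gaussian with an $\eta$-independent constant). With this modification the proof goes through for the full range $\delta\in[0,1)$.
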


For the proof see \cite[Lemma 3.3]{MMS22}. As a direct consequence that will be useful to estimate error terms, we find

\begin{corollary}\label{cor: Gaussian for errors} Given the same assumptions as in Lemma \ref{lem: Gaussian lemma}, for every $n \in \mathbb N_0$ there exist constants $d, C(n) >0$ such that
\begin{align}
 \int  |y|^n g (y) n_{\delta,\eta } (y) \D y \, \le\, C(n) \frac{ \sno g \sno_\su }{  \alpha^{(3+n)(1-\delta)} } + e^{- d \alpha^{-2\delta+1} } \sno |\cdot|^n g \sno_\1
\end{align}
for all non-negative functions $g\in L^\infty(\mathbb R^3) \cap L^1(\mathbb R^3)$, $\eta \ge \eta_0 $, $\delta \in [0,1) $, $|P|/\alpha \le c$ and $K,\alpha$ large enough.
\end{corollary}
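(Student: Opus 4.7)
The statement is presented as a direct consequence of Lemma \ref{lem: Gaussian lemma}, so my plan is to apply the triangle inequality and reduce matters to a pure Gaussian integral. Concretely, writing $n_{\delta,\eta}(y) = \bigl(n_{\delta,\eta}(y) - e^{-\eta\lambda\alpha^{2(1-\delta)}y^2}\bigr) + e^{-\eta\lambda\alpha^{2(1-\delta)}y^2}$ and using the non-negativity of $g$ and $|y|^n$, I would bound
\begin{align*}
\int |y|^n g(y)\, n_{\delta,\eta}(y)\, \D y \,\le\, \int |y|^n g(y)\, \bigl| n_{\delta,\eta}(y) - e^{-\eta\lambda\alpha^{2(1-\delta)}y^2}\bigr|\, \D y \,+\, \sno g \sno_\su \int |y|^n e^{-\eta\lambda\alpha^{2(1-\delta)}y^2}\, \D y.
\end{align*}

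The first summand is handled verbatim by Lemma \ref{lem: Gaussian lemma}, which yields the upper bound $C(n)\sno g\sno_\su\, \alpha^{-(4+n)(1-\delta)-\delta} + e^{-d\alpha^{-2\delta+1}} \sno |\cdot|^n g \sno_\1$. For the second summand, I would perform the change of variables $z = \alpha^{1-\delta} y$, with Jacobian $\alpha^{-3(1-\delta)}$, which transforms the integral into $\alpha^{-(3+n)(1-\delta)} \int |z|^n e^{-\eta\lambda z^2}\, \D z$. Since $\eta \ge \eta_0 > 0$ and $\lambda > 0$ is a fixed universal constant, the remaining integral is finite and uniformly bounded in $\eta$ and $\delta$.

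Finally, I would compare exponents: the identity $(4+n)(1-\delta)+\delta = (3+n)(1-\delta)+1$ shows that the polynomial error inherited from Lemma \ref{lem: Gaussian lemma} carries an extra factor $\alpha^{-1}$ compared with the Gaussian contribution, so it can be absorbed into the latter by enlarging $C(n)$. Combining the two estimates delivers the claimed inequality. There is no genuine obstacle; the only point to verify is uniformity of the Gaussian constant in $\eta\ge\eta_0$ and $\delta\in[0,1)$, which is immediate from the positive lower bound on $\eta\lambda$.
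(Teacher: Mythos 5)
Your proof is correct and takes essentially the same route as the paper's: compute the Gaussian integral via the scaling $z = \alpha^{1-\delta}y$, giving $C(n)\alpha^{-(3+n)(1-\delta)}$, then invoke Lemma \ref{lem: Gaussian lemma} for the difference, noting the exponent identity $(4+n)(1-\delta)+\delta = (3+n)(1-\delta)+1$ so that the correction is of lower order. The only point you flag (uniformity in $\eta\ge\eta_0$) is indeed immediate since $\int |z|^n e^{-\eta\lambda z^2}\D z$ is decreasing in $\eta$.
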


\begin{proof}[Proof of Corollary \ref{cor: Gaussian for errors}] Since
\begin{align}
\int \D y\, |y|^n  e^{ - \eta \lambda \alpha^{2(1-\delta)} y^2} \, =\,  ( \eta \lambda \alpha^{2(1-\delta)} )^{- \frac{3+n}{2}} \int \D y\, |y|^n e^{-y^2} \, =\,  C(n) \alpha^{-(3+n)(1-\delta)},
\end{align}
the statement follows immediately from Lemma \ref{lem: Gaussian lemma}.
\end{proof}

\subsection{Further preliminaries\label{sec:further:preliminary}}

In this section we summarize further helpful results. The proofs of Lemmas \ref{lemma: props_peks}--\ref{lem: bounds for the number operator}, Corollary \ref{cor: X h Y bounds} and Lemma \ref{lem: exp N bound} can be found in \cite[Section 3]{MMS22}.

\subsubsection{Estimates involving the Pekar minimizers}

\begin{lemma}\label{lemma: props_peks}
Let $\psi>0$ be the unique rotation invariant minimizer of the Pekar functional \eqref{eq: electronic pekar functional}, and let
\begin{align}\label{eq: definition of H}
  H(x)\, :=\, \lsp \psi|T_x\psi\rsp_\2 \, =\, (\psi\ast \psi)(x).
\end{align} 
We have that $\psi$, $|\nabla \psi|$ and $H$ are $L^p(\mathbb{R}^3,(1+|x|^n)\D x)$ functions for all $1\leq p \leq \infty $ and all $n\geq 0$. Moreover, there exists a constant $C>0$ such that for all $x$ we have 
\begin{align}\label{eq: quadratic_bd_on_H}
  |H(x)-1|\, \leq\, Cx^2.
\end{align}
\end{lemma}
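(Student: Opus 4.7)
The plan is to first establish pointwise exponential decay of $\psi$ and $\nabla\psi$, from which the weighted $L^p$ bounds follow immediately for $\psi$, $|\nabla\psi|$, and (by a convolution-splitting argument) for $H$. The quadratic bound on $H(x)-1$ will then follow from a short second-order Taylor expansion around the origin, exploiting the fact that $H$ is radial.

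For the exponential decay I would use that, by \cite{Lieb1977}, $\psi$ is the unique positive, rotation-invariant, smooth minimizer of \eqref{eq: electronic pekar functional}, and equivalently the (non-degenerate) ground state of $-\Delta + V^\varphi$ with eigenvalue $\lambda^{\rm Pek} = e^{\rm Pek} - \sno \varphi \sno_\2^2$. Since $e^{\rm Pek} < 0$ we have $\lambda^{\rm Pek} < 0$, while the Coulomb-type potential $V^\varphi(x) = -\int |\psi(y)|^2/(4\pi|x-y|)\, \D y$ tends to zero as $|x|\to\infty$. Hence $V^\varphi(x) - \lambda^{\rm Pek} \ge c > 0$ outside a sufficiently large ball, and a standard Agmon argument yields $\psi(x) \le C e^{-c|x|}$ for some $c>0$. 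Elliptic regularity applied to $(-\Delta + V^\varphi - \lambda^{\rm Pek})\psi = 0$ (with $V^\varphi$ bounded, since $|\psi|^2$ is in $L^p$ for $p$ slightly above $3/2$) then transfers the decay to $\nabla\psi$. This immediately produces the weighted $L^p$ bounds for $\psi$ and $|\nabla\psi|$ with any polynomial weight. For $H = \psi \ast \psi$ one splits the convolution integral over $\{|y| \le |x|/2\}$ and $\{|y| > |x|/2\}$, and in each region the factor farther from the origin supplies exponential decay, yielding $|H(x)| \le C e^{-c'|x|}$ and hence the same type of bound.

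For the quadratic estimate, observe that $H(0) = \sno \psi \sno_\2^2 = 1$, and that $H$ is radial because $\psi$ is; in particular $\nabla H(0) = 0$ (alternatively, $\nabla H(0) = \int \psi \nabla \psi\, \D y = \tfrac12 \nabla \sno \psi\sno_\2^2 = 0$ by integration by parts, justified by the decay of $\psi$). A second-order Taylor expansion then gives $H(x) - 1 = \int_0^1 (1-s)\, x \cdot (\mathrm{Hess}\, H)(sx)\, x\, \D s$, so it suffices to bound the Hessian uniformly. An integration by parts in $y$ shows $\partial_i \partial_j H(x) = -\int \partial_i \psi(y)\, \partial_j \psi(y+x)\, \D y$, which by Cauchy--Schwarz is bounded by $\sno \nabla \psi \sno_\2^2$ uniformly in $x$. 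This yields $|H(x)-1| \le C x^2$.

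The only step that requires any genuine work is the exponential decay of $\psi$ and $\nabla\psi$, but this is a textbook application of Agmon's method given that $\lambda^{\rm Pek} < 0$ and $V^\varphi$ vanishes at infinity; everything else---the weighted integrability of exponentially decaying functions, the convolution splitting, and the second-order Taylor argument---is routine.
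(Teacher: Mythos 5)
Your proof is correct, and it reaches every claim in the statement by a genuinely different route from the paper's. For the exponential decay of $\psi$, the paper simply cites \cite{Lieb1977} (and the reference given there for the precise asymptotics), whereas you rederive it via an Agmon estimate using $\lambda^{\rm Pek}<0$ and $V^{\varphi}\to 0$ at infinity; this is a valid and more self-contained alternative (modulo a harmless factor-of-two slip in your formula for $V^{\varphi}$). The most interesting divergence concerns $|\nabla\psi|$: you upgrade the Agmon decay to $\nabla\psi$ via elliptic regularity and then conclude, while the paper avoids pointwise control of $\nabla\psi$ altogether, instead exploiting that $\psi$ is radial and monotone decreasing to rewrite $\int |x|^n|\nabla\psi|\,\D x = (n+2)\int |\psi(x)||x|^{n-1}\,\D x$ and split the latter over a ball and its complement. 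For $H$, you obtain pointwise exponential decay via the standard convolution split over $\{|y|\le |x|/2\}$ and its complement, whereas the paper is content to bound $\|H\|_\infty$ and $\int |x|^n H$ directly from the elementary inequality $|x+y|^n\le 2^{n-1}(|x|^n+|y|^n)$; your conclusion is slightly stronger but more than what is needed. Finally, for \eqref{eq: quadratic_bd_on_H}, you use a second-order Taylor expansion with $H(0)=1$, $\nabla H(0)=0$, and the integration-by-parts identity $\partial_i\partial_j H(x) = -\int \partial_i\psi(y)\,\partial_j\psi(y+x)\,\D y$ bounded by $\|\nabla\psi\|_\2^2$; the paper instead passes to Fourier variables, writing $H(x)=\int |\widehat{\psi}(k)|^2\cos(kx)\,\D k$ and combining $H\le 1$ with $\cos(kx)\ge 1-\tfrac{1}{2}(kx)^2$ and $\nabla\psi\in L^2$. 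Both routes give the same bound; yours is more elementary (no Fourier analysis), the paper's is slightly slicker and sidesteps justifying the Taylor remainder.
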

The next lemma contains bounds for the potential $V^\varphi$ and the resolvent $R$ introduced in \eqref{eq: def of effective potential}, \eqref{eq: optimal phonon mode} and \eqref{eq: def of resolvent}.

\begin{lemma}\label{lem: bound for R} There is a constant $C>0$ such that
\begin{align}\label{eq: bound potential well}
(V^\varphi)^2 \, \le\, C (1-\Delta), \quad  \pm V^\varphi \, \le\, \frac{1}{2} (-\Delta) + C \quad \text{and} \quad \sno \nabla R^{1/2} \sno_{\op} \le C .
\end{align}
\end{lemma}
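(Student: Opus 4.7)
The plan is to first prove that $V^\varphi$ is a bounded multiplication operator, which makes the first two inequalities immediate, and then to handle the third via the decomposition $-\Delta = h^{\rm Pek} - V^\varphi + \lambda^{\rm Pek}$ together with the spectral gap of $h^{\rm Pek}$ above zero. For the boundedness of $V^\varphi$, I would combine \eqref{eq: def of effective potential} and \eqref{eq: optimal phonon mode} and note that the convolution of the two $|x|^{-2}$-kernels appearing there is (up to a positive multiplicative constant) $|x|^{-1}$, so that $V^\varphi$ is — up to a constant factor — the Newtonian potential of the smooth charge density $|\psi|^2$. Since $|\psi|^2 \in L^1\cap L^\infty(\mathbb R^3)$ by Lemma \ref{lemma: props_peks}, and since $|x|^{-1}$ splits into a locally $L^1$ piece $\chi_{\{|x|\le 1\}}|x|^{-1}$ and a piece in $L^\infty(\{|x|>1\})$, two applications of Young's convolution inequality yield $\sno V^\varphi\sno_\su<\infty$. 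As a multiplication operator $V^\varphi$ is thus dominated by a constant, so both $(V^\varphi)^2 \le \sno V^\varphi\sno_\su^2\,\id \le C(1-\Delta)$ and $\pm V^\varphi \le \sno V^\varphi\sno_\su\,\id \le \tfrac12(-\Delta)+C$ follow trivially.

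For the resolvent estimate I would use $\sno \nabla R^{1/2}\sno_{\rm op}^2 = \sno R^{1/2}(-\Delta)R^{1/2}\sno_{\rm op}$ and substitute $-\Delta = h^{\rm Pek} - V^\varphi + \lambda^{\rm Pek}$. Since $h^{\rm Pek}\ge 0$ has a non-degenerate ground state $\psi$ at energy zero and a strictly positive spectral gap $\tau>0$ above it (a standard fact for Schr\"odinger operators with a smooth, decaying, bounded potential), one has $R^{1/2} h^{\rm Pek} R^{1/2} = \QQ \le \id$ and $\sno R\sno_{\rm op}\le 1/\tau$. Combined with the bound from the previous step this gives
\[
R^{1/2}(-\Delta)R^{1/2} \;=\; \QQ \;-\; R^{1/2} V^\varphi R^{1/2} \;+\; \lambda^{\rm Pek}\, R \;\le\; \tau^{-1}\bigl(\tau + \sno V^\varphi\sno_\su + |\lambda^{\rm Pek}|\bigr)\, \id,
\]
which is the desired operator bound on $\nabla R^{1/2}$.

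The \emph{main obstacle} — if one can call it that — is the boundedness of $V^\varphi$, which only requires the integrability and decay properties of the Pekar minimizer $\psi$ already recorded in Lemma \ref{lemma: props_peks}, plus an elementary splitting of the Coulomb kernel. Everything else is routine spectral-calculus manipulation, and I do not anticipate any serious technical difficulty in executing the plan.
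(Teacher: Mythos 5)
Your proof is correct, but it takes a genuinely more elementary route than the paper. For the first two inequalities the paper invokes a Hardy-type form bound from \cite{LeopoldRSS2019}, whereas you observe that by composing the two $|x-y|^{-2}$ kernels in \eqref{eq: def of effective potential} and \eqref{eq: optimal phonon mode} (Riesz composition: $|\cdot|^{-2}*|\cdot|^{-2} = c|\cdot|^{-1}$ in $\mathbb{R}^3$) one gets $V^\varphi$ proportional to the Newtonian potential of $|\psi|^2$, which is in $L^\infty$ since $|\psi|^2\in L^1\cap L^\infty$ by Lemma \ref{lemma: props_peks}. This exploits the extra regularity of the specific $\varphi$ and trivializes the first two bounds; the reference the paper cites establishes the form bound for more general field configurations $v\in L^2$, where $V^v$ need not be bounded, so that route is more robust but unnecessarily heavy here. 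For the resolvent bound, your decomposition $R^{1/2}(-\Delta)R^{1/2}=\QQ - R^{1/2}V^\varphi R^{1/2}+\lambda^{\rm Pek}R$ is the same as the paper's, but you then finish with $\sno R\sno_{\op}\le 1/\tau$ and $\sno V^\varphi\sno_\su<\infty$, whereas the paper instead applies the second inequality $\pm V^\varphi\le \tfrac12(-\Delta)+C$ and absorbs the resulting $\tfrac12 R^{1/2}(-\Delta)R^{1/2}$ back into the left-hand side. Both absorption and your direct bound work; yours is marginally simpler since you already have $V^\varphi$ bounded. All the ingredients you invoke (the spectral gap $\tau>0$ of $h^{\rm Pek}$, $R^{1/2}h^{\rm Pek}R^{1/2}=\QQ$, the identity $\sno\nabla R^{1/2}\sno_{\op}^2 = \sno R^{1/2}(-\Delta)R^{1/2}\sno_{\op}$) are standard and valid, so I see no gap.
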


\subsubsection{The commutator method}

In our proof we are faced with bounding field operators like $\phi(h_x)$. The standard estimates for creation and annihilation operators,
\begin{align}\label{eq: standard estimates for a and a*}
\sno a(f) \Psi \sno_\h \, \le\, \sno f \sno_\2  \sno \mathbb N^{1/2} \Psi \sno_\h, \ \sno a^\dagger(f) \Psi \sno_\h \, \le\, \sno f \sno_\2 \sno (\mathbb N+1)^{1/2} \Psi \sno_\h, \ \Psi \in  \mathscr H,
\end{align}
are not sufficient since $h_{0}(y)$ is not square-integrable. For this purpose we shall use the following lemma, which is a version of the commutator method by Lieb and Yamazaki \cite{Lieb1958}.

\begin{lemma}\label{lem: LY CM} 
Let $h_{K,\cdot}$ for $K\in (1,\infty]$ as defined in \eqref{def: cut off coupling function}, let $A$ denote a bounded operator in $L^2(\mathbb R^3)$ \textnormal{(}acting on the field variable\textnormal{)} and $a^\bullet \in \{ a , a^\dagger \}$. Further let $X,Y$ be bounded symmetric operators in $L^2(\mathbb R^3)$ \textnormal{(}acting on the electron variable\textnormal{)} that satisfy $D_0 := \sno X \sno_{\op} \sno Y \sno_{\op} + \sno \nabla X \sno_{\op}\sno Y \sno_{\op} + \sno X \sno_{\op} \sno \nabla Y \sno_{\op}< \infty $. There exists a constant $C>0$ such that
\begin{subequations}
\begin{align}
\sno X a^\bullet(A h_{K,\cdot + y }) Y \Psi \sno_\h & \, \le\, C D_0 \sno (\mathbb N+1)^{1/2} \Psi \sno_\h \\ \sno X a^\bullet ( A h_{\Lambda ,\cdot+y} - A h_{K,\cdot+y} )  Y \Psi \sno_\h & \, \le\, \frac{C D_0}{\sqrt {K}} \sno (\mathbb N+1)^{1/2} \Psi \sno_\h
\end{align}
\end{subequations}
for all $\Psi\in \mathscr H$, $y\in \mathbb R^3$ and $1 < K < \Lambda \le \infty$.
\end{lemma}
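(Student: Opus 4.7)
The plan is to combine the standard bound $\sno a^\bullet(f)\Psi\sno_\h\le\sno f\sno_\2\sno(\mathbb N+1)^{1/2}\Psi\sno_\h$ with a Lieb--Yamazaki-type commutator identity, so that the $K$-dependent behaviour $\sno h_{K,\cdot}\sno_\2\sim\sqrt K$ is replaced by derivative-side terms that get absorbed through the commutators $[\partial_{x_j},X]$ and $[\partial_{x_j},Y]$.

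First I would split $h_{K,x}=h^<_x+h^>_{K,x}$ at the Fourier cutoff $|k|=1$. The low-momentum piece satisfies $\sno h^<_\cdot\sno_\2\le C$ uniformly, so the direct estimate from \eqref{eq: standard estimates for a and a*} gives $\sno Xa^\bullet(Ah^<_{\cdot+y})Y\Psi\sno_\h\le C\sno X\sno_\op\sno A\sno_\op\sno Y\sno_\op\sno(\mathbb N+1)^{1/2}\Psi\sno_\h$, contributing the $\sno X\sno_\op\sno Y\sno_\op$ part of $CD_0$. For the high-momentum piece I would introduce the auxiliary functions
\[ g^j_{K,x}(z)\,=\,\frac{1}{(2\pi)^3}\int_{1<|k|\le K}\frac{-i k_j}{|k|^3}e^{ik(x-z)}\D k,\qquad j=1,2,3, \]
which satisfy $\sum_j\partial_{x_j}g^j_{K,x}=h^>_{K,x}$ and, uniformly in $K$ and $x$, $\sno g^j_{K,\cdot}\sno_\2^2\le C\int_{|k|>1}|k|^{-4}\D k<\infty$. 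This yields the operator identity $a^\bullet(Ah^>_{K,\cdot+y})=\sum_j[\partial_{x_j},a^\bullet(Ag^j_{K,\cdot+y})]$ on $\mathscr H$.

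Sandwiching the identity with $X$ and $Y$ and applying the Leibniz rule for commutators I would rewrite
\[ X\bigl[\partial_{x_j},a^\bullet(Ag^j)\bigr]Y\,=\,\bigl[\partial_{x_j},Xa^\bullet(Ag^j)Y\bigr]-[\partial_{x_j},X]a^\bullet(Ag^j)Y-Xa^\bullet(Ag^j)[\partial_{x_j},Y]. \]
The last two summands, combined with $\sno[\partial_{x_j},X]\sno_\op\le\sno\nabla X\sno_\op$ (and the analogue for $Y$), $\sno Ag^j_{K,\cdot}\sno_\2\le C\sno A\sno_\op$ and \eqref{eq: standard estimates for a and a*}, are controlled by $C(\sno\nabla X\sno_\op\sno Y\sno_\op+\sno X\sno_\op\sno\nabla Y\sno_\op)\sno A\sno_\op\sno(\mathbb N+1)^{1/2}\Psi\sno_\h$. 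The remaining outer-derivative term $\sum_j[\partial_{x_j},Xa^\bullet(Ag^j)Y]\Psi$ is then reorganised by commuting $\partial_{x_j}$ through $X$ and $Y$ via $X\partial_{x_j}=\partial_{x_j}X-[\partial_{x_j},X]$ and $\partial_{x_j}Y=Y\partial_{x_j}+[\partial_{x_j},Y]$, so that after cancellations every surviving derivative lands inside a bounded commutator against $X$ or $Y$ rather than on $\Psi$.

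For the second bound I would run exactly the same argument with $h_{K,\cdot+y}$ replaced by $h_{\Lambda,\cdot+y}-h_{K,\cdot+y}$. The corresponding auxiliary functions $g^j$ are supported in the shell $\{K<|k|\le\Lambda\}$ and therefore satisfy $\sno g^j\sno_\2^2\le C\int_{|k|>K}|k|^{-4}\D k\le C/K$; inserting this improved bound in place of $\sno g^j_{K,\cdot}\sno_\2\le C$ throughout the previous step produces the advertised $K^{-1/2}$ factor. The main technical difficulty is the outer-derivative reorganisation in the preceding paragraph: a naive estimate would push $\partial_{x_j}$ onto $\Psi$ and require the electron kinetic norm $\sno\nabla\Psi\sno_\h$, which is not controlled by the right-hand side; the Lieb--Yamazaki structure is precisely what allows each uncontrollable derivative to be matched against $[\partial_{x_j},X]$ or $[\partial_{x_j},Y]$, which is why $D_0$ demands boundedness of both $\nabla X$ and $\nabla Y$.
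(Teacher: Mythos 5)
Your overall plan — split $h_{K,\cdot}$ into a bounded low-momentum part and a high-momentum part, represent the latter via the Lieb--Yamazaki identity $a^\bullet(Ah^>_{K,\cdot+y})=\sum_j[\partial_{x_j},a^\bullet(Ag^j_{K,\cdot+y})]$, and exploit $\sno g^j\sno_\2^2\lesssim\int_{|k|>K}|k|^{-4}\D k\lesssim K^{-1}$ for the difference bound — is exactly the paper's. But the bookkeeping of the commutator estimate has a genuine gap. Your Leibniz decomposition
\begin{align*}
X\bigl[\partial_{j},a^\bullet(Ag^j)\bigr]Y\,=\,\bigl[\partial_{j},Xa^\bullet(Ag^j)Y\bigr]-[\partial_{j},X]a^\bullet(Ag^j)Y-Xa^\bullet(Ag^j)[\partial_{j},Y]
\end{align*}
is a correct identity, and the last two terms are indeed controlled by $D_0$. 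However, the term $[\partial_j,Xa^\bullet(Ag^j)Y]\Psi=\partial_j Xa^\bullet(Ag^j)Y\Psi-Xa^\bullet(Ag^j)Y\partial_j\Psi$ contains the uncontrolled factor $\partial_j\Psi$, and the ``reorganisation'' you propose to deal with it — commuting $\partial_j$ back through $X$, $a^\bullet$, $Y$ — is circular: if you carry it out to the end you recover precisely $X[\partial_j,a^\bullet]Y+[\partial_j,X]a^\bullet Y+Xa^\bullet[\partial_j,Y]$, i.e.\ the right-hand side of the Leibniz rule you started from, so nothing cancels and $\partial_j\Psi$ is never removed.

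The fix is to \emph{not} pass through $[\partial_j,Xa^\bullet Y]$ at all. Expand the inner commutator directly and sandwich:
\begin{align*}
X\bigl[\partial_{j},a^\bullet(Ag^j)\bigr]Y\,=\,(X\partial_j)\,a^\bullet(Ag^j)\,Y \;-\; X\,a^\bullet(Ag^j)\,(\partial_j Y).
\end{align*}
Here $\partial_j Y$ is bounded by hypothesis, and $X\partial_j=-(\partial_j X)^\dagger$ is bounded with $\sno X\partial_j\sno_\op=\sno\partial_j X\sno_\op$ because $X$ is symmetric; neither factor ever lets a derivative fall on $\Psi$. After that, the usual estimate \eqref{eq: standard estimates for a and a*} together with $\sno Ag^j\sno_\2\le\sno A\sno_\op\sno g^j\sno_\2$, and the fact that $\partial_jY$ commutes with $\mathbb N$, produces
\begin{align*}
\sno X\bigl[\partial_{j},a^\bullet(Ag^j)\bigr]Y\Psi\sno_\h\le C\sno A\sno_\op\sno g^j\sno_\2\bigl(\sno\nabla X\sno_\op\sno Y\sno_\op+\sno X\sno_\op\sno\nabla Y\sno_\op\bigr)\sno(\mathbb N+1)^{1/2}\Psi\sno_\h,
\end{align*}
which is the claimed bound with the $D_0$-dependence and the $K^{-1/2}$ gain from $\sno g^j\sno_\2$ when $g^j$ lives in the shell $K<|k|\le\Lambda$. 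This is what the paper does; apart from that one step, your argument is sound.
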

\begin{remark} \label{eq: remark Ah} Note that $A h_{K,\cdot+y}= T_y (Ah_{K,\cdot}) $ and in case that $A$ has an integral kernel,
\begin{align}
(A h_{K,x})(z) = \int \D u\, A(z,u)h_{K,x}(u).
\end{align}  
\end{remark}
A simple but useful corollary is given by
\begin{corollary}\label{cor: X h Y bounds}
Under the same conditions as in Lemma \ref{lem: LY CM}, with the additional assumption that $Y$ is a rank-one operator, there exists a constant $C>0$ such that
\begin{subequations}
\begin{align}
\int \D z\, \sno X ( A h_{K,\cdot+y})(z) Y  \sno^2_{\op} & \, \le \, C D_0^2 \label{eq: X h Y bound}\\
\int \D z\, \sno X  \big( (A h_{K,\cdot+y})(z) -  (A h_{\Lambda,\cdot + y })(z) \big) Y  \sno_{\op}^2 & \,  \le \, \frac{C D_0^2}{K}  \label{eq: X h Y difference bound}
\end{align}
for all $y\in \mathbb R^3$ and $1 < K < \Lambda \le \infty$.
\end{subequations}
\end{corollary}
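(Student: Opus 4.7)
The plan is to deduce both inequalities directly from Lemma \ref{lem: LY CM} by evaluating on a carefully chosen product state in $\mathscr H$. Since $Y$ is rank-one, write $Y=|u\rangle\langle v|$ with $u,v\in L^2(\mathbb R^3)$; the case $v=0$ is trivial, so assume $\sno v\sno_\2>0$ and set $\hat v:=v/\sno v\sno_\2$. For each $z\in\mathbb R^3$, let $M_z$ denote the multiplication operator on $L^2(\mathbb R^3_x)$ with multiplier $x\mapsto (Ah_{K,x+y})(z)$, so that $XM_z Y=|XM_z u\rangle\langle v|$ is again rank-one and
\begin{align*}
\sno XM_z Y\sno_\op^2 \, =\, \sno XM_z u\sno_\2^2 \, \sno v\sno_\2^2 .
\end{align*}

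The key observation is that the $z$-integrated left-hand side of \eqref{eq: X h Y bound} is nothing but the squared $\mathscr H$-norm of a concrete vector in the one-phonon sector. Indeed, choose the test state $\Psi:=\hat v\otimes\Omega\in\mathscr H$, which satisfies $\sno (\mathbb N+1)^{1/2}\Psi\sno_\h=1$. Using $Y\Psi=\sno v\sno_\2\, u\otimes\Omega$ and unwinding the action of $a^\dagger(Ah_{K,\cdot+y})$ in the $(x,z)$-representation gives
\begin{align*}
\sno X\, a^\dagger(Ah_{K,\cdot+y})\, Y\Psi\sno_\h^2 \, =\, \sno v\sno_\2^2 \int \D z\, \sno XM_z u\sno_\2^2 \, =\, \int \D z\, \sno XM_z Y\sno_\op^2.
\end{align*}
Applying the first inequality of Lemma \ref{lem: LY CM} with $a^\bullet=a^\dagger$ then yields \eqref{eq: X h Y bound}. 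The difference bound \eqref{eq: X h Y difference bound} follows in identical fashion, replacing $Ah_{K,\cdot+y}$ by $Ah_{\Lambda,\cdot+y}-Ah_{K,\cdot+y}$ and invoking the second bound of Lemma \ref{lem: LY CM}, which delivers the $K^{-1/2}$ gain.

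The rank-one hypothesis on $Y$ enters essentially in the factorization $\sno XM_z Y\sno_\op=\sno XM_z u\sno_\2\, \sno v\sno_\2$; for higher-rank $Y$ this identity fails and one could only expect a Hilbert--Schmidt-type estimate. All the genuine analytic work—the splitting of $h_{K,\cdot}$ into a soft piece and a remainder, and the Lieb--Yamazaki commutator identity $h_{\Lambda,x}-h_{K,x}=[-i\nabla_x,j_{K,\Lambda,x}]$ producing the decay in $K$—is already packaged inside Lemma \ref{lem: LY CM}, so no substantial obstacle remains. The only step requiring attention is the choice of test state: pairing $\hat v$ with the vacuum $\Omega$ and using $a^\dagger$ (rather than $a$, which would annihilate the vacuum) is precisely what converts the pointwise Fock-space bound of Lemma \ref{lem: LY CM} into the desired $L^2(\D z)$-integrated estimate.
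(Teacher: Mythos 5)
Your proof is correct and matches the paper's approach: both rely on the identity that, for a one-phonon state $w\otimes\Omega$, the $z$-integrated operator norms equal $\sno X a^\dagger(Ah_{K,\cdot+y})\,w\otimes\Omega\sno_\h^2$, after which Lemma~\ref{lem: LY CM} applies directly. The paper's (omitted) proof is essentially a terse version of this; your write-up simply makes explicit the choice $\Psi=\hat v\otimes\Omega$ and the factorization $\sno XM_zY\sno_\op=\sno XM_zu\sno_\2\sno v\sno_\2$ that the rank-one assumption provides.
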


\subsubsection{Transformation properties of $\mathbb U_K$ \label{sec: transf prop UK}}

The next lemma collects relations for the Bogoliubov transformation $\mathbb U_K$ defined in \eqref{eq: def of U}. Its proof follows directly from this definition and the fact that $\Theta_K= (H^{\rm Pek}_K)^{1/4}$ is real-valued.

\begin{lemma} \label{lem: U W U transformation}
Let $f \in L^2(\mathbb R^3)$, $f^0 = \Pi_0 f$, $f^1 = \Pi_1 f$ with $\Pi_i$ defined in \eqref{eq: def of Pi_i} and set
\begin{subequations}
\begin{align}
 \underline f \, & : =\,   f^0 + \Theta^{-1}_{K} \re (f^1) + i \Theta_K \im(f^1) \label{eq: underline notation} \\[1mm]
 \widetilde {f} \, & : =\,  f^0  + \Theta^{\tc}_K \re(f^1) + i \Theta_K^{-1} \im (f^1 ) . \label{eq:tilde notation}
\end{align}
\end{subequations}
The unitary operator $\mathbb U_K$ defined in \eqref{eq: def of U} satisfies the relations
\begin{subequations}
\begin{align}
\mathbb U_K a(f) \mathbb U_K^\dagger & \, =\,  a( f^0 ) + a (A_K f^1 ) + a^\dagger (B_K \overline{  f^1 } ) \label{eq: U again} \\[1.5mm]
\mathbb U ^\dagger_K a(f) \mathbb U^{\tc}_K & \, =\,  a( f^0 ) + a (A_K f^1 ) - a^\dagger (B_K \overline{  f^1 } ) \label{eq: U reverse}\\[1.5mm]
\label{eq: transformation of phi}
\mathbb U^{\tc}_K \phi(f) \mathbb U^\dagger_K & \, =\,  \phi(\underline f) , \quad 
 \mathbb U^{\tc}_K \pi(f) \mathbb U^\dagger_K  \, =\,  \pi(\widetilde{f}) \\[1.5mm]
\mathbb U_K W(f) \mathbb U_K^\dagger & \, = \,  W( \widetilde f).\label{eq: transformation of Weyl}
\end{align}
\end{subequations}
\end{lemma}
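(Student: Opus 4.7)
The plan is to verify all four identities by reducing everything to the defining relation \eqref{eq: def of U}, carefully tracking the splitting of $f$ into $f^0$ and $f^1$. A preliminary observation that I will use repeatedly is that $\Pi_0$, being the orthogonal projection onto $\mathrm{Span}\{\partial_i \varphi\}$ with $\varphi$ real, has a real integral kernel; therefore so does $\Pi_1 = 1 - \Pi_0$, and in particular $\overline{f^i} = \Pi_i \bar f$. Likewise $\Theta_K = (H_K^{\rm Pek})^{1/4}$ is defined from a real kernel, so $A_K$ and $B_K$ have real kernels and commute with complex conjugation; moreover they leave $\mathrm{Ran}(\Pi_1)$ invariant and act trivially outside, and on $\mathrm{Ran}(\Pi_1)$ they are functions of $\Theta_K$ and therefore commute with each other.

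For \eqref{eq: U again}, I decompose $a(f) = a(f^0) + a(f^1)$, insert the definition of $\mathbb U_K$, and use $A_K f^0 = f^0$, $B_K \overline{f^0} = 0$ for the zero-part. For \eqref{eq: U reverse}, I will exploit the algebraic identity on $\mathrm{Ran}(\Pi_1)$
\[
A_K^2 - B_K^2 \,=\, (A_K + B_K)(A_K - B_K) \,=\, \Theta_K^{-1}\,\Theta_K \,=\, \Pi_1,
\]
combined with $[A_K,B_K]=0$. Making the ansatz $\mathbb U_K^\dagger a(f) \mathbb U_K = a(\tilde A f) + a^\dagger(\tilde B \overline{f})$ and imposing $\mathbb U_K \mathbb U_K^\dagger a(f) \mathbb U_K \mathbb U_K^\dagger = a(f)$ together with the formula from \eqref{eq: U again} yields $\tilde A A_K + \tilde B B_K = \Pi_1$ and $\tilde A B_K + \tilde B A_K = 0$ on $\mathrm{Ran}(\Pi_1)$, whose unique solution compatible with commutativity is $\tilde A = A_K$, $\tilde B = -B_K$.

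For \eqref{eq: transformation of phi}, write $\phi(f) = a(f) + a^\dagger(f)$ and use \eqref{eq: U again} together with its Hermitian adjoint $\mathbb U_K a^\dagger(f) \mathbb U_K^\dagger = a^\dagger(A_K f) + a(B_K \overline{f})$. The key algebraic step is to recombine the $f^1$-terms via
\[
A_K f^1 + B_K \overline{f^1} \,=\, (A_K + B_K)\,\re f^1 + i (A_K - B_K)\,\im f^1 \,=\, \Theta_K^{-1}\re f^1 + i \Theta_K \im f^1,
\]
which is precisely $\underline{f^1}$, giving $\mathbb U_K \phi(f) \mathbb U_K^\dagger = \phi(\underline f)$. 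For $\pi$ the same calculation runs with the roles of $\re$ and $\im$ flipped (via $\pi(f) = -i a(f) + i a^\dagger(f)$ and the anti-linearity $a(ig) = -i a(g)$), producing the combination $A_K f^1 - B_K \overline{f^1} = \Theta_K \re f^1 + i\Theta_K^{-1}\im f^1 = \widetilde{f^1}$ and hence $\pi(\widetilde f)$.

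Finally, \eqref{eq: transformation of Weyl} follows by noting that $W(f) = e^{a^\dagger(f) - a(f)}$ and that conjugation by the unitary $\mathbb U_K$ commutes with the exponential, so
\[
\mathbb U_K W(f) \mathbb U_K^\dagger \,=\, \exp\bigl(\mathbb U_K(a^\dagger(f) - a(f))\mathbb U_K^\dagger\bigr),
\]
and the computation from the previous step (taking the difference rather than the sum of the $a^\dagger$- and $a$-conjugations) gives $\mathbb U_K(a^\dagger(f)-a(f))\mathbb U_K^\dagger = a^\dagger(\widetilde f) - a(\widetilde f)$, which is exactly the exponent of $W(\widetilde f)$. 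The only real obstacle is bookkeeping: keeping straight the anti-linearity of $a$ versus the linearity of $a^\dagger$, and verifying at each step that complex conjugation passes through $A_K$, $B_K$, $\Pi_0$ and $\Pi_1$ as a consequence of their real integral kernels. All identities are algebraic, so no analytic estimates are required.
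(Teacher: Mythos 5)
Your proof is correct and takes exactly the approach the paper alludes to: it verifies each identity by direct computation from the defining relation \eqref{eq: def of U}, using the key fact that $\Theta_K$, $A_K$, $B_K$, $\Pi_0$, $\Pi_1$ all have real kernels (hence commute with complex conjugation), that $A_K\pm B_K = \Theta_K^{\mp 1}$ on $\mathrm{Ran}(\Pi_1)$, and the ensuing algebra $A_K^2 - B_K^2 = \Pi_1$, $[A_K,B_K]=0$. The paper itself states only that the proof "follows directly from this definition and the fact that $\Theta_K=(H_K^{\rm Pek})^{1/4}$ is real-valued," so your write-up is precisely the omitted computation.
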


Note that \eqref{eq:tilde notation} is consistent with the notation introduced in \eqref{eq: def of tilde w}.  
The following statements provide helpful bounds on the number operator when transformed with the Bogoliubov transformation.

\begin{lemma}\label{lem: bounds for the number operator} There exists a constant $b>0$ such that 
\begin{align}
\mathbb U^{\tc}_K (\mathbb N +1 )^n \mathbb U^{\dagger}_K \, \le \, b^n  n^n  (\mathbb N+1)^n ,\quad \, \mathbb U_K^\dagger (\mathbb N +1 )^n \mathbb U^{\tc}_K \, \le \, b^n  n^n  (\mathbb N+1)^n
\end{align}
for all $n\in \mathbb N $ and $K\in (K_0,\infty]$.
\end{lemma}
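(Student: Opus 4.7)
The plan is to establish the case $n=1$ by explicit computation and then bootstrap to arbitrary $n$ by induction. For the base case, pick an orthonormal basis $(e_j)_{j \in \mathbb N}$ of $L^2(\mathbb R^3)$ consisting of real functions and compatible with the splitting $\textnormal{Ran}(\Pi_0) \oplus \textnormal{Ran}(\Pi_1)$. Writing $\mathbb N = \sum_j a^\dagger(e_j) a(e_j)$ and applying the transformation rule \eqref{eq: U again}, one finds
\begin{align*}
\mathbb U_K \mathbb N \mathbb U_K^\dagger \, = \, \sum_j \big[ a^\dagger(A_K e_j) + a(B_K e_j) \big]\big[ a(A_K e_j) + a^\dagger(B_K e_j) \big],
\end{align*}
which expands via the CCR to $\D\Gamma(A_K^* A_K + B_K B_K^*) + Q_K + Q_K^\dagger + \|B_K\|_\HS^2$, where $Q_K = \sum_j a^\dagger(A_K e_j) a^\dagger(B_K e_j)$ has an integral kernel of Hilbert--Schmidt norm at most $\|A_K\|_\op \cdot \|B_K\|_\HS$. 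By Lemma \ref{lem: regularized Hessian} (parts (i)--(iii)), both $\|A_K\|_\op$ and $\|B_K\|_\HS$ are bounded uniformly in $K \ge K_0$. The standard bounds \eqref{eq: standard estimates for a and a*} on creation/annihilation operators then yield $\mathbb U_K (\mathbb N+1) \mathbb U_K^\dagger \le b (\mathbb N+1)$ for some $b$ independent of $K$, and the companion inequality for $\mathbb U_K^\dagger (\mathbb N+1) \mathbb U_K$ follows verbatim from \eqref{eq: U reverse}.

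For general $n$, I would argue by induction. Set $T_K = \mathbb U_K (\mathbb N+1) \mathbb U_K^\dagger$ and $R_K = T_K - (\mathbb N+1)$, so that $\mathbb U_K (\mathbb N+1)^n \mathbb U_K^\dagger = T_K^n$ and, by the base case, $\pm R_K \le c(\mathbb N+1)$. A key auxiliary observation is that $\pm [R_K, \mathbb N] \le c(\mathbb N+1)$ as well: the operator $R_K$ consists of quadratic monomials in $a, a^\dagger$, and commuting such monomials with $\mathbb N$ produces terms of the same form (for instance, $[\mathbb N, a^\dagger(f) a^\dagger(g)] = 2 a^\dagger(f) a^\dagger(g)$ and $[\mathbb N, a(f)a(g)] = -2 a(f)a(g)$). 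Expanding $T_K^n = \big((\mathbb N+1) + R_K\big)^n$ into its $2^n$ non-commutative monomials and moving all $R_K$-factors past the $(\mathbb N+1)$-factors to the left, each commutation produces an extra term of the same form controlled again by $c(\mathbb N+1)$. Collecting the resulting $O(n!)$ monomials, each of which is bounded by $c^n (\mathbb N+1)^n$ via \eqref{eq: standard estimates for a and a*}, and using $n! \le n^n$, one obtains the claimed bound. The companion inequality for $\mathbb U_K^\dagger (\mathbb N+1)^n \mathbb U_K$ follows from the same argument applied with $\mathbb U_K$ and $\mathbb U_K^\dagger$ interchanged.

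The main obstacle lies in the combinatorial bookkeeping of the induction step: one must ensure that commuting $R_K$ past $(\mathbb N+1)$ does not inflate the degree in $(\mathbb N+1)$ beyond $n$. For this it is essential that $[R_K, \mathbb N]$ be bounded by $c(\mathbb N+1)$ rather than by the a priori larger $c(\mathbb N+1)^2$, which is ultimately a consequence of the fact that commutators of quadratic CCR-monomials remain quadratic. A more lavish but cleaner alternative is to use the kernel representation $\mathbb N^n = \int\!\! dy_1 \cdots dy_n \, a^\dagger_{y_1} a_{y_1} \cdots a^\dagger_{y_n} a_{y_n}$, transform each $a^\bullet_y$ by Lemma \ref{lem: U W U transformation}, normally order the resulting polynomial of degree $2n$, and bound the monomials via pull-through identities; the Stirling-type estimates for the number of normally-ordered monomials then directly yield the $n^n$ factor.
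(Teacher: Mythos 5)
The paper itself does not reproduce a proof here; it refers to \cite[Section~3]{MMS22} for the argument, so I cannot match your proof against a proof in this document. On its own terms, your base case $n=1$ is correct: with a real ONB compatible with $\textnormal{Ran}(\Pi_0)\oplus\textnormal{Ran}(\Pi_1)$, the transform of $\mathbb N$ indeed normal-orders to $\D\Gamma(A_K^2+B_K^2)+Q_K+Q_K^\dagger+\|B_K\|_{\HS}^2$ with $Q_K$ having kernel $A_KB_K$, and the uniform bounds on $\|A_K\|_{\op}$ and $\|B_K\|_{\HS}$ supplied by Lemma~\ref{lem: regularized Hessian} give $T_K:=\mathbb U_K(\mathbb N+1)\mathbb U_K^\dagger\le b(\mathbb N+1)$ and (via \eqref{eq: U reverse}) the companion inequality.

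The induction step is where I see real gaps. First, the claim that pushing all $R_K$-factors to the left through the $(\mathbb N+1)$-factors yields only $O(n!)$ monomials is not justified: each adjacent transposition $(\mathbb N+1)R\mapsto R(\mathbb N+1)+[\mathbb N,R]$ branches, the commutator $[\mathbb N,R]$ is a \emph{new} operator that again must be transported, and the number of terms grows with the inversion count of the word, not with $n$ alone — there is no reason the total should be $O(n!)$. Second, and more seriously, the claim that each resulting monomial (a product of quadratic operators and powers of $\mathbb N+1$) is bounded by $c^n(\mathbb N+1)^n$ ``via \eqref{eq: standard estimates for a and a*}'' is not established. Those inequalities control single $a^\bullet(f)$ against $(\mathbb N+1)^{1/2}$; a product of $j$ quadratic factors sandwiched with $(\mathbb N+1)$-powers requires a careful pull-through argument (tracking how the $\pm 2$ shifts of particle number interact with the $(\mathbb N+1)$-weights), and it is precisely this step that produces the dimension-dependent constant. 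Since $x\mapsto x^n$ is not operator monotone, $T_K\le b(\mathbb N+1)$ by itself gives nothing about $T_K^n$, so one cannot avoid this bookkeeping. You do flag the pull-through/normal-ordering route at the end as ``a more lavish but cleaner alternative,'' and I agree it is the more reliable path; as written, though, the induction as stated does not constitute a complete proof, and the quantitative $b^n n^n$ form of the bound — which is used explicitly via Corollary~\ref{Cor: Upsilon estimates} — is exactly where the untracked combinatorics would have to be pinned down.
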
 
In the next two statements we denote by $\mathbbm{1}(\mathbb N > c)$ (resp. $\mathbbm{1}(\mathbb N \le c)$) the orthogonal projection in $\mathcal F$ onto all states with phonon number larger than (resp. less or equal to) $c$.

\begin{corollary}\label{Cor: Upsilon estimates} Let $\mathcal V_K^{(n+1)} \subset \mathcal F_1$ as in \eqref{eq:Bogo:subspace} and $\mathfrak m \in \mathbb N$ defined by \eqref{eq:max:length}. Further set $\Gamma^{>} :=  \mathbbm{1}(\mathbb N > \alpha^\delta )  \Gamma$ for $\delta > 0$. There exist constants $b,C(\delta,j) > 0$ such that
\begin{subequations}
\begin{align}
\lsp \Gamma | (\mathbb N+1)^j \Gamma \rsp_\Fock & \, \le\,  b^j j^j ( \mathfrak m + 1 )^j \label{eq: bound for Y<} \\[1mm]
\lsp \Gamma^> | (\mathbb N+1)^j \Gamma^> \rsp_\Fock  &\,  \le\, C(\delta,j)\, \alpha^{- 20 }.\label{eq: exp bound for tail}
\end{align}
for all normalized $\Gamma \in \mathcal V_K^{(n+1)}$ and all $n,j\in \mathbb N_0$ and $K\in (K_0,\infty]$ with $K_0$ large enough.
\end{subequations}
\end{corollary}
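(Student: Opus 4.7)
The plan is to derive both bounds from Lemma~\ref{lem: bounds for the number operator} together with the observation that any $\Gamma\in\mathcal{V}_K^{(n+1)}$ is, up to the unitary $\mathbb{U}_K$, a vector living in the truncated Fock space $\mathcal{F}^{(\le\mathfrak{m})}$.

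\textbf{First bound.} Writing $\Gamma=\sum_{i=0}^n c_i\,\Gamma_K^{(i)}=\mathbb{U}_K^\dagger \gamma$ with $\gamma:=\sum_i c_i\gamma_K^{(i)}$, I will use that each $\gamma_K^{(i)}$ belongs to $\mathcal{F}_1^{(\le\mathfrak{m})}$ by \eqref{eq:Fockspace:m:1}, hence so does $\gamma$. Since $\mathbb{U}_K$ is unitary, $\|\gamma\|_\Fock=\|\Gamma\|_\Fock=1$, and on $\mathcal{F}^{(\le\mathfrak{m})}$ the number operator satisfies $(\mathbb{N}+1)^j\le(\mathfrak{m}+1)^j$. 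Combining this with Lemma~\ref{lem: bounds for the number operator},
\begin{align*}
\lsp \Gamma | (\mathbb N+1)^j\Gamma\rsp_\Fock
= \lsp \gamma | \mathbb U_K (\mathbb N+1)^j \mathbb U_K^\dagger \gamma \rsp_\Fock
\le b^j j^j \lsp \gamma | (\mathbb N+1)^j \gamma \rsp_\Fock
\le b^j j^j (\mathfrak m+1)^j,
\end{align*}
which gives \eqref{eq: bound for Y<}.

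\textbf{Second bound.} Since $\mathbbm 1(\mathbb N>\alpha^\delta)$ commutes with $(\mathbb N+1)^j$, I have
\begin{align*}
\lsp \Gamma^>|(\mathbb N+1)^j\Gamma^>\rsp_\Fock
= \lsp \Gamma | (\mathbb N+1)^j \mathbbm 1(\mathbb N>\alpha^\delta)\Gamma\rsp_\Fock.
\end{align*}
For any integer $\ell\ge 0$, on $\mathrm{Ran}\,\mathbbm 1(\mathbb N>\alpha^\delta)$ one has $1\le \alpha^{-\ell\delta}(\mathbb N+1)^\ell$, so
\begin{align*}
(\mathbb N+1)^j\,\mathbbm 1(\mathbb N>\alpha^\delta) \le \alpha^{-\ell\delta}(\mathbb N+1)^{j+\ell}.
\end{align*}
Invoking \eqref{eq: bound for Y<} with exponent $j+\ell$ yields
\begin{align*}
\lsp \Gamma^>|(\mathbb N+1)^j\Gamma^>\rsp_\Fock
\le \alpha^{-\ell\delta}\, b^{j+\ell}(j+\ell)^{j+\ell}(\mathfrak m+1)^{j+\ell}.
\end{align*}
Choosing $\ell=\lceil 20/\delta\rceil$ makes $\alpha^{-\ell\delta}\le\alpha^{-20}$, and the prefactor becomes a constant $C(\delta,j)$, which proves \eqref{eq: exp bound for tail}.

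\textbf{Main obstacle.} There is no real obstacle once one realizes that $\mathbb U_K\Gamma\in\mathcal{F}_1^{(\le\mathfrak m)}$; the only thing to verify carefully is that the $j$-dependence in Lemma~\ref{lem: bounds for the number operator} (giving $b^j j^j$) together with the truncation bound $(\mathfrak m+1)^j$ is enough, and that enlarging $j$ to $j+\ell$ in the second step still produces a constant depending only on $j$ and $\delta$, which it does because $\ell$ depends only on $\delta$.
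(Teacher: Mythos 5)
Your proof is correct and follows essentially the same route as the paper's: for the first bound, Lemma~\ref{lem: bounds for the number operator} plus the observation that $\mathbb U_K\Gamma\in\mathcal F_1^{(\le\mathfrak m)}$; for the second, trade the projection $\mathbbm 1(\mathbb N>\alpha^\delta)$ for a factor $\alpha^{-\ell\delta}$ and absorb the resulting higher moment of $\mathbb N$ via the first bound. The only cosmetic difference is that the paper extracts the factor $\alpha^{-k\delta}$ by a Cauchy--Schwarz step $\langle\Gamma^>|(\mathbb N+1)^j\Gamma^>\rangle\le\|\mathbb N^k(\mathbb N+1)^j\Gamma^>\|\,\|\mathbb N^{-k}\Gamma^>\|$, whereas you use the operator inequality $(\mathbb N+1)^j\mathbbm 1(\mathbb N>\alpha^\delta)\le\alpha^{-\ell\delta}(\mathbb N+1)^{j+\ell}$ directly; both rest on the same spectral-theorem observation and give the same result.
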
 
\begin{proof} The first bound follows from Lemma \ref{lem: bounds for the number operator} together with \eqref{eq:max:length}. The second one is obtained from
\begin{align}
\lsp \Gamma^> | (\mathbb N+1)^j \Gamma^> \rsp_\Fock \, & \le \, \sno \mathbb N^{k} (\mathbb N+1)^{j} \Gamma^{ >} \sno_\Fock  \,  \sno \mathbb N^{-k}\Gamma^{>} \sno_\Fock \notag\\[1mm]
& \,  \le\, \sno (\mathbb N + 1) ^{j+k} \Gamma  \sno_\Fock \, \alpha^{-k \delta }   \, \le \, (2  (j+k) b( \mathfrak m + 1 ))^{j+k} \alpha^{- k \delta }
\end{align}
with $k \ge 20 / \delta$.
\end{proof}

\begin{lemma}\label{lem: exp N bound} For $\delta>0$ and $\kappa = 1/ (16 e b \alpha^\delta)$ with $b > 0$ the constant from Lemma \ref{lem: bounds for the number operator}, the operator inequality
\begin{align}
\mathbbm{1}(\mathbb N \le 2 \alpha^\delta) \mathbb U_K^\dagger \exp( 2\kappa \mathbb N ) \mathbb U_K^{{\color{white}{\dagger}}} \mathbbm{1}(\mathbb N \le 2 \alpha^\delta) \, \le \, 2 
\end{align}
holds for all $K\ge K_0$ and $\alpha $ large enough.
\end{lemma}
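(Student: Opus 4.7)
The plan is to Taylor expand the exponential, apply Lemma \ref{lem: bounds for the number operator} term by term, and then sum a convergent geometric series. The prefactor $1/(16\, e\, b)$ in the definition of $\kappa$ is chosen precisely so that, after both the unitary conjugation and the cutoff at $\mathbb N\le 2\alpha^\delta$, the common ratio of the resulting series is strictly less than one.

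First I would expand, using $\mathbb N\ge 0$,
\begin{align*}
\exp(2\kappa \mathbb N) \,=\, \sum_{n=0}^\infty \frac{(2\kappa)^n}{n!}\, \mathbb N^n \,\le\, \sum_{n=0}^\infty \frac{(2\kappa)^n}{n!}\, (\mathbb N+1)^n,
\end{align*}
and then conjugate by $\mathbb U_K$. Since $\mathbb U_K$ is unitary, an operator inequality $A\le B$ transfers to $\mathbb U_K^\dagger A\, \mathbb U_K \le \mathbb U_K^\dagger B\, \mathbb U_K$, so Lemma \ref{lem: bounds for the number operator} gives, term by term,
\begin{align*}
\mathbb U_K^\dagger \exp(2\kappa \mathbb N)\, \mathbb U_K \,\le\, \sum_{n=0}^\infty \frac{(2\kappa b)^n\, n^n}{n!}\,(\mathbb N+1)^n.
\end{align*}

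Second, I would use that $\mathbb N$ commutes with its own spectral projections, so sandwiching with $\mathbbm 1(\mathbb N\le 2\alpha^\delta)$ replaces each $(\mathbb N+1)^n$ by at most $(2\alpha^\delta+1)^n \le (3\alpha^\delta)^n$ for $\alpha\ge 1$. Combined with Stirling in the form $n!\ge (n/e)^n$, i.e.\ $n^n/n! \le e^n$, this yields
\begin{align*}
\mathbbm 1(\mathbb N\le 2\alpha^\delta)\, \mathbb U_K^\dagger \exp(2\kappa \mathbb N)\, \mathbb U_K\, \mathbbm 1(\mathbb N\le 2\alpha^\delta) \,\le\, \sum_{n=0}^\infty \bigl( 6\, e\, b\, \kappa\, \alpha^\delta \bigr)^n .
\end{align*}

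Finally, inserting $\kappa = 1/(16\, e\, b\, \alpha^\delta)$ makes the common ratio equal to $6/16 = 3/8$, so the geometric series sums to $8/5 \le 2$, which is the claimed bound. I do not anticipate a real obstacle: the argument is a routine combination of Taylor expansion, operator monotonicity, Stirling, and a geometric sum. The only point to be careful about is that the termwise manipulation of the series is legitimate — this is justified because all operators involved are non-negative and the partial sums form a monotone increasing sequence bounded from above by a finite operator on the range of $\mathbbm 1(\mathbb N\le 2\alpha^\delta)$, so the strong limit exists and coincides with the desired exponential.
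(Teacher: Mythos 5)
Your proof is correct, and the mechanism — Taylor expanding the exponential, applying the moment bound from Lemma~\ref{lem: bounds for the number operator} term by term, using the cutoff to control $(\mathbb N+1)^n\le(3\alpha^\delta)^n$, and Stirling plus a geometric series — is the natural route; the paper itself defers the proof to \cite[Section 3]{MMS22}, and your argument is the standard one that reference uses. The numerics also check out: $2\kappa b e\cdot 3\alpha^\delta = 6/16=3/8$, and $\sum_n(3/8)^n=8/5\le 2$.

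One small remark on rigor you might tighten: the intermediate display
\begin{align*}
\mathbb U_K^\dagger \exp(2\kappa \mathbb N)\, \mathbb U_K \,\le\, \sum_{n=0}^\infty \frac{(2\kappa b)^n\, n^n}{n!}\,(\mathbb N+1)^n
\end{align*}
has a right-hand side that diverges on high-$\mathbb N$ vectors (the coefficients grow like $(2\kappa b e)^n$, so the series converges only for $\mathbb N+1<1/(2\kappa b e)=8\alpha^\delta$); it should be read as a quadratic-form inequality with the value $+\infty$ allowed, which becomes finite only after sandwiching with $\mathbbm 1(\mathbb N\le 2\alpha^\delta)$. The cleanest way to avoid any circularity in the ``monotone limit bounded by a finite operator'' justification at the end is to work at the level of expectations from the start: for $\Phi$ in the range of the cutoff projection, write $\langle\Phi|\mathbb U_K^\dagger e^{2\kappa\mathbb N}\mathbb U_K\Phi\rangle = \int e^{2\kappa m}\,\D\mu_{\mathbb U_K\Phi}(m)$, interchange the Taylor series with the spectral integral by Tonelli (all terms non-negative), and then apply Lemma~\ref{lem: bounds for the number operator} to each moment. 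This yields exactly your geometric series bound $\sum_n(3/8)^n\|\Phi\|^2$, with no need to first make sense of an intermediate unbounded operator.
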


The reason for introducing the cutoff $K$ in $\mathbb H_K$ is the following lemma, whose proof is given in Section \ref{Sec: Remaining Proofs}. The statement is a generalization of \cite[Lemma 3.13]{MMS22} to excited eigenstates of the Bogoliubov Hamiltonian.

\begin{lemma}\label{lem: bounds for P_f:new} For every $n\in \mathbb N_0$ there exist constants $C(n),K(n)>0$ such that
\begin{align}
\sno P_f \Gamma \sno_\Fock \, \le\, C(n) \sqrt  K  
\end{align}
for all normalized $\Gamma \in \mathcal V_K^{(n+1)}$ and $K\ge K(n)$.
\end{lemma}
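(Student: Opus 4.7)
The plan is, using the unitarity of $\mathbb U_K$, to reduce the problem to bounding $\|\widetilde P_f\, \gamma_K^{(j)}\|_\Fock$ for each $j=0,\ldots,n$, where $\widetilde P_f := \mathbb U_K\, P_f\, \mathbb U_K^\dagger$; the case of a general $\Gamma \in \mathcal V_K^{(n+1)}$ then follows by the triangle inequality because $\{\gamma_K^{(j)}\}_{j=0}^n$ is orthonormal (the $\mathfrak u_K^{(l)}$ being orthonormal eigenfunctions of $H_K^{\rm Pek}$) and $n+1$ is fixed. I would compute $\widetilde P_f$ explicitly by writing $P_f = \sum_{\mu\nu} p_{\mu\nu}\, a^\dagger(e_\mu) a(e_\nu)$ in a real orthonormal basis $\{e_\mu\}$, with $p_{\mu\nu} := \langle e_\mu | (-i\nabla) e_\nu\rangle_\2$, and applying $\mathbb U_K a(e_\nu) \mathbb U_K^\dagger = a(A_K e_\nu) + a^\dagger(B_K e_\nu)$ from Lemma \ref{lem: U W U transformation}. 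After normal-ordering one obtains $\widetilde P_f = \mathbb Q_1 + \mathbb Q_2 + \mathbb Q_2^\dagger + c$, where $\mathbb Q_1$ is number-preserving with effective one-body symbol $A_K p A_K - B_K p B_K$, $\mathbb Q_2$ is the pair-creation operator whose integral kernel equals that of $A_K p B_K$ on $L^2(\mathbb R^3)$, and the scalar $c = \sum_{\mu\nu} p_{\mu\nu} \langle B_K e_\mu | B_K e_\nu\rangle_\2$ vanishes because in a real basis $p_{\mu\nu}$ is purely imaginary and antisymmetric while $(B_K^2)_{\mu\nu}$ is real and symmetric.

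To bound the three non-trivial pieces on $\gamma_K^{(j)}$ I would use: (a) $H_K^{\rm Pek}\ge\beta$ on $\textnormal{Ran}(\Pi_1)$ by Lemma \ref{lem: regularized Hessian}(i), so $A_K$ and $B_K$ are bounded operators and each $\mathfrak u_K^{(l)}$ is an eigenvector of both with uniformly bounded eigenvalues; (b) Lemma \ref{lem: regularized Hessian}(vi)--(vii) combined with $\lambda_\infty^{(l)}<1$ from Lemma \ref{lem: regularized Hessian}(v), which for fixed $l\le n$ and $K$ large enough (depending on $n$) gives $\|\nabla \mathfrak u_K^{(l)}\|_\2 \le C(n)\sqrt K$; and (c) $|\mathfrak J_K^{(j)}|\le\mathfrak m$ by \eqref{eq:max:length}, so the particle number of $\gamma_K^{(j)}$ is bounded uniformly in $K$. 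The action of $\mathbb Q_1$ replaces a factor $a^\dagger(\mathfrak u_K^{(l)})$ by $a^\dagger((A_K p A_K - B_K p B_K) \mathfrak u_K^{(l)})$ whose argument has $L^2$-norm $\le C(n)\sqrt K$, yielding $\|\mathbb Q_1 \gamma_K^{(j)}\|_\Fock \le C(n)\sqrt K$. A direct computation shows that $\mathbb Q_2^\dagger \gamma_K^{(j)}$ is a linear combination of $(m-2)$-particle states whose coefficients are of the form $\langle A_K \mathfrak u_K^{(l)} | p B_K \mathfrak u_K^{(l')}\rangle_\2$, and each of these is controlled by the same eigenvector plus gradient bound.

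The main technical obstacle is the pair-creation piece $\mathbb Q_2$, and it is precisely here that the cutoff $K$ enters through the Hilbert--Schmidt norm of $A_K p B_K$. Since $[\mathbb Q_2, a^\dagger(f)] = 0$, one has $\mathbb Q_2 \gamma_K^{(j)} = \prod_l a^\dagger(\mathfrak u_K^{(j_l)}) \mathbb Q_2 \Omega$, and a standard bosonic combinatorial bound combined with the kernel computation yields $\|\mathbb Q_2 \gamma_K^{(j)}\|_\Fock^2 \le C(\mathfrak m) \|\mathbb Q_2 \Omega\|_\Fock^2 \le 2 C(\mathfrak m) \|A_K p B_K\|_{\HS}^2$. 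Using cyclicity of the trace and Lemma \ref{lem: regularized Hessian}(ii),
\[
\|A_K p B_K\|_{\HS}^2 \,\le\, \|A_K\|_{\op}^2\,\textnormal{Tr}_{L^2}\big(p B_K^2 p\big) \,\le\, C\,\textnormal{Tr}_{L^2}\big((-i\nabla)(1-H_K^{\rm Pek})(-i\nabla)\big) \,\le\, CK,
\]
with the last inequality from Lemma \ref{lem: regularized Hessian}(iv). This is precisely the reason the momentum cutoff $K$ was introduced in the definition of $\mathbb H_K$, and combining with the bounds on $\mathbb Q_1$ and $\mathbb Q_2^\dagger$ gives $\|\widetilde P_f \gamma_K^{(j)}\|_\Fock \le C(n)\sqrt K$, completing the plan.
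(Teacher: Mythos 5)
Your proof is correct and follows essentially the same strategy as the paper: compute $\mathbb U_K P_f\mathbb U_K^\dagger$ in normal-ordered form, observe that the scalar term vanishes, and bound the number-conserving and pair-creation pieces on $\gamma_K^{(j)}$ via Lemma~\ref{lem: regularized Hessian}, tracing the factor $\sqrt K$ to ${\rm Tr}_{L^2}\big((-i\nabla)(1-H_K^{\rm Pek})(-i\nabla)\big)\le CK$ from item (iv) and controlling $(1-\lambda_K^{(j)})^{-1}$ uniformly for $K\ge K(n)$ via items (v)--(vi). The differences are cosmetic bookkeeping: the paper peels off $(\mathbb U_K P_f\mathbb U_K^\dagger)\Omega$ (citing its $\sqrt K$ bound from an earlier work) and treats the remainder as a commutator with the product of creation operators, whereas you bound $\mathbb Q_1,\mathbb Q_2,\mathbb Q_2^\dagger$ directly; note also that your sign $A_K p A_K - B_K p B_K$ for the number-conserving symbol is in fact the correct one (the paper's displayed formula for $\mathbb U_K P_f\mathbb U_K^\dagger$ contains a sign typo there), though this has no bearing on the estimates.
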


\subsection{Replacing the Weyl operator by the weight function \label{sec:technical:lemmas}}

The next two lemmas will be useful in order to replace the Weyl operator $W(\alpha \widetilde w_{P,y})$ (when multiplied with different operators) by the weight function \eqref{eq: definition of F}. The proofs are given in Section \ref{Sec: Remaining Proofs}. For the precise statements, recall the notation $\mathscr G_\gamma^0 = \psi \otimes \gamma \in L^2(\mathbb R^3)\otimes \mathcal F$ and definitions \eqref{eq: def of tilde w}, \eqref{eq: definition of F} and \eqref{eq:Fockspace:m:1}.

\begin{lemma} \label{lem:X:gamma:W:chi} Let $\mathbf X,\mathbf Y: L^2(\mathbb R^3) \otimes\mathcal F_1 \to L^2(\mathbb R^3) \otimes \mathcal F_1 $ be densely defined operators whose Fock space components are at most quadratic in creation and annihilation operators.\footnote{Meaning $\mathbf X ( L^2(\mathbb R^3) \otimes \mathcal F^{(n)}_1 ) \subset L^2(\mathbb R^3) \otimes \bigoplus_{m=\max\{0,n-2\}}^{n+2} \mathcal F_1^{(m)}$ $\forall$ $n\in \mathbb N_0$, with $\mathcal F_1^{(m)} = \mathcal F_1 \cap \mathcal F^{(m)}$, cf. \eqref{eq:Fock:space:m:particles}.} Moreover let $\mathfrak p_\alpha(y)  =  1 + (\alpha |y|)^{4\mathfrak m + 8} $  with $\mathfrak m$ defined in \eqref{eq:max:length}. There exists a constant $C>0$ such that
\begin{align}
& \big| \lsp  \mathscr G_\gamma^0 | \mathbf X W(\alpha \widetilde w_{P,y} ) \mathbf Y \mathscr G_\xi^0 \rsp_\h  -  \lsp   \mathscr G_\gamma^0 |\mathbf X \mathbf Y \mathscr G_\xi^0 \rsp_\h n_{0,1}(y) \big| \notag\\[1mm]
& \hspace{3.5cm} \le C \alpha^{-1} \sno \mathbf X^\dagger  \mathscr G^0_\gamma \sno_\h  \sno \mathbf Y \mathscr G_\xi^0  \sno_\h \, \mathfrak p_\alpha(y)\,    n_{0,1}(y)  \quad \forall y\in \mathbb R^3
\end{align}
for all such operators $\mathbf X,\mathbf Y$ and all $\gamma ,\xi \in  \mathcal F_1^{(\le \mathfrak m)}$, $|P|  \le \sqrt{2M^{\rm LP}} \alpha $, $K\ge K_0$ and $\alpha \ge 1 $. 
\end{lemma}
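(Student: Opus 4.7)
The plan is to exploit the fact that, in the Fock space factorisation $\mathcal F = \mathcal F_0 \otimes \mathcal F_1$, both $\Phi := \mathbf X^\dagger \mathscr G^0_\gamma$ and $\Psi := \mathbf Y \mathscr G^0_\xi$ lie in $L^2 \otimes \mathcal F_1^{(\le \mathfrak m + 2)}$ with trivial $\mathcal F_0$-component. Since $\widetilde w_{P,y} = w^0_{P,y} + \widetilde w^1_{P,y}$ is the orthogonal decomposition across $\textnormal{Ran}(\Pi_0) \oplus \textnormal{Ran}(\Pi_1)$, the Weyl operator factorises as $W(\alpha \widetilde w) = W(\alpha w^0) \otimes W(\alpha \widetilde w^1)$, and the $\mathcal F_0$-factor of the inner product contributes exactly $\lsp \Omega_0 | W(\alpha w^0) \Omega_0 \rsp_{\mathcal F_0} = e^{-\alpha^2 \sno w^0 \sno_\2^2 /2}$. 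This reduces the problem to estimating the $\mathcal F_1$-matrix element of $W(\alpha \widetilde w^1)$, where the improved Lemma \ref{lem: bound for w1 and w0} bound $\sno \widetilde w^1 \sno_\2 \le C(\alpha^{-1} |y| + y^2)$ becomes available.

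Next I normal-order via \eqref{eq: BCH for Weyl} as $W(\alpha \widetilde w^1) = e^{-\alpha^2 \sno \widetilde w^1 \sno_\2^2/2} e^{\alpha a^\dagger(\widetilde w^1)} e^{-\alpha a(\widetilde w^1)}$, move the creation exponential to the bra side as $e^{\alpha a(\widetilde w^1)}$, and Taylor expand. Since $\Phi, \Psi$ carry at most $\mathfrak m + 2$ phonons, both series terminate and one obtains the finite sum
\begin{align*}
\lsp \Phi | W(\alpha \widetilde w) \Psi \rsp_\h \;=\; n_{0,1}(y) \sum_{k,\,l = 0}^{\mathfrak m + 2} \frac{\alpha^k (-\alpha)^l}{k!\, l!}\, \lsp a(\widetilde w^1)^k \Phi \,\big|\, a(\widetilde w^1)^l \Psi \rsp_\h .
\end{align*}
The $(k,l) = (0,0)$ contribution is precisely $n_{0,1}(y) \lsp \mathscr G^0_\gamma | \mathbf X \mathbf Y \mathscr G^0_\xi \rsp_\h$, matching the subtracted term in the claim.

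Each remaining $(k,l) \neq (0,0)$ contribution is controlled by Cauchy--Schwarz together with the particle-number bound $\sno a(f)^k \Phi \sno_\h \le \sno f \sno_\2^k (\mathfrak m + 2)^{k/2} \sno \Phi \sno_\h$. Summing the finite family yields
\begin{align*}
\bigl| \text{diff} \bigr| \;\le\; C(\mathfrak m)\, n_{0,1}(y)\, \sno \Phi \sno_\h \sno \Psi \sno_\h \Bigl[\, \alpha \sno \widetilde w^1 \sno_\2 + \bigl( \alpha \sno \widetilde w^1 \sno_\2 \bigr)^{2\mathfrak m + 4}\, \Bigr].
\end{align*}
Substituting $\alpha \sno \widetilde w^1 \sno_\2 \le C(|y| + \alpha y^2)$ from Lemma \ref{lem: bound for w1 and w0} and applying the elementary inequality $x^j \le 1 + x^{2j}$ to the variable $\alpha|y|$ shows that each of $|y|$, $\alpha y^2$, $|y|^{2\mathfrak m+4}$, and $(\alpha y^2)^{2\mathfrak m+4}$ is bounded by a constant multiple of $\alpha^{-1} \mathfrak p_\alpha(y)$, completing the estimate.

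The main obstacle is producing the $\alpha^{-1}$ prefactor on the right-hand side. A direct Taylor expansion of $W(\alpha \widetilde w)$, without first splitting off the $\mathcal F_0$-component, would only yield $\alpha \sno \widetilde w \sno_\2 \sim \alpha |y|$ (because $\sno w^0 \sno_\2 \sim |y|$ by Lemma \ref{lem: bound for w1 and w0}), which is already of order one at the scale $|y| \sim \alpha^{-1}$ and does not admit the required saving. The factorisation, enabled by the hypothesis $\gamma, \xi \in \mathcal F_1^{(\le \mathfrak m)}$, is therefore essential: it confines the Taylor expansion to $\widetilde w^1$, whose $L^2$-norm inherits the extra $\alpha^{-1}$ recorded in Lemma \ref{lem: bound for w1 and w0}.
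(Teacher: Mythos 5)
Your proof is correct and follows essentially the same strategy as the paper's: extract the Gaussian factor $n_{0,1}(y)$ by using that $\mathbf X^\dagger\mathscr G_\gamma^0$ and $\mathbf Y\mathscr G_\xi^0$ are vacuum in the $\mathcal F_0$ component, normal-order $W(\alpha\widetilde w^1_{P,y})$, truncate the Taylor expansion at order $\mathfrak m+2$, apply Cauchy--Schwarz with number-operator bounds, and convert the resulting powers of $\alpha\sno\widetilde w^1_{P,y}\sno_\2$ into $\alpha^{-1}\mathfrak p_\alpha(y)$ via Lemma~\ref{lem: bound for w1 and w0}. The only cosmetic difference is that the paper writes the error as a telescoping pair of inner products before expanding, while you expand directly into a double sum and separate the $(0,0)$ term; the estimates that follow are the same.
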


\begin{lemma}\label{lem:error:terms:e-kappaN} Let $c,\delta>0$ and $\kappa = 1/ (16 e b \alpha^\delta)$ with $b > 0$ the constant from Lemma \ref{lem: bounds for the number operator}. Moreover let $ \mathfrak p_\alpha (y) = 1+  ( \alpha |y|) ^{4\mathfrak m + 8  }$ with $\mathfrak m$ defined in \eqref{eq:max:length}. There exist constants $\eta,C>0$ and $\alpha_0\ge 1$  such that
\begin{align}
\sno e^{-\kappa \mathbb N } W(\alpha \widetilde w_{P,y})  \gamma \sno_\Fock \le C \,  \mathfrak p_\alpha (y)\,  n_{\delta,\eta}(y)\quad \forall y \in \mathbb R^3
\end{align}
for all normalized $\gamma \in \mathcal F^{(\le \mathfrak m)}$, $|P| \le \sqrt{2 M^{\rm LP}} \alpha $ and $\alpha \ge \alpha_0$.
\end{lemma}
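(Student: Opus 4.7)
My plan is to reduce the estimate to a coherent-state type calculation by normal-ordering the Weyl operator and exploiting that $\gamma$ has at most $\mathfrak m$ particles. Writing $f := \alpha \widetilde w_{P,y}$ and using \eqref{eq: BCH for Weyl} to decompose $W(f) = e^{-\sno f\sno_\2^2/2}e^{a^\dagger(f)}e^{-a(f)}$, the point is that
\[
e^{-a(f)}\gamma \, = \, \sum_{k=0}^{\mathfrak m}\frac{(-1)^k}{k!}a(f)^k\gamma
\]
truncates to a finite sum since $a(f)^k\gamma = 0$ for $k>\mathfrak m$, with each summand of norm at most $\sno f\sno_\2^k\sqrt{\mathfrak m!}$ (using the elementary bound $\sno a(f)\eta\sno_\Fock \le \sno f\sno_\2\sqrt{m}\sno\eta\sno_\Fock$ on $\mathcal F^{(m)}$). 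This truncation is essential: without it, summing the Taylor series for $e^{-a(f)}$ naively produces a spurious factor $e^{\sno f\sno_\2}$, which would be fatal since $\sno f\sno_\2$ can be of order $\alpha$.

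For the $e^{a^\dagger(f)}$ factor I would bound $\sno e^{-\kappa\mathbb N}e^{a^\dagger(f)}\eta\sno_\Fock$ for $\eta\in\mathcal F^{(n)}$ with $n\le\mathfrak m$. By orthogonality of distinct particle-number sectors,
\[
\sno e^{-\kappa\mathbb N}e^{a^\dagger(f)}\eta\sno_\Fock^2 \, = \, \sum_{j\ge 0}\frac{e^{-2\kappa(n+j)}}{(j!)^2}\sno a^\dagger(f)^j\eta\sno_\Fock^2 \, \le\, \sno\eta\sno_\Fock^2\sum_{j\ge 0}\frac{(e^{-2\kappa}\sno f\sno_\2^2)^j}{j!}\binom{n+j}{n},
\]
and the crucial move is to dominate $\binom{n+j}{n}\le(\mathfrak m+j)^\mathfrak m\le 2^\mathfrak m(\mathfrak m^\mathfrak m+j^\mathfrak m)$ by a \emph{polynomial} of degree $\le\mathfrak m$ in $j$. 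Combined with the Touchard-type identity $\sum_j j^\mathfrak m x^j/j! \le C_\mathfrak m(1+x^\mathfrak m)e^x$, this yields $\sno e^{-\kappa\mathbb N}e^{a^\dagger(f)}\eta\sno_\Fock\le C_\mathfrak m(1+\sno f\sno_\2^\mathfrak m)e^{e^{-2\kappa}\sno f\sno_\2^2/2}\sno\eta\sno_\Fock$, extending to all $\eta\in\mathcal F^{(\le\mathfrak m)}$ by triangle inequality over the $\mathfrak m+1$ sectors. Combining with the truncated expansion above and the bound $\sum_{k=0}^\mathfrak m\sno f\sno_\2^k/k!\le C_\mathfrak m(1+\sno f\sno_\2^\mathfrak m)$ gives
\[
\sno e^{-\kappa\mathbb N}W(f)\gamma\sno_\Fock \, \le\, C_\mathfrak m(1+\sno f\sno_\2^{2\mathfrak m})\exp\bigl(-(1-e^{-2\kappa})\sno f\sno_\2^2/2\bigr).
\]

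To identify the right-hand side with $C\mathfrak p_\alpha(y)\, n_{\delta,\eta}(y)$, note that for $\alpha$ large enough one has $2\kappa\le 1$ and therefore $1-e^{-2\kappa}\ge\kappa=1/(16eb\alpha^\delta)$; hence with $f=\alpha\widetilde w_{P,y}$ the exponent is at most $-\alpha^{2-\delta}\sno\widetilde w_{P,y}\sno_\2^2/(32eb)$. Choosing $\eta := 1/(16eb)$, the inequality $\alpha^{2-\delta}/(32eb)\ge\eta\alpha^{2(1-\delta)}/2$ holds for all $\alpha\ge 1$, which produces the desired factor $n_{\delta,\eta}(y)$. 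For the polynomial prefactor, Lemma \ref{lem: bound for w1 and w0} gives $\sno\widetilde w_{P,y}\sno_\2^2\le Cy^2$ for $|y|\le 1$, while the unitarity estimate $\sno w_{P,y}\sno_\2\le 2\sno\varphi_P\sno_\2\le C$ (which follows from $w_{P,y}=(1-e^{-y\nabla})\varphi_P$ and $|P|\le\sqrt{2M^{\rm LP}}\alpha$), together with Lemma \ref{lem: regularized Hessian}(i) yielding $\sno\Theta_K^{-1}\Pi_1\sno_\op\le\beta^{-1/4}$, gives the uniform bound $\sno\widetilde w_{P,y}\sno_\2^2\le C$. A short case split on whether $|y|$ lies below $1/\alpha$, between $1/\alpha$ and $1$, or above $1$, then shows $(1+\alpha^{2\mathfrak m}\sno\widetilde w_{P,y}\sno_\2^{2\mathfrak m})\le C\mathfrak p_\alpha(y)$: the $(4\mathfrak m+8)$-th power in $\mathfrak p_\alpha$ comfortably dominates the $2\mathfrak m$-th power in our bound whenever $\alpha|y|\ge 1$, and both are $O(1)$ otherwise. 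The main subtlety to watch throughout is precisely the avoidance of the spurious $e^{\sno f\sno_\2}$ factor, which is what the polynomial-in-$j$ bound on the binomial coefficient combined with the finite truncation of $e^{-a(f)}\gamma$ is designed to prevent.
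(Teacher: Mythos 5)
Your proof is correct, and it takes a genuinely different route from the paper. The paper first conjugates the full Weyl operator by $e^{-\kappa\mathbb N}$ using $e^{-\kappa\mathbb N}a^\dagger(f)e^{\kappa\mathbb N}=a^\dagger(e^{-\kappa}f)$, and after regrouping factors arrives at the clean identity
\begin{align*}
e^{-\kappa\mathbb N}W(\alpha\widetilde w_{P,y})e^{\kappa\mathbb N}=W(\alpha e^{-\kappa}\widetilde w_{P,y})\,e^{-a(2\alpha\sinh(\kappa)\widetilde w_{P,y})}\,\exp\!\big(\tfrac{\alpha^2(e^{-2\kappa}-1)}{2}\sno\widetilde w_{P,y}\sno_\2^2\big),
\end{align*}
so that the remaining Weyl operator can be dropped by unitarity, the Gaussian factor is already isolated, and the surviving $e^{-a(\cdots)}e^{-\kappa\mathbb N}\gamma$ truncates after $\mathfrak m$ terms with a small prefactor $\sinh(\kappa)^k$. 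You instead keep the naive BCH factorization $W(f)=e^{-\sno f\sno_\2^2/2}e^{a^\dagger(f)}e^{-a(f)}$ and estimate the two exponentials separately: the truncation of $e^{-a(f)}\gamma$ plays the same role in both arguments, but the $e^{a^\dagger(f)}$ factor, which the paper absorbs into a unitary Weyl operator, you control directly by an explicit sector-by-sector computation in which the damping $e^{-2\kappa j}$ beats the entire series, and a binomial-coefficient-to-polynomial bound ensures the prefactor stays polynomial of degree $\mathfrak m$. Both proofs produce a bound of the form $C(1+\sno f\sno_\2^{\text{poly}(\mathfrak m)})\exp(-c\kappa\sno f\sno_\2^2)$, and both land on $\eta$ essentially equal to $1/(16eb)$. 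The paper's regrouping is slicker and gives a prefactor with the tiny coefficients $\sinh(\kappa)^k$, but your prefactor is also absorbed by $\mathfrak p_\alpha(y)$, so nothing is lost; your route is more elementary and makes the role of the $e^{-\kappa\mathbb N}$ cutoff (and the finite-particle restriction) very explicit at the cost of a longer calculation. One small caveat worth stating in a write-up: when extending the $e^{a^\dagger(f)}$ estimate from a fixed sector to all $\eta\in\mathcal F^{(\le\mathfrak m)}$, the triangle inequality followed by Cauchy--Schwarz incurs an extra factor $\sqrt{\mathfrak m+1}$, which is harmless since $\mathfrak m$ is $\alpha$-independent; and the threshold $\alpha_0$ must be large enough that $2\kappa\le 1$ (hence $1-e^{-2\kappa}\ge\kappa$), which is allowed since $\alpha_0$ may depend on $\delta$.
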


\subsection{Sketch of the remaining part of the proof \label{Sec: proof guide}}

With Lemma \ref{lem: energy identity}, the task is to show that $(\mathcal E^{\Gamma} + \mathcal G^{\Gamma} + \mathcal K^{\Gamma})/\sno \mathcal S_P \Gamma \sno_\Fock^2$ for $\Gamma \in \mathcal V_K^{(n+1)}$ coincides, up to small errors, with the energy contribution of order $\alpha^{-2}$ in \eqref{eq: main estimate 2}. Since this part of the proof is somewhat technical,  we explain in this section the heuristic ideas behind the different steps. The main point is that the integrands in \eqref{eq: E} -- \eqref{eq: K} are all localized around $y=0$ at the length scale of order $ \alpha^{-1}$. In a first step, this allows us to replace $e^{A_{P,y}}$ by the identity, and keeping only the terms that will contribute to the energy of order $\alpha^{-2}$, this gives
\begin{subequations}
\begin{align}
\mathcal E^\Gamma & \, =\, \frac{1}{\alpha^2} \int \D y\, \lsp \mathscr G_\Gamma^0 | \mathbb N_1 T_y W(\alpha  w_{P,y}) \mathscr G_\Gamma^0  \rsp_\h     \label{eq: heuristic E1} \\
& \quad  + \,  \frac{1}{\alpha}\int \D y\,  \lsp \mathscr G_\Gamma^0 |  \phi( h_\cdot + \varphi ) T_y W(\alpha  w_{P,y})  \mathscr G_\Gamma^0 \rsp_\h  \ + \ \text{Errors} \label{eq: heuristic E2} \\[1mm] 
\mathcal G^\Gamma & \, = \, -  \frac{2}{\alpha^2} \int \D y \re \lsp \mathscr G_\Gamma^0 | \phi(h_\cdot^1 )T_y  W(\alpha  w_{P,y}) \mathscr G_\Gamma^1 \rsp_\h    \ + \ \text{Errors} \label{eq: heuristic G} \\
 \mathcal K^\Gamma & \, = \, \frac{1}{\alpha^2} \int \D y\, \lsp \mathscr G_\Gamma^1   |h^{\rm Pek} T_y W(\alpha \widetilde w_{P,y})  \mathscr G_\Gamma^1 \rsp_\h  \ + \ \text{Errors} \label{eq: heuristic K}.
\end{align}
\end{subequations}
In the first, third and fourth term, we use the transformation property \eqref{eq: def of U}, e.g. 
\begin{align}
\lsp \mathscr G_\Gamma^0 | \mathbb N_1 T_y W(\alpha  w_{P,y}) \mathscr G_\Gamma^0  \rsp_\h  = \lsp \mathscr G_\gamma^0 | \mathbb U_K \mathbb N_1 \mathbb U_K^\dagger T_y W(\alpha   \widetilde w_{P,y}) \mathscr G_\gamma^0 \rsp ,
\end{align}
where we introduced $\gamma = \mathbb U_K \Gamma$. Since in the relevant regime, that is for $|y| = O(\alpha^{-1})$, we have $\widetilde w_{P,y} \approx y\nabla \varphi \in \text{Ran}(\Pi_0)$, the Weyl operator $W(\alpha \widetilde w_{P,y})$ can be effectively replaced by the Gaussian factor $e^{-\lambda \alpha^2 y^2}$ (recall that $\mathscr G_\gamma^0$ is the vacuum in $\mathcal F_0$). Technically we first apply Lemma \ref{lem:X:gamma:W:chi} to replace the Weyl operator by the weight function $n_{0,1}(y)$, and then Lemma \ref{lem: Gaussian lemma} to replace the weight function by $e^{- \lambda \alpha^2 y^2 }$. For that purpose, we need to show that the functions that multiply $n_{0,1}(y)$ are indeed bounded and integrable as needed in Lemma \ref{lem: Gaussian lemma} (e.g. in \eqref{eq: heuristic E1}, we have $g(y) = \langle \psi | T_y \psi \rangle_\2 \langle \gamma| \mathbb U_K \mathbb N_1 \mathbb U_K^\dagger \gamma\rangle_\Fock$, which satisfies these properties by Lemmas \ref{lemma: props_peks} and \ref{lem: bounds for the number operator}). In the leading order terms, we proceed by approximating $T_y \approx 1 $, $h_\cdot \approx h_{K,\cdot}$ and $u_\alpha \approx 1 $. Ignoring all errors resulting from these steps, this will lead to
\begin{align}
 \eqref{eq: heuristic E1} + \eqref{eq: heuristic G} + \eqref{eq: heuristic K} \approx \, \frac{1}{\alpha^2} \lsp \Gamma | \mathbb H_K \Gamma  \rsp_\Fock \int \D y \, e^{-\lambda \alpha^2 y^2 },
\end{align}
and similarly we show that the last factor is approximately given by the norm $\sno \mathcal S_P \Gamma \sno_\Fock^2 \approx (\tfrac{\pi}{\lambda \alpha^2})^{3/2}$. This explains the first part of the energy of order $\alpha^{-2}$ in \eqref{eq: main estimate 2}.

To compute \eqref{eq: heuristic E2}, we use the CCR, the fact that $\mathscr G_\Gamma^0$ coincides with the vacuum in $\mathcal F_0$, together with $(y,z)\mapsto \widetilde w_{P,y}^i(z)$ being symmetric and $w_{P,y}^0 \approx y\nabla \varphi$ for $|y|=O(\alpha^{-1})$. A careful analysis will show that
\begin{align}
\eqref{eq: heuristic E2} \approx  \int \D y\,  \lsp \psi \otimes \gamma  |  \langle h_\cdot | (y\nabla \varphi ) \rangle_\2 (y \nabla) W(\alpha \widetilde w_{P,y})  \psi \otimes \gamma \rsp_\h,
\end{align}
where we can replace again the Weyl operator by $e^{-\lambda \alpha^2 y^2 }$, and using integration by parts and \eqref{eq: optimal phonon mode}, one finds that
\begin{align}
  \eqref{eq: heuristic E2}  & \, \approx \, -\frac{1}{2}\int \D y\,  e^{-\lambda \alpha^2 y^2 } \sno y\nabla \varphi \sno^2_\2=-\frac{3}{2\alpha^2} \int \D y\, e^{-\lambda \alpha^2 y^2 }.\label{eq: -3/2 contribution}
\end{align}
This explains the energy contribution $-\tfrac{3}{2\alpha^2}$ in \eqref{eq: main estimate 2}.

\subsection{Norm of the trial state\label{sec: norm}}

In this section we derive an estimate for the inner product between two trial states. This gives a bound on the norm and provides the almost orthogonality of different states that is used to prove Lemma \ref{lem:linear:independence}.

\begin{proposition}\label{prop:overlap} Let $\mathcal V_K^{(n+1)} $ be defined as in \eqref{eq:Bogo:subspace}. For all $\varepsilon >0$ and $n\in \mathbb N_0$ there exist constants $C(n,\varepsilon) >0$ and $\alpha_0\ge 1$ such that 
\begin{align}
\bigg| \lsp \mathcal S_P \Gamma |\mathcal S_P \Xi \rsp_\Fock - \lsp \Gamma |  \Xi \rsp_\Fock \bigg(\frac{\pi}{\lambda  \alpha^2}\bigg)^{3/2}  \bigg| & \, \le\,   C(n,\varepsilon)  \sqrt K \alpha^{-4 + \varepsilon} 
\end{align}
for all normalized $\Gamma,\Xi \in \mathcal V_K^{(n+1)}$, $|P| \le \sqrt{2M^{\rm LP}} \alpha $, $K\ge K_0$ and $\alpha \ge \alpha_0$.
\end{proposition}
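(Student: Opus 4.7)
The starting point is an integral representation of the overlap analogous to Lemma~\ref{lem: energy identity} but without the Hamiltonian in the middle. Performing the same manipulations as in the proof of \cite[Lemma~3.1]{MMS22} — expanding the double integral defining $\mathcal S_P\Gamma$ and $\mathcal S_P\Xi$, changing variables $y = x'-x$, and using the Weyl composition rule together with $e^{iP_f y}W(\alpha\varphi_P)e^{-iP_f y} = W(\alpha\,e^{-y\nabla}\varphi_P)$ — yields
\begin{align*}
\langle \mathcal S_P \Gamma \,|\, \mathcal S_P \Xi \rangle_{\Fock}
\,=\, \int \D y\,\bigl\langle \mathscr G^{0}_\Gamma - \tfrac{1}{\alpha}\mathscr G^{1}_\Gamma \,\big|\, T_y\, e^{A_{P,y}}\, W(\alpha w_{P,y})\bigl(\mathscr G^{0}_\Xi - \tfrac{1}{\alpha}\mathscr G^{1}_\Xi\bigr)\bigr\rangle_{\h}.
\end{align*}
We split the integrand according to the four combinations of $\mathscr G^{0}$ and $\mathscr G^{1}$. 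The $(\mathscr G^{0},\mathscr G^{0})$ contribution $I_{00}$ carries the leading term of order $\alpha^{-3}$, while the cross and $(\mathscr G^{1},\mathscr G^{1})$ terms, being prefactored by $\alpha^{-1}$ and $\alpha^{-2}$ respectively, are expected to be of size $O(\alpha^{-4})$ after integration.

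\textbf{Main term.} For $I_{00}$, introduce $\gamma := \mathbb U_K \Gamma$ and $\xi := \mathbb U_K \Xi$, which lie in $\mathcal F_1^{(\le \mathfrak m)}$ by \eqref{eq:Fockspace:m:1}. Since $\mathbb U_K$ acts only on the Fock factor, and $\mathbb U_K W(\alpha w_{P,y})\mathbb U_K^\dagger = W(\alpha \widetilde w_{P,y})$ by Lemma~\ref{lem: U W U transformation}, the term becomes
\begin{align*}
I_{00} \,=\, \int \D y\, \langle \psi | T_y \psi\rangle_{\2}\,\bigl\langle \gamma \,\big|\, \bigl(\mathbb U_K e^{A_{P,y}}\mathbb U_K^\dagger\bigr)\, W(\alpha \widetilde w_{P,y})\,\xi\bigr\rangle_{\Fock}.
\end{align*}
We remove $e^{A_{P,y}}$ perturbatively: since $A_{P,y} = iP_f y + ig_P(y)$ with $|g_P(y)| \le C|P||y|^3$, and using that the $P_f$-contribution of the shifted vacuum $W(\alpha w_{P,y})\mathscr G^{0}_\Xi$ is controlled by $\|P_f\Xi\|_{\Fock} + C\alpha\|\nabla w_{P,y}\|_{\2}\|(\mathbb N+1)^{1/2}\Xi\|_{\Fock}$ with $\|\nabla w_{P,y}\|_{\2} = O(|y|)$, together with the uniform bound $\|P_f \Xi\|_{\Fock} \le C(n)\sqrt K$ of Lemma~\ref{lem: bounds for P_f:new}, one obtains an error of size $C(n)(\sqrt K\,|y| + \alpha|y|^2 + |P||y|^3)$. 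Next, Lemma~\ref{lem:X:gamma:W:chi} (with $\mathbf X = T_{-y}$, $\mathbf Y = 1$) replaces the Weyl operator by $n_{0,1}(y)$ up to an $O(\alpha^{-1}\mathfrak p_\alpha(y)n_{0,1}(y))$ remainder, Lemma~\ref{lem: Gaussian lemma} replaces $n_{0,1}(y)$ by $e^{-\lambda \alpha^2 y^2}$ modulo an $\alpha^\varepsilon$-loss, and the expansion $\langle\psi|T_y\psi\rangle_{\2} = 1 + O(y^2)$ from Lemma~\ref{lemma: props_peks} produces
\begin{align*}
I_{00} \,=\, \langle \Gamma | \Xi\rangle_{\Fock}\,\bigl(\pi/(\lambda\alpha^2)\bigr)^{3/2} + O\bigl(\sqrt K\, \alpha^{-4+\varepsilon}\bigr),
\end{align*}
where the evaluation of $\int |y|^k e^{-\lambda\alpha^2 y^2}\D y = O(\alpha^{-3-k})$ shows that the dominant error comes from the $\sqrt K|y|$ contribution integrated against the Gaussian.

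\textbf{Subleading terms and main obstacle.} For the cross and $(\mathscr G^{1},\mathscr G^{1})$ contributions, the bound $\|\mathscr G^{1}_\Gamma\|_{\h} \le C(n)$ follows from $\|u_\alpha\|_{\su} \le 1$, boundedness of $R$ on $\textnormal{Ran}(\QQ)$, the commutator method (Lemma~\ref{lem: LY CM} and Corollary~\ref{cor: X h Y bounds}) applied to $\phi(h^1_{K,\cdot})\psi$, and Corollary~\ref{Cor: Upsilon estimates}. The decay in $y$ required to make Cauchy--Schwarz effective is obtained by Bogoliubov-conjugating as in the leading term and applying Lemma~\ref{lem:error:terms:e-kappaN} together with Lemma~\ref{lem: exp N bound} on the event $\{\mathbb N \le 2\alpha^\delta\}$, while the tail $\mathbbm{1}(\mathbb N > \alpha^\delta)\Gamma$ is absorbed using \eqref{eq: exp bound for tail}. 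Corollary~\ref{cor: Gaussian for errors} then bounds each such term by $C(n,\varepsilon)\alpha^{-4+\varepsilon}$. The principal technical obstacle is the removal of $e^{A_{P,y}}$: the Bogoliubov transformation $\mathbb U_K$ does \emph{not} preserve $P_f$, so the exponential cannot be diagonalized jointly with the Weyl operator and must instead be expanded perturbatively. This forces the appearance of $\|P_f\Gamma\|_{\Fock}$, and it is precisely to guarantee the finiteness of this quantity that the $K$-cutoff was introduced in the definition of $\mathbb H_K$; the $\sqrt K$ appearing in the statement is the price paid for this step and will later be balanced against $K^{-1/2}$-errors by the choice $K = \alpha$ in the proof of Theorem~\ref{theorem: main estimate}.
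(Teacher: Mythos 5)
Your overall architecture matches the paper's: expand the overlap into the four $(\mathscr G^0,\mathscr G^0)$, $(\mathscr G^0,\mathscr G^1)$, $(\mathscr G^1,\mathscr G^0)$, $(\mathscr G^1,\mathscr G^1)$ contributions (the paper's $\mathcal N_0$, $\mathcal N_{1a}$, $\mathcal N_{1b}$, $\mathcal N_2$), conjugate by $\mathbb U_K$ so the Weyl operator becomes $W(\alpha\widetilde w_{P,y})$, extract $n_{0,1}(y)$ via Lemma~\ref{lem:X:gamma:W:chi}, then pass to the Gaussian via Lemma~\ref{lem: Gaussian lemma} and use $|H(y)-1|\le Cy^2$. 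All of this is correct and is how the paper proceeds.

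There is, however, a gap in your treatment of the $e^{A_{P,y}}$-correction in the main term. You propose to bound the $P_f$-content of $W(\alpha w_{P,y})\mathscr G^0_\Xi$ by commuting $P_f$ through the Weyl operator, producing the polynomial bound $C(n)(\sqrt K\,|y| + \alpha|y|^2 + |P||y|^3)$. But after Cauchy--Schwarz this leaves you with an integrand $H(y)\cdot(\text{polynomial in }y)$ carrying \emph{no} residual Gaussian or weight-function decay in $y$ — the Weyl operator is unitary, so once you have moved it into the norm you are estimating, that factor contributes nothing to the $y$-localization. The resulting $y$-integral is then $O(\sqrt K) + O(\alpha)$, which is far too large. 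The paper avoids this by applying $(e^{-A_{P,y}}-1)$ to $\Gamma^<$ rather than to the Weyl-shifted state, and crucially by inserting $\mathbbm 1 = e^{\kappa\mathbb N}e^{-\kappa\mathbb N}$ before the Cauchy--Schwarz so that the Weyl operator stays on its own side and Lemma~\ref{lem:error:terms:e-kappaN} delivers the $\mathfrak p_\alpha(y)\,n_{\delta,\eta}(y)$ factor. With the weight in place, $\|(e^{-A_{P,y}}-1)\Gamma^<\|\le |y|\|P_f\Gamma\| + |g_P(y)|\le C(n)\sqrt K |y| + C\alpha|y|^3$ integrates against $n_{\delta,\eta}(y)$ to $C(n)\sqrt K\alpha^{-4+\varepsilon}$ via Corollary~\ref{cor: Gaussian for errors}. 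You do describe this mechanism in your discussion of the subleading terms, but for the main term you apply a different and insufficient split. A secondary point: conjugating $P_f$ through $W(\alpha w_{P,y})$ introduces the $\alpha|y|^2$ term (and also a $\langle w_{P,y}|i\nabla w_{P,y}\rangle$ c-number you do not mention), none of which appear in the paper; applying $A_{P,y}$ to $\Gamma$ directly gives only $\sqrt K|y|+\alpha|y|^3$ and is cleaner. Once you route the main-term estimate through the same $e^{\kappa\mathbb N}e^{-\kappa\mathbb N}$ insertion you use for the cross terms, the argument closes.
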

\begin{proof} Let $\mathcal N :=  \lsp \mathcal S_P \Gamma |\mathcal S_P \Xi \rsp_\Fock$. Following the argument of \cite[Proof of Lemma 3.1]{MMS22}, one verifies that $\mathcal N   =  \mathcal N_{0} + \mathcal N_{1a} + \mathcal N_{1b}  + \mathcal N_2$ with
\begin{subequations}
\begin{align}
\mathcal N_0 & \, =\,   \int \D y\, \lsp \mathscr G_{\Gamma}^{0} \I  T_y e^{ A_{P,y}} W(\alpha w_{P,y}) \mathscr  G_{\Xi}^{0} \rsp_\h \label{eq: norm line 1}\\
\mathcal N_{1a} & \, =\,  - \frac{1}{\alpha}  \int \D y\,  \lsp \mathscr  G_{\Gamma }^{0}  \I  T_y e^{ A_{P,y}  } W(\alpha w_{P,y}) \mathscr G_{\Xi}^{1}  \rsp_\h   \label{eq: norm line 2}\\
\mathcal N_{1b} & \, =\,  - \frac{1}{\alpha}  \int \D y\,    \lsp  T_y e^{ A_{P,y}  } W(\alpha w_{P,y}) \mathscr G_{\Gamma}^{1}  \I \mathscr  G_{ \Xi }^{0}   \rsp_\h  \label{eq: norm line 2b}\\
\mathcal N_{2}  & \, =\,   \frac{1}{\alpha^2}  \int \D y\, \lsp \mathscr  G_{\Gamma}^{1}  \I T_y  e^{ A_{P,y}  } W(\alpha w_{P,y})\mathscr   G_{\Xi }^{1} \rsp_\h \label{eq: norm line 3}
\end{align}
\end{subequations}
with $A_{P,y}$ defined in $\eqref{eq: definition AP}$ and $T_y$ denoting the shift operator \eqref{eq:shift:operator}.\medskip

\noindent \underline{Term $\mathcal N_0$.} This part contains the contribution $\langle \Gamma | \Xi \rangle_\Fock (\frac{\pi}{\lambda  \alpha^2})^{3/2}$. With $H$ defined in \eqref{eq: definition of H}, let us write
\begin{align}
 \mathcal N_0 & \, = \, \int \D y\,  H(y) \lsp   \Gamma \I   W(\alpha w_{P,y})    \Xi \rsp_\Fock \notag  \\
&\quad \, +\,  \int \D y\, H(y)  \lsp \Gamma \I (e^{A_{P,y}}-1) W(\alpha w_{P,y})  \Xi \rsp_\Fock = \mathcal N_{01} + \mathcal N_{02}.
\end{align}
In the first term we insert $\Gamma  = \mathbb U_K^\dagger  \gamma $, $\Xi = \mathbb U_K^\dagger \xi$ with $\gamma,\xi \in \mathcal F^{(\le \mathfrak m)}_1$, and apply \eqref{eq: transformation of Weyl} to transform the Weyl operator with the Bogoliubov transformation. This gives
\begin{align}\label{eq: identity for tilde w transformation}
\mathbb U_K W( \alpha w_{P,y} ) \mathbb U_K^\dagger & \, = \,  W( \alpha \widetilde w_{P,y})
\end{align}
with $\widetilde w_{P,y}$ defined in \eqref{eq: def of tilde w}. Applying Lemma \ref{lem:X:gamma:W:chi} for $\mathbf X  = \mathbf Y  = 1$ leads to the bound
\begin{align}\label{eq:bound N01 new}
\bigg|\, \mathcal N_{01} -  \lsp \gamma| \xi
 \rsp_\Fock \int \D y\, H(y)   \, n_{0,1}(y)  \bigg| \le \, \alpha^{-1} \int \D y\, H(y)\, \mathfrak p_\alpha (y) \, n_{0,1}(y)  .
\end{align}
Since $\sno H\sno_\1 + \sno H\sno_\su \le C $, cf. Lemma \ref{lemma: props_peks}, we can apply Lemma \ref{lem: Gaussian lemma} in order to replace the weight function $n_{0,1}(y)$ in the term containing $\langle \gamma| \xi \rangle_\Fock$ by the Gaussian $e^{-\lambda \alpha^2 y^2}$. More precisely
\begin{align}
\bigg | \int \D y\, H(y) n_{0,1}(y) - \int \D y\, H(y) e^{- \lambda \alpha^2 y^2 } \bigg| \, \le\,  C\alpha^{-4}
\end{align}
for all $|P| \le c \alpha $ and $K,\alpha$ large enough. Next we use $| H(y)-1| \le C y^2$ to obtain
\begin{align}
\bigg| \int \D y\, H(y) n_{0,1}(y) - \bigg( \frac{\pi}{\lambda \alpha^2 }\bigg)^{3/2}\bigg|\, \le \, C\alpha^{-4}.
\end{align}
For the remainder term in \eqref{eq:bound N01 new}, we recall $\mathfrak p_\alpha (y) = 1 + (\alpha |y|)^{4\mathfrak m + 8 }  $ and apply  Corollary \ref{cor: Gaussian for errors}. This leads to
\begin{align}
\bigg| \, \mathcal N_{01} - \lsp \gamma| \xi \rsp_\Fock \bigg( \frac{\pi}{\lambda \alpha^2 }\bigg)^{3/2}	\bigg| \le C \alpha^{-4}.
\end{align}

To treat $\mathcal N_{02}$ it is convenient to decompose the state $ \Gamma$ into a part with bounded particle number and a remainder. To this end, we choose $\delta>0$ small (but fixed w.r.t. $\alpha$) and write
\begin{align} \label{eq: decomposition of Upsilon}
\Gamma  \, = \,  \Gamma^{<}  +  \Gamma^{>} \, = \,  \mathbbm{1}(\mathbb N \le \alpha^\delta )  \Gamma  +  \mathbbm{1}(\mathbb N > \alpha^\delta ) \Gamma .
\end{align}
Inserting this into $\mathcal N_{02}$ and using unitarity of $e^{A_{P,y}}$ and $\sno H\sno_\1\le C$, we can estimate 
\begin{align}
| \mathcal N_{02}|\, & \le\, \int \D y\, H(y)   \I \lsp  \Gamma^{ <} \I (e^{A_{P,y}} -1 ) W(\alpha w_{P,y}) \Xi \rsp_\Fock \I + C \sno \Gamma ^>\sno_\Fock.
\end{align}
By Corollary \ref{Cor: Upsilon estimates} for $j=0$, $\sno \Gamma ^>\sno_\Fock \le C_\delta \, \alpha^{-10}$. In the remaining expression we use $\Xi = \mathbb U_K^\dagger \xi$ and \eqref{eq: identity for tilde w transformation},
\begin{align}\label{eq: inserting identity}
\lsp  \Gamma^{ <} \I (e^{A_{P,y}} -1 ) W(\alpha w_{P,y})  \Xi \rsp_\Fock \, = \,  \lsp  \Gamma^< \I (e^{A_{P,y}}-1) \mathbb U_K^\dagger W(\alpha \widetilde w_{P,y}) \xi  \rsp_\Fock ,
\end{align}
and then insert the identity
\begin{align} \label{eq: number op identity}
\mathbbm{1} \, = \, e^{\kappa \mathbb N } e^{- \kappa \mathbb N } \quad \text{with }\quad \kappa \, = \,  \frac{1}{16 e b \alpha^\delta}
\end{align}
on the left of the Weyl operator (here $b>0$ is the constant from Lemma \ref{lem: bounds for the number operator}). With Cauchy--Schwarz, this leads to
\begin{align} \label{eq: bound for the norm error}
& \I \lsp  \Gamma^{ <} \I (e^{A_{P,y}} -1 ) W(\alpha w_{P,y}) \Xi \rsp_\Fock \I  \notag\\[1mm]
& \hspace{2.5cm} \le \, \sno e^{\kappa \mathbb N} \mathbb U_K (e^{-A_{P,y}}-1) \Gamma^< \sno_\Fock  \sno e^{-\kappa \mathbb N } W(\alpha \widetilde w_{P,y})  \xi \sno_\Fock.
\end{align}
In the second factor we can employ Lemma \ref{lem:error:terms:e-kappaN},
\begin{align}
\sno e^{-\kappa \mathbb N } W(\alpha \widetilde w_{P,y})  \xi \sno_\Fock \le C \,  \mathfrak p_\alpha (y)\, n_{\delta,\eta}(y)\quad \forall y \in \mathbb R^3
\end{align}
for some $\alpha$-independent $\eta > 0 $ and $\alpha$ large enough. To estimate the first factor in \eqref{eq: bound for the norm error}, we apply Lemma \ref{lem: exp N bound} (note that $(e^{A_{P,y}} - 1) \Gamma^<  \in \text{Ran}(\mathbbm{1} (\mathbb N\le  \alpha ^\delta ))$) to obtain
\begin{align}
\sno e^{\kappa \mathbb N} \mathbb U_K (e^{ - A_{P,y}}-1)  \Gamma^< \sno_\Fock \le \sqrt 2 \sno (e^{ - A_{P,y}}-1) \Gamma \sno_\Fock.
\end{align}
On the right side we proceed with the functional calculus for self-adjoint operators
\begin{align}
\sno (e^{- A_{P,y} }-1) \Gamma \sno_\Fock  \le \sno A_{P,y} \Gamma \sno_\Fock  & \le |y|\, \sno P_f \Gamma \sno_\Fock +  |g_{P}(y)| \le  C \big( C(n) \sqrt K |y|  +  \alpha|y|^3 \big),
\end{align}
where we applied Lemma \ref{lem: bounds for P_f:new} in the second step and also used
\begin{align}\label{eq: bound for g_P}
|g_{P}(y)|\, \le  \, C \alpha |y|^3,
\end{align}
which is inferred from \eqref{eq: definition AP} using $\sno \Delta \varphi\sno_\2< \infty$. Returning to \eqref{eq: bound for the norm error} we have shown that 
\begin{align}
|\mathcal N_{02} | \, \le\, C \int \D y \, H(y)  \big( C(n) \sqrt{K} |y| +  \alpha |y|^3 \big) \mathfrak p_\alpha(y)\, n_{\delta , \eta} (y) + C_\delta\, \alpha^{-10} ,
\end{align}
and hence we are in a position to apply Corollary \ref{cor: Gaussian for errors}. This implies for all $K,\alpha$ large
\begin{align}\label{eq: example error bound}
|\mathcal N_{02} | \, & \le \,  C (n) \sqrt K \alpha^{-4 +   (12 + 4 \mathfrak m) \delta}  + C_\delta\, \alpha^{-10},
\end{align}
where we used that  $\sno |\cdot |^n H\sno_\1 + \sno H\sno_\su \le C(n) $. Note that the largest error comes from the term $C(n) \sqrt K\int H(y) |y| (\alpha |y|)^{4\mathfrak m + 8} n_{\delta , \eta} (y) $, which explains the factor $\alpha^{-4 + (12 + 4 \mathfrak m)\delta}$.

\medskip
 
\noindent \underline{Terms $\mathcal  N_{1a}$ and $\mathcal N_{1b}$}. We start with $\mathcal N_{1a}$ by inserting \eqref{eq: alternative definition of G} for $\mathscr G_\Gamma^0$, $\mathscr G_\Xi^1$ in \eqref{eq: norm line 2}. Since the Weyl operator commutes with $u_\alpha$, $R$ and $\PP = |\psi \rangle \langle \psi | $, we can apply \eqref{eq: W phi W} to obtain 
\begin{align}\label{eq: identity for WG1}
W(\alpha  w_{P,y} ) \mathscr G_\Xi^1 \, =\,  u_\alpha R \big( \phi(h_{K,\cdot}^1) + 2 \alpha \langle h_{K,\cdot} | \re ( w_{P,y}^1 ) \rangle_\2 \big) \PP W(\alpha  w_{P,y}) \mathscr G_\Xi^0,
\end{align}
where we used that $h_{K,x}$ is real-valued. Note that $\langle h_{K,\cdot} | \re( w_{P,y}^1 ) \rangle_\2 $ is a $y$-dependent multiplication operator in the electron variable. With $( T_y e^{A_{P,y}})^\dagger =  T_{-y} e^{-A_{P,y}}$ and \eqref{eq: decomposition of Upsilon}, we can thus write
\begin{align}
\mathcal N_{1a}   & \, = \, - \frac{2}{\alpha} \int \D y \, \re \lsp R_{1,y} \psi \otimes \big( \Gamma^< + \Gamma^> \big) \I W(\alpha  w_{P,y})  \mathscr G_\Xi^0 \rsp_\h = \mathcal N_{1a}^< + \mathcal N_{1a}^>,\label{eq: N11 plus N12}
\end{align}
where we introduced the operator $R_{1,y} = R_{1,y}^{(1)} + R_{1,y}^{(2)} $ with
\begin{subequations}
\begin{align}\label{eq: R_1y^1}
R_{1,y}^{(1)} & \, = \, \PP \phi(h_{K,\cdot}^1) R u_\alpha T_{-y} \PP e^{-A_{P,y}} ,\\[1mm]
R_{1,y}^{(2)} & \, = \, 2\alpha  \PP \lsp h_{K,\cdot} | \re (w_{P,y}^1) \rsp_\2  R u_\alpha T_{-y} \PP e^{-A_{P,y}} .\label{eq: R_1y^2}
\end{align}
\end{subequations}
As shown in \cite[Eq. (3.85)]{MMS22} this operator satisfies
\begin{align}
\sno R_{1,y} \Psi \sno_\h  & \, \le\, C  \sno u_\alpha T_{-y} \PP  \sno_{\op} (1+ \alpha y^2) \sno   (\mathbb N+1)^{1/2} \Psi \sno_\h ,\quad  \Psi \in \mathscr H,
  \label{eq: bound for R_1}
\end{align}
and since $\psi(x)$ decays exponentially for large $|x|$, the function $f_{\alpha}(y) := \sno u_\alpha T_{-y} \PP \sno_{\op}   $ satisfies
\begin{align}\label{eq: bound for f alpha}
\sno  |\cdot |^n f_{\alpha} \sno_\1 \, \le \,  \int \D y\, |y|^n \bigg( \int \D x\, \psi(x+y)^2 u_\alpha(x)^2 \bigg)^{1/2} \, \le\,  C(n) \alpha^{3+n}\ \ \text{for all}\ n \in \mathbb N_0.
\end{align}
With this at hand we can estimate the part with the tail by invoking Corollary \ref{Cor: Upsilon estimates}
\begin{align}
| \mathcal N_{1a}^> | \, \le \, \frac{C}{\alpha}  \sno (\mathbb N+1)^{1/2} \Gamma^> \sno_\Fock   \int \D y\,  f_{\alpha} (y) (1+\alpha y^2) \, \le \, C_\delta\, \alpha^{-5} .
\end{align}
To estimate the first term in \eqref{eq: N11 plus N12}, we proceed similarly as in the bound for $\mathcal N_{02}$. We insert the identity \eqref{eq: number op identity}, apply Cauchy--Schwarz and employ $\Xi = \mathbb U_K^\dagger \xi$ together with Lemma \ref{lem:error:terms:e-kappaN}. This leads to
\begin{align}
\vert \mathcal N_{1a}^< \vert &\,  \le\,  \frac{1}{\alpha}  \int \D y \,
\sno e^{\kappa \mathbb N} \mathbb U^{\tc}_K ( e^{- A_{P,y}}  R_{1,y} \psi \otimes \Gamma^< )\sno_\h \, \sno e^{- \kappa \mathbb N } W(\alpha \widetilde w_{P,y}) \xi  \sno_{\Fock} \nonumber \\
& \, \le\,  \frac{1}{\alpha}  \int \D y \,  \sno e^{\kappa \mathbb N} \mathbb U^{\tc}_K ( e^{ - A_{P,y}} R_{1,y}  \psi \otimes  \Gamma^< ) \sno_\h \, \mathfrak p_\alpha(y)\, n_{\delta,\eta}(y) .
\end{align}
In the remaining norm we use the fact that $e^{- A_{P,y}} R_{1,y}$ changes the number of phonons at most by one, and thus we can apply Lemma \ref{lem: exp N bound} and \eqref{eq: bound for R_1} together with \eqref{eq: bound for Y<}, to get
\begin{align}
\sno e^{\kappa \mathbb N} \mathbb U^{\tc}_K  (e^{-A_{P,y}}  R_{1,y} \psi \otimes \Gamma^< ) \sno_\Fock \, \le \, \sqrt 2 \sno  R_{1,y} \psi \otimes \Gamma^< \sno_\Fock \,\le \, C f_{\alpha} (y) \big( 1+ \alpha y^2\big) .
\end{align}
With Corollary \ref{cor: Gaussian for errors}, \eqref{eq: bound for f alpha} and $\sno f_\alpha  \sno_\su \leq 1 $, this leads to the bound
\begin{align}
|\mathcal N_{1a}^<  | & \, \le \, \frac{C}{\alpha} \int \D y\,  f_{\alpha} (y) \big( 1 + \alpha y^2 \big) \, \mathfrak p_\alpha(y)\, n_{\delta,\eta}(y)
\,  \le\, C \alpha^{-4 + (11 + 4 \mathfrak m )\delta} . 
\end{align}
The term $\mathcal N_{1b}$ can be estimated in the same way as $\mathcal N_{1a}$, exchanging the roles of $\Gamma$ and $\Xi$.\medskip

\noindent \underline{Term $\mathcal  N_{2}$}. The strategy for this term is similar to the one for $\mathcal N_{1a}$. Proceeding as explained before \eqref{eq: N11 plus N12}, one obtains
\begin{align}
\mathcal  N_2& \, = \, \frac{1}{\alpha^2} \int \D y \, \lsp  R_{2,y} \psi \otimes \big( \Gamma^< + \Gamma^> \big) \I   W(\alpha w_{ P ,y} )  \mathscr G_\Xi^0  \rsp_\h \,  = \,  \mathcal  N_{2}^< + \mathcal  N_{2}^> \label{eq: Norm term N2}
\end{align}
with $R_{2,y} = R_{2,y}^{(1)} + R_{2,y}^{(2)}$,
\begin{subequations}
\begin{align}
 R_{2,y}^{(1)} &\,  =\,  \PP \phi(h^1_{K,\cdot}) R e^{-A_{P,y}} u_\alpha T_{-y} u_\alpha R \phi(h^1_{K,\cdot}) \PP , \\[1mm]
 R_{2,y}^{(2)} & \, = \, 2 \alpha \PP \langle h_{K,\cdot } | \re (w^1_{P,y})\rangle_\2 R e^{-A_{P,y}} u_\alpha T_{-y} u_\alpha R \phi(h^1_{K,\cdot}) \PP .
\end{align}
\end{subequations}
It follows in close analogy as for $R_{1,y}$ in \eqref{eq: R_1y^1}--\eqref{eq: R_1y^2} that
\begin{align}
\sno  R_{2,y}  \Psi \sno_\h & \, \le \, C \sno u_\alpha T_{-y} u_\alpha \sno_{\op} (1+\alpha y^2)  \sno (\mathbb N+1)  \Psi \sno_\h, \quad \Psi \in \mathscr H,
\end{align}
and since $\sno u_\alpha T_{-y} u_\alpha \sno_{\op} \le \mathbbm{1} (|y|\le 4\alpha)$, we can use Corollary \ref{Cor: Upsilon estimates} to estimate
\begin{align}
|\mathcal  N_{2}^> | \, \le \, \frac{C}{\alpha^2} \sno (\mathbb N+1) \Gamma^> \sno_{\Fock} \int \D y \, \mathbbm{1}(|y|\le 4 \alpha) ( 1 + \alpha y^2 ) \,  \le\,  C_\delta\, \alpha^{-6 }.
\end{align}
To bound the first term in \eqref{eq: Norm term N2} we proceed similarly as for $\mathcal N_{01}$, that is
\begin{align}
|\mathcal  N_{2}^< | & \le  \alpha^{-2} \int \D y\, \sno e^{\kappa \mathbb N } \mathbb U_K (R_{2,y} \psi \otimes \Gamma^< ) \sno_\Fock \, \sno e^{-\kappa \mathbb N } W(\alpha \widetilde w_{P,y}) \xi \sno_\Fock \notag \\
& \le  \frac{C}{\alpha^2} \int \D y\, \mathbbm{1} (|y|\le 4\alpha) (1 + \alpha y^2  ) \, \mathfrak p_\alpha(y)\, n_{ \delta, \eta }(y) \le C \alpha^{-5 + (12 + 4 \mathfrak m ) \delta},
\end{align}
where the last step follows again from Corollary \ref{cor: Gaussian for errors}.

Collecting all relevant estimates and choosing $\delta>0$ small enough completes the proof of the proposition.
\end{proof}

\subsection{Energy contribution $\mathcal E^\Gamma$}

In this section we prove an estimate for the energy contribution $\mathcal E^\Gamma$ defined in \eqref{eq: E}.

\begin{proposition} \label{prop: bound for E} Let $\mathcal V_K^{(n+1)}$ be defined as in \eqref{eq:Bogo:subspace} and $\mathbb N_1$ be defined by \eqref{eq: Bogoliubov Hamiltonian maintext}. For all $\varepsilon >0$ and $n\in \mathbb N_0$ there exist constants $C(n,\varepsilon) >0$ and $\alpha_0\ge 1$ such that 
\begin{align}
\bigg|\, \mathcal E^{\Gamma} - \frac{1}{\alpha^2}  \bigg( \lsp \Gamma | \mathbb N_1 \Gamma \rsp_\Fock - \frac{3}{2} \bigg)\, \sno \mathcal S_P \Gamma\sno_\Fock^2 \,  \bigg| \,  \le \,   C(n, \varepsilon )   \sqrt K \alpha^{-6 + \varepsilon} 
\end{align}
for all normalized $\Gamma \in \mathcal V_K^{(n+1)}$, $|P| \le \sqrt{2M^{\rm LP}}\alpha $, $K\ge K_0$ and $\alpha \ge \alpha_0$.
\end{proposition}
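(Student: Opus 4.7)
\bigskip

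\noindent\textbf{Plan for the proof.} The strategy is to follow the heuristic scheme of Section~\ref{Sec: proof guide}. Splitting $\widetilde H_{\alpha,P} = h^{\rm Pek} + \alpha^{-2}\mathbb N + \alpha^{-1}\phi(h_\cdot + \varphi_P)$, the integrand of $\mathcal E^\Gamma$ splits accordingly as $\mathcal E^\Gamma = \mathcal E_1^\Gamma + \mathcal E_2^\Gamma + \mathcal E_3^\Gamma$. The Pekar piece vanishes identically: since $h^{\rm Pek}$ acts only on the electron variable whereas $e^{A_{P,y}}$ and $W(\alpha w_{P,y})$ are Fock-space operators, the inner product factorizes as
\begin{equation*}
\mathcal E_1^\Gamma \,=\, \int \D y\, \lsp\psi \I h^{\rm Pek} T_y \psi\rsp_\2 \, \lsp\Gamma \I e^{A_{P,y}} W(\alpha w_{P,y})\Gamma\rsp_\Fock,
\end{equation*}
and the electron factor is zero by self-adjointness of $h^{\rm Pek}$ together with $h^{\rm Pek}\psi = 0$.

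For $\mathcal E_2^\Gamma$ I would factor out the electron overlap $H(y) = \lsp\psi \I T_y\psi\rsp_\2$, pass to $\Gamma = \mathbb U_K^\dagger \gamma$ with $\gamma \in \mathcal F_1^{(\le \mathfrak m)}$, and use the identity $\mathbb U_K W(\alpha w_{P,y}) \mathbb U_K^\dagger = W(\alpha \widetilde w_{P,y})$. Since $\mathbb U_K \mathbb N \mathbb U_K^\dagger$ is at most quadratic in creation and annihilation operators (Lemma~\ref{lem: U W U transformation}), Lemma~\ref{lem:X:gamma:W:chi} applied with $\mathbf X = \mathbb U_K\mathbb N\mathbb U_K^\dagger$ and $\mathbf Y = 1$ replaces the Weyl operator by the weight $n_{0,1}(y)$, and Lemma~\ref{lem: Gaussian lemma} then replaces $n_{0,1}(y)$ by the Gaussian $e^{-\lambda\alpha^2 y^2}$. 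Together with the identity $\lsp\gamma \I \mathbb U_K\mathbb N\mathbb U_K^\dagger\gamma\rsp_\Fock = \lsp\Gamma \I \mathbb N_1 \Gamma\rsp_\Fock$ (valid since $\Gamma \in \mathcal F_1$), the bound $|H(y) - 1| \le C y^2$ of Lemma~\ref{lemma: props_peks}, and the norm formula of Proposition~\ref{prop:overlap}, this isolates the leading contribution $\alpha^{-2}\lsp\Gamma \I \mathbb N_1 \Gamma\rsp_\Fock \sno\mathcal S_P\Gamma\sno_\Fock^2$. The cost of replacing $e^{A_{P,y}}$ by $1$ is controlled exactly as in the treatment of $\mathcal N_{02}$ in Proposition~\ref{prop:overlap}, via the splitting $\Gamma = \Gamma^< + \Gamma^>$, the bound $\sno P_f\Gamma\sno_\Fock \le C(n)\sqrt K$ of Lemma~\ref{lem: bounds for P_f:new}, Lemmas~\ref{lem: exp N bound}, \ref{lem:error:terms:e-kappaN} and Corollary~\ref{Cor: Upsilon estimates}.

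The main work lies in $\mathcal E_3^\Gamma$, which has to produce the $-\tfrac{3}{2\alpha^2}$ shift. After again replacing $e^{A_{P,y}}$ by $1$, I would push the Weyl operator past $\phi(h_x + \varphi_P)$ via \eqref{eq: W phi W}, splitting the result into a classical $c$-number piece carrying the factor $2\alpha\,\re\lsp h_x + \varphi_P \I w_{P,y}\rsp_\2$, and an operator-valued remainder in which $\phi(h_x + \varphi_P)$ now sits to the right of $W(\alpha w_{P,y})$. In the classical piece, expanding $w_{P,y} = y\nabla\varphi + O(y^2)$ (Lemma~\ref{lem: bound for w1 and w0}), integrating by parts in the field variable via $\nabla_z h_x(z) = -\nabla_x h_x(z)$, and using the defining relation \eqref{eq: optimal phonon mode} for $\varphi$ together with $\lsp\varphi \I \partial_i\varphi\rsp_\2 = 0$, the electron average $\int \D x\, \overline{\psi(x)}\psi(x+y)\, 2\re\lsp h_x + \varphi \I y\nabla\varphi\rsp_\2$ collapses to $-\tfrac{1}{2}\sno y\nabla\varphi\sno_\2^2 + O(y^4) = -\lambda y^2 + O(y^4)$ by rotation invariance. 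Then the Gaussian moment $-\lambda \int y^2 e^{-\lambda\alpha^2 y^2}\D y = -\tfrac{3}{2\alpha^2}(\pi/\lambda\alpha^2)^{3/2}$ combined with Proposition~\ref{prop:overlap} reproduces exactly $-\tfrac{3}{2\alpha^2}\sno\mathcal S_P\Gamma\sno_\Fock^2$. The operator-valued remainder, after conjugation by $\mathbb U_K$, becomes $\phi(\underline{h_x + \varphi_P})$ acting on $W(\alpha \widetilde w_{P,y})\gamma$; exploiting that the $\mathcal F_0$-component of $\gamma$ is the vacuum, the inclusion $\varphi \in \textnormal{Ran}(\Pi_1)$ (a consequence of $\lsp\varphi\I\partial_i\varphi\rsp_\2 = 0$), and the parity properties in Remark~\ref{rem: symmetries_of_w}, one shows via Lemma~\ref{lem:X:gamma:W:chi} that every surviving term carries an additional factor $y$ or $\alpha^{-1}$ and is absorbed into the error.

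The main obstacle will be the control of this operator remainder in $\mathcal E_3^\Gamma$: since $h_x \notin L^2(\mathbb R^3)$, the naive estimates \eqref{eq: standard estimates for a and a*} are unavailable, and one has to resort to the Lieb--Yamazaki commutator method (Lemma~\ref{lem: LY CM} and Corollary~\ref{cor: X h Y bounds}). A secondary difficulty is the large-$\mathbb N$ tail of $\Gamma$, which is handled as in Proposition~\ref{prop:overlap}. The worst error arises from $\sno P_f\Gamma\sno_\Fock \lesssim \sqrt K$ multiplied by the first-moment Gaussian weight $\int |y|\, e^{-\lambda\alpha^2 y^2}\D y \sim \alpha^{-4}$ and by a small polynomial in $\alpha^\delta$ from $\mathfrak p_\alpha$; combined with the overall $\alpha^{-2}$ prefactor of $\mathcal E_2^\Gamma$ and $\mathcal E_3^\Gamma$, this yields the announced bound $\sqrt K \alpha^{-6+\varepsilon}$ upon choosing $\delta > 0$ sufficiently small.
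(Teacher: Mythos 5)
Your plan for the Pekar piece $\mathcal E_1^\Gamma = 0$ and for the number-operator piece ($\mathcal E_2^\Gamma$ in your labelling, $\mathcal E_1^\Gamma$ in the paper) follows the paper's argument in all essentials: conjugating with $\mathbb U_K$, applying Lemma~\ref{lem:X:gamma:W:chi} with $\mathbf X = \mathbb U_K\mathbb N_1\mathbb U_K^\dagger$, replacing the weight by the Gaussian, handling the $e^{A_{P,y}}-1$ correction via the split $\Gamma = \Gamma^<+\Gamma^>$, and invoking Lemma~\ref{lem: bounds for P_f:new}; this part is fine.

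The field-operator piece ($\mathcal E_3^\Gamma$), however, contains a genuine gap. When you commute $W(\alpha w_{P,y})$ past $\phi(h_x+\varphi_P)$ using \eqref{eq: W phi W}, the $c$-number you extract is $2\alpha\re\lsp h_x+\varphi_P|w_{P,y}\rsp_\2$, and the electron average of this quantity is not $-\tfrac12\sno y\nabla\varphi\sno_\2^2 + O(y^4)$ but rather $-\sno y\nabla\varphi\sno_\2^2 + O(|y|^3) = -2\lambda y^2 + O(|y|^3)$. Indeed, writing $l_y = H(y)\varphi + \lsp\psi|h_\cdot T_y\psi\rsp_\2$ one has
\begin{equation*}
\int\D x\,\psi(x)\psi(x+y)\,2\re\lsp h_x+\varphi\,|\,y\nabla\varphi\rsp_\2 = 2\lsp l_y|y\nabla\varphi\rsp_\2 = 2\lsp l_y^0|y\nabla\varphi\rsp_\2,
\end{equation*}
and since $l_y^0 = -\tfrac12\,y\nabla\varphi + O(y^2)$ this equals $-\sno y\nabla\varphi\sno_\2^2 + O(|y|^3)$. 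The missing $+\lambda y^2$ is supplied precisely by the operator-valued remainder which you dismiss as error: after $\mathbb U_K$-conjugation, the $\Pi_0$-creation piece $a^\dagger(h^0_x+\xi_P)$ acting on $\mathscr G_\gamma^0$ is paired with $\lsp\mathscr G_\gamma^0|W(\alpha w^0_{P,y})$, and commuting $a^\dagger$ through $e^{-a(\alpha w^0_{P,y})}$ produces the $c$-number $-\alpha\lsp w^0_{P,y}|h^0_x+\xi_P\rsp_\2$; with the $\alpha^{-1}$ prefactor and the electron integration this yields $-\lsp l_y^0|w^0_{0,y}\rsp_\2 \approx +\lambda y^2$, which is of exactly the same order as the $c$-number piece. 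In short, your two errors (a spurious factor 2 in the $c$-number and the incorrect dismissal of the remainder) cancel, so the final coefficient $-\tfrac{3}{2\alpha^2}$ comes out right, but the argument as written is not correct. This is also where the paper's route diverges from yours in a substantial way: rather than using the full Weyl shift \eqref{eq: W phi W}, it first reduces to $\phi(l_y)+\pi(j_y)$ and then employs the one-sided identity \eqref{eq:phi(g0):term}, which extracts $\lsp l_y^0|w^0_{P,y}\rsp_\2$ \emph{without} the factor 2, by exploiting that both bra and ket are the vacuum in $\mathcal F_0$ (so that $a^\dagger(g^0)$ on the bra and $a(g^0)$ on the ket are both killed). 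Either adopt that identity, or carry out the combination of $c$-number and $\Pi_0$-remainder carefully; as it stands the sketch would not survive scrutiny.

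A secondary point: your outline does not visibly address the control of the $\Pi_1$-part of $l_y$ (the paper's term involving $\phi(\underline{l_y^1})$), which requires the bound $\sno l_y^1\sno_\2 \le Cy^2$ from \eqref{eq:lys} to be negligible, nor the $\pi(j_y)$ contribution carrying the $\xi_P$-dependence, both of which must be shown to be $O(\alpha^{-6})$ after integration. These are straightforward given the lemmas in Section~\ref{sec:technical:lemmas} but should be spelled out.
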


\begin{proof}
Since $\mathscr G_{\Gamma}^0 = \psi \otimes \Gamma$, $ h^{\rm Pek} \psi = 0$ and $\mathbb N \Gamma = \mathbb N_1 \Gamma $, one finds
\begin{align}
\mathcal E^\Gamma & \,  =\,   \int \D y\,  \lsp \mathscr G_{\Gamma}^{0} | \big(\alpha^{-2} \mathbb N_1 + \alpha^{-1} \phi(h_\cdot + \varphi_P) \big) T_y e^{A_{P,y}} W(\alpha w_{P,y})  | \mathscr G_{\Gamma}^{ 0} \rsp_\h  \, =\,  \mathcal E_1^{\Gamma} + \mathcal E_2^{\Gamma},
\end{align}
where both terms provide contributions to the energy of order $\alpha^{-2}$.\medskip

\noindent \underline{Term $\mathcal E_1^{\Gamma}$}. Recall that $H(y) = \langle \psi | T_y \psi \rangle_\2 $ and use this to write
\begin{align}
\mathcal E_1^\Gamma & \, =\,  \frac{1}{\alpha^2} \int \D y\, H(y)  \lsp \Gamma | \mathbb N_1  W(\alpha w_{P,y})  \Gamma \rsp_\Fock  \notag \\
& \quad + \frac{1}{\alpha^{2}} \int \D y\, H(y)   \lsp \Gamma | \mathbb N_1 (e^{A_{P,y}} -1 ) W(\alpha w_{P,y})  \Gamma \rsp_\Fock \,  =\,  \mathcal E_{11}^\Gamma + \mathcal E_{12}^\Gamma. \label{eq: E11 + E12}
\end{align}
Also recall $\Gamma = \mathbb U_K^\dagger \gamma$ for some $\gamma \in  \mathcal F^{(\mathfrak m)}_1$. With \eqref{eq: identity for tilde w transformation} and writing
\begin{align}
 \lsp \Gamma | \mathbb N_1  W(\alpha w_{P,y})  \Gamma \rsp_\Fock  =  \lsp   \gamma | \mathbb U_K \mathbb N_1 \mathbb U_K^\dagger  W(\alpha \widetilde w_{P,y})  \gamma \rsp_\Fock 
\end{align}
we can apply Lemma \ref{lem:X:gamma:W:chi} with $\mathbf X = \mathbb U_K \mathbb N_1 \mathbb U_K^\dagger$ (this is a quadratic operator $\mathcal F_1 \to \mathcal F_1$) and $\mathbf Y = 1$. Since $\sno \mathbb U_K \mathbb N_1 \mathbb U_K^\dagger \gamma\sno_\Fock \le C$ by Lemma \ref{lem: bounds for the number operator}, it follows that
\begin{align}
\bigg| \mathcal E_{11}^\Gamma  -  \frac{1}{\alpha^2}  \lsp \Gamma |  \mathbb N_1  \Gamma \rsp_\Fock   \int \D y\, H(y)\, n_{0,1}(y)  \bigg| \, \le \,  \frac{1}{\alpha^3}  \int \D y \, H(y) \, \mathfrak p_\alpha(y)\, n_{0,1}(y)   .
\end{align}
For the error term we find as a direct consequence of Corollary \ref{cor: Gaussian for errors} that it is bounded by $C\alpha^{-6}$. 
The second term on the l.h.s. we call $\mathcal E_{111}^{\Gamma}$ and write it as  
\begin{align}
\mathcal E_{111}^\Gamma & \, =\,  \frac{1}{\alpha^2} \lsp \Gamma   | \mathbb N_1  \Gamma  \rsp_\Fock  \int \D y\, H(y) \, e^{- \lambda \alpha^2 y^2} \notag\\
& \quad +  \frac{1}{\alpha^2} \lsp \Gamma  | \mathbb N_1 \Gamma \rsp_\Fock \int \D y\, H(y) \big( n_{0,1}(y) - e^{- \lambda \alpha^2 y^2}  \big)  \,  =\,  \mathcal E_{111}^{\Gamma,\rm lo} + \mathcal E_{111}^{\Gamma,\rm err} . \label{eq: add subtract gaussian for E111}
\end{align}
In $\mathcal E_{111}^{\Gamma,\rm lo}$ we use $| H(y) - 1 | \le C y^2$ and Corollary \ref{Cor: Upsilon estimates} to replace $H(y)$ by unity at the cost of an error of order $\alpha^{-7}$. In the term where $H(y)$ is replaced by unity, we perform the Gaussian integral and use Proposition \ref{prop:overlap} for  $ \sno \mathcal S_P \Gamma \sno_\Fock^2$, and again Corollary \ref{Cor: Upsilon estimates}. This leads to
\begin{align}\label{eq: bound for E111 lo}
\Big|\, \mathcal E_{111}^{\Gamma, \rm lo} -   \frac{1}{\alpha^2}  \lsp \Gamma | \mathbb N_1  \Gamma \rsp_\Fock \, \sno \mathcal S_P \Gamma\sno_\Fock^2  \Big| \, \le \,   C (n,\varepsilon)  \sqrt K \alpha^{-6+\varepsilon} .   
\end{align}
The error in \eqref{eq: add subtract gaussian for E111} is bounded with the help of Lemma \ref{lem: Gaussian lemma},
\begin{align}\label{eq: bound for E111 err}
| \mathcal E_{111}^{\Gamma, \rm err} |\,  \le\,  \frac{C}{\alpha^2} \int \D y\, H(y) | n_{0,1}(y) - e^{-\lambda \alpha^2 y^2 }| \,  \le \,  C \alpha^{-6} .
\end{align}

In order to bound $\mathcal E_{12}^\Gamma$ in \eqref{eq: E11 + E12}, we decompose $\Gamma = \Gamma^< + \Gamma^>$ for some $\delta>0$ as in \eqref{eq: decomposition of Upsilon}, and then follow similar steps as described below \eqref{eq: inserting identity}. This way we can estimate
\begin{equation}\label{A-Schwarzing}
| \mathcal E_{12}^\Gamma |   \le   \frac{1}{\alpha^2} \int \D y\, H(y) \sno e^{\kappa \mathbb N } \mathbb U_K (e^{ - A_{P,y}} - 1)  \mathbb N_1 \Gamma^< \sno_\Fock \, \mathfrak p_\alpha(y)\, n_{\delta,\eta}(y) 
  + \, \frac{2}{\alpha^2} \sno \mathbb N_1 \Gamma^>\sno_\Fock \int \D y\, H(y) .
\end{equation}
While the second term is bounded via \eqref{eq: exp bound for tail} by $C_\delta\, \alpha^{-12}$, in the first term we apply Lemma \ref{lem: exp N bound} and use the functional calculus for self-adjoint operators,
\begin{align}
 \sno e^{\kappa \mathbb N } \mathbb U_K (e^{ - A _{P,y} } - 1)  \mathbb N_1 \Gamma^< \sno_\Fock & \,   \le \, \sqrt 2 \sno  (P_fy + g_{P}(y) )  \mathbb N_1 \Gamma^<  \sno_\Fock.
\end{align}
Since $P_f$ changes the number of phonons in $\mathcal F_1$ at most by one, we can proceed by
\begin{align}
\hspace{-1mm}\sno  (P_fy + g_{P}(y) )  \mathbb N_1 \Gamma^<  \sno_\Fock & \le  (\alpha^{\delta}+1) \sno (P_f y  + g_{P}(y)) \Gamma^< \sno_\Fock \le   C \alpha^\delta \big( |y| \sno P_f \Gamma\sno_\Fock +   \alpha |y|^3 \big)\notag\\[1mm]
& \le C \alpha^\delta \big(  C(n) \sqrt K  \, |y|  +   \alpha |y|^3 \big),
\end{align}
where we used $1\le \alpha^\delta$ and \eqref{eq: bound for g_P} in the second step and Lemma \ref{lem: bounds for P_f:new} in the third one. We conclude via Corollary \ref{cor: Gaussian for errors} that
\begin{align}\label{eq: bound for E12}
| \mathcal E_{12}^\Gamma |& \, \le \, \frac{C}{\alpha^2} \int \D y\, H(y)\big(  C(n) \sqrt K  |y| + \alpha |y|^3 \big) \mathfrak p_\alpha(y)\, n_{\delta,\eta}(y) + C_\delta\, \alpha^{-12} \notag\\
& \, \le \, C_\delta \, C(n)  \sqrt K  \, \alpha^{- 6 + (12 + 4 \mathfrak m) \delta} .
\end{align}

\noindent \underline{Term $\mathcal E_2^{\Gamma}$}. 
Here we start with
\begin{align}
\mathcal E_2^\Gamma & \, = \, \alpha^{-1} \int \D y\,  \lsp \mathscr G_\Gamma^0 | L_{1,y} W(\alpha w_{P,y}) \mathscr G_\Gamma^0 \rsp_\Fock \notag \\
& \quad + \alpha^{-1} \int \D y\, \lsp \mathscr G_\Gamma^0 | L_{1,y}  ( e^{A_{P,y}}  -1 ) W(\alpha w_{P,y})   \mathscr G_\Gamma^0 \rsp_\Fock \, = \, \mathcal E^\Gamma_{21} + \mathcal E_{22}^\Gamma ,
\end{align}
where
\begin{align}
 L_{1,y}  \, =\,  \lsp \psi| \phi(h_\cdot + \varphi_P) T_y \psi \rsp_\2 = \phi(l_y)+\pi(j_y)
\end{align}
with 
\begin{align}\label{eq:def:ly:jy}
  l_y \,  = \,  H(y)\varphi+\lsp \psi|h_{\cdot}T_y \psi\rsp_\2, \quad j_ y \,  = \,  H(y)\xi_P ,
\end{align} 
and $\xi_P$ defined in \eqref{eq: def of varphi_P}. The proof of the following lemma is given in \cite[Lemma 3.17]{MMS22}.
\begin{lemma}\label{lemma: props_lx}
For $k=0,1$ and for all $n\in \mathbb{N}_0$,
\begin{align}\label{eq: props_lx}
  \sup_y \|\nabla^k l_y \|_\2 \, <\, \infty, \quad \int |y |^n\|\nabla^k l_y \|_\2 \, \D y \, <\, \infty.
\end{align}  
\end{lemma}
Note that, by Lemma \ref{lemma: props_peks}, $j_y$ clearly has these properties as well. In \cite[Eqs. (3.125) and (3.126)]{MMS22} it was also shown that
\begin{align}\label{eq:lys}
\sno w_{0,y}-y\nabla\varphi \sno_\2 +\sno l_y^1 \sno_\2+ \sno l_y^0 + \frac{1}{2}y\nabla\varphi \sno_\2 & \leq Cy^2.
\end{align}
We proceed by writing $\mathcal E^\Gamma_{21} =  \mathcal{E}_{21}^{\Gamma,0} +  \mathcal{E}_{21}^{\Gamma,P}$ with
\begin{subequations}
\begin{align}
  \mathcal{E}_{21}^{\Gamma,0} & \, =\, \alpha^{-1}\int \D y\, \lsp \gamma |\phi(l_y^0)W(\alpha \widetilde w_{P,y}) \mathscr \gamma \rsp_\Fock + \alpha^{-1}\int \D y\, \lsp \gamma |\phi(\underline{l_y^1})W(\alpha \widetilde w_{P,y}) \mathscr \gamma \rsp_\Fock \label{eq:E21}\\
  \mathcal{E}^{\Gamma,P}_{21} &  \, =\, \alpha^{-1}\int \D y \, \lsp \gamma |\pi(j_y)W(\alpha \widetilde w_{P,y}) \gamma \rsp_\Fock ,\label{eq:E21:P}
  \end{align} 
\end{subequations}
where we used $\Gamma = \mathbb U_K^\dagger \gamma$ and applied Lemma \ref{lem: U W U transformation} together with $j_y\in \text{Ran}(\Pi_0)$, cf. \eqref{eq: def of varphi_P} and \eqref{eq:def:ly:jy}. The three terms are estimated separately. 

For the first term, and also for later use, let us note the following identity. For any $x\mapsto g^0_x$ with $g_x^0 \in \text{Ran}(\Pi_0)$ for all $x\in \mathbb R^3$ and $\Psi, \Phi \in L^2(\mathbb R^3) \otimes \mathcal F_1$, we have
\begin{align}
 \tfrac{1}{\alpha}\lsp \Psi |\phi(g^0_\cdot )W(\alpha \widetilde w_{P,y}) \Phi  \rsp_\h \, & = \, \tfrac{1}{\alpha} \lsp \Psi |a(g^0_\cdot) e^{a^\dagger(\alpha w_{P,y}^0) }W(\alpha \widetilde w_{P,y}^1) \Phi \rsp_\h e^{-\frac{1}{2} \sno \alpha  w_{P,y}^0\sno^2_\2}  \notag\\[1.5mm]
&  = \,  \lsp \Psi | \langle g^0_\cdot |  w_{P,y}^0   \rangle_\2 W(\alpha \widetilde w_{P,y}^1) \Phi  \rsp_\h e^{-\frac{1}{2} \sno \alpha  w_{P,y}^0\sno^2_\2} \notag\\[2.5mm]
&  = \lsp \Psi | \langle g^0_\cdot | w_{P,y}^0   \rangle_\2  W(\alpha \widetilde w_{P,y} ) \Phi  \rsp_\h, \label{eq:phi(g0):term}
\end{align}
where we used $ W(\alpha \widetilde w_{P,y}) =  W(\alpha  w_{P,y}^0 )  W(\alpha \widetilde w_{P,y}^1)$, $a(g^0_\cdot )\Psi = a(  w_{P,y}^0)\Phi = 0$ and the canonical commutation relations. Invoking this identity with $\Psi = \Phi = \mathscr G_\gamma^0$ and $g_x^0 = l_y^0$ ($x$-independent), we find $\langle \gamma |\phi(l_y^0)W(\alpha \widetilde w_{P,y}) \mathscr \gamma \rangle_\Fock = \langle l_y^0 | w_{P,y}^0   \rangle_\2  \langle \gamma|  W(\alpha \widetilde w_{P,y} ) \gamma \rangle_\Fock $, and thus we can apply Lemma \ref{lem:X:gamma:W:chi} for $\mathbf X  = \langle l_y^0 |   w_{P,y}^0   \rangle_\2$ and $\mathbf Y =1$ to bound the first term in \eqref{eq:E21} as
\begin{align}\label{eq:commutator:ly0:part}
& \left|\alpha^{-1}\int \D y\, \lsp \gamma | \phi(l_y^0) W(\alpha \widetilde{w}_{P,y}) \gamma \rsp_\Fock -\int \D y\, \lsp l_y^0|w_{P,y}^0\rsp_\2 n_{0,1}(y) \right| \notag\\
& \hspace{6cm} \leq  C\int \D y\, \I \lsp l_y^0|w_{P,y}^0\rsp_\2 \I\, \mathfrak{p}_{\alpha}(y)\, n_{0,1}(y) .
\end{align}
Note that $l_{-y}^i(-z)=l_y^i(z)$, $i=0,1$, which is easily verified for $l_y(z)$ and preserved by the projections $\Pi^i$ as their kernels are even functions on $\mathbb R^6$. As discussed in Remark \ref{rem: symmetries_of_w}, $n_{0,1}(y)$ is even, and similarly, 
$\mathrm{Im}(w_{P,y}^0)(z)$ is an odd function on $\mathbb{R}^6$ since $(y,z)\mapsto \mathrm{Im}(w_{P,y})(z)$ is odd on this space. Hence we can conclude that
\begin{align}
\int \D y \, \lsp l_y^0 |\mathrm{Im}(w_{P,y}^0) \rsp_\2\,  n_{0,1}(y)   \, = \, 0.
\end{align}
Moreover, we use $\langle l_y^0|\re(w_{P,y}^0)\rangle_\2 = \langle l_y^0|  w_{0,y}^0   \rangle_\2$ since $\mathrm{Re}(w_{P,y})=w_{0,y}$. The function 
\begin{align}
   v(y)\, :=\, \lsp l^0_y|w_{0,y}^0\rsp_\2
\end{align}
satisfies $ v\in L^1\cap L^{\infty}$ since $y\mapsto \sno l_y \sno_\2$ does (see \cite[Section 3.6]{MMS22}) and $\sno w_{0,y}\sno_\2\le 2 \sno \varphi \sno_\2$. Using \eqref{eq:lys} we can thus estimate
\begin{align}\label{eq: bd on v_x}
\bigg| v(y) + \frac{1}{2} \sno y\nabla\varphi\sno_\2^2 \bigg| \le C (|y|^3+y^4).
\end{align}
From this bound and from $v\in L^1\cap L^\infty$ it is also easy to deduce that $|\cdot|^{-2}v\in L^1\cap L^\infty$. Coming back to \eqref{eq:commutator:ly0:part} we write
\begin{align}
  \int \D y \, v(y) n_{0,1}(y)\, =\, \int \D y\, v(y) e^{-\alpha^2\lambda y^2}+\int \D y \,y^2\, |y|^{-2}v(y)  \big( n_{0,1}(y)-e^{-\alpha^2\lambda y^2} \big)
\end{align}
and apply Lemma \ref{lem: Gaussian lemma} with $g=|\cdot|^{-2}|v|$ to estimate
\begin{align}\label{eq: bound on the error E_21^0}
  \left|\int \D y\, y^2\, |y|^{-2}v(y) \big(  n_{0,1}(y)-e^{-\alpha^2\lambda y^2} \big) \right|\, \leq\,  C\alpha^{-6}.
\end{align}
Using \eqref{eq: bd on v_x}, the definition of $\lambda=\frac{1}{6}\sno \nabla \varphi \sno_\2^2$ and $\int y^2 e^{-y^2}\D y= \frac{3}{2}\pi^{3/2}$, we further have
\begin{align}
  \bigg| \int  \D y \, v(y) e^{-\alpha^2\lambda y^2} + \frac{3}{2\alpha^2}\left(\frac{\pi}{\lambda \alpha^2}\right)^{3/2} \bigg| \, \le\,  C\alpha^{-6}.
\end{align} 
In a similar way, we can estimate the error in \eqref{eq:commutator:ly0:part} by
\begin{align}
\int \D y \, |v(y)|\, \mathfrak{p}_{\alpha}(y)\, n_{0,1}(y) \leq C\alpha^{-6},
\end{align}
which concludes the analysis of the first term in $\mathcal E_{21}^{\Gamma,0}$.

For the second term, we apply Lemma \ref{lem:X:gamma:W:chi} with $\mathbf X  = \phi(\underline{l_y^1})$ and $\mathbf Y  = 1 $ to find
\begin{align}
& \left|\alpha^{-1}\int \D y\,   \lsp \gamma | \phi(\underline{l_y^1}) W(\alpha \widetilde{w}_{P,y}) \gamma \rsp_\Fock \right|    \le  \frac{C}{\alpha} \int \D y\, \sno \phi(\underline{l_y^1}) \gamma \sno_\Fock\, ( 1 + \mathfrak{p}_{\alpha}(y)) \, n_{0,1}(y),
\end{align}
where we used that $| \langle \gamma |  \phi(\underline{l_y^1}) \gamma \rangle _\Fock | \le C \sno l_y^1\sno_\2 $. Lemma \ref{lemma: props_lx} and Eq. \eqref{eq:lys} imply that $y\mapsto |y|^{-2} \sno \underline{l_y^1}\sno_\2 \in L^1\cap L^{\infty}$. By Corollary \ref{cor: Gaussian for errors} the expression on the r.h.s. is thus bounded by $C\alpha^{-6}$. Collecting all terms and invoking Proposition \ref{prop:overlap} gives the final estimate for \eqref{eq:E21}
\begin{align}\label{eq: bound for E_21^0}
 \bigg| \mathcal{E}_{21}^{\Gamma,0} +  \frac{3}{2\alpha^2} \, \sno \mathcal S_P \Gamma\sno_\Fock^2 \, \bigg| \, \le\,  C (n,\varepsilon) \sqrt {K } \alpha^{- 6 + \varepsilon } .
\end{align}

For the term $\mathcal{E}^{\Gamma,P}_{21}$ in \eqref{eq:E21:P} we recall $j_y \in \text{Ran}(\Pi_0)$ and thus we can use \eqref{eq:phi(g0):term} and apply Lemma \ref{lem:X:gamma:W:chi} with $\mathbf X  = \langle j_y | w_{P,y}^0   \rangle_\2 $ and $\mathbf Y  = 1$. This gives
\begin{align}\label{e_p}
  \mathcal{E}^{\Gamma,P}_{21} &  \leq     \int \D y\,  \big|  \lsp j_y^0 |  w^0_{P,y} \rsp_\2 \big  | \,  (1+\mathfrak{p}_{\alpha}(y))\, n_{0,1}(y) .
\end{align} 
Using $\sno j_y \sno_\2 \leq C H(y) |P| \alpha^{-2} \leq C H(y) \alpha^{-1}$ (since  $|P|\leq c \alpha$) and $ \sno w_{P,y}^0 \sno_\2 \leq C(|y|+|y|^3)$ as follows from Lemma \ref{lem: bound for w1 and w0}, we obtain $\mathcal{E}^{\Gamma,P}_{21}  \le C\alpha^{-6}$. 

The term $\mathcal{E}^\Gamma_{22}$ is estimated similarly as the term $\mathcal E_{22}$ in \cite[Section 3.6]{MMS22}, with the result 
\begin{align}
|\mathcal{E}^\Gamma_{22}|\leq  C(n,\varepsilon)  \sqrt K \alpha^{-6+\varepsilon }.
\end{align}
Here, the only difference compared to \cite{MMS22} is that we use Lemmas \ref{lem: bounds for P_f:new} and \ref{lem:error:terms:e-kappaN} in appropriate places; we shall skip the details. 

Combining the relevant estimates, we arrive at the statement of Proposition \ref{prop: bound for E}.
\end{proof}

\subsection{Energy contribution $\mathcal G^\Gamma$}

The energy contribution $\mathcal G^\Gamma$ defined in \eqref{eq: G} is evaluated by the following proposition.

\begin{proposition} \label{prop: bound for G} Let $\mathcal V_K^{(n+1)}$ be defined as in \eqref{eq:Bogo:subspace} and $\mathbb H_K$ be defined by \eqref{eq: Bogoliubov Hamiltonian maintext}. For all $\varepsilon >0$ and $n\in \mathbb N_0$ there exist constants $C(n,\varepsilon) >0$ and $\alpha_0\ge 1$ such that 
\begin{align}
\bigg| \, \mathcal G^\Gamma  -  \frac{2}{\alpha^{2}} \lsp \Gamma| ( \mathbb H_K - \mathbb N_1 )  \Gamma \rsp_\Fock \,  \sno \mathcal S_P \Gamma\sno_\Fock^2 \bigg| \, \le \, C (n,\varepsilon) \,  \alpha^\varepsilon \big(  \sqrt K \alpha^{-6} + K^{-1/2} \alpha^{-5} \big) 
\end{align}
for all normalized $\Gamma \in \mathcal V_K^{(n+1)}$, $| P | \le \sqrt{2M^{\rm LP}} \alpha$, $K\ge K_0$ and $\alpha \ge \alpha_0$.
\end{proposition}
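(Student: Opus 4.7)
The plan is to parallel the analysis of $\mathcal E^\Gamma$ in Proposition \ref{prop: bound for E}, the extra twist being that the leading contribution now arises from a two-field contraction $\phi(h_\cdot)\cdots\phi(h^1_{K,\cdot})$ mediated by the resolvent $R$ present in $\mathscr G_\Gamma^1$. First, the $h^{\rm Pek}$-part of $\widetilde H_{\alpha,P}$ drops out because $h^{\rm Pek}\psi = 0$ and $h^{\rm Pek}$ acts only on the electron coordinate, so that $\lsp \psi\otimes\Gamma|h^{\rm Pek}\cdots\rsp_\h = \lsp h^{\rm Pek}\psi\otimes\Gamma|\cdots\rsp_\h = 0$. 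The $\alpha^{-2}\mathbb N$-piece, together with the prefactor $-2/\alpha$ in $\mathcal G^\Gamma$, enters at order $\alpha^{-3}$ and is handled by the same techniques used for $\mathcal E_1^\Gamma$: insert $\mathbb U_K$, invoke Lemma \ref{lem:X:gamma:W:chi} to replace $W(\alpha w_{P,y})$ by $n_{0,1}(y)$, and use Lemma \ref{lem: Gaussian lemma} together with Corollaries \ref{Cor: Upsilon estimates} and \ref{cor: Gaussian for errors}, producing an error absorbed into $C(n,\varepsilon)\sqrt K\alpha^{-6+\varepsilon}$.

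Step two isolates the dominant piece
\begin{align*}
\mathcal G^{\Gamma,\rm lead} \, = \, -\tfrac{2}{\alpha^2}\re\int \D y\, \lsp \psi\otimes\Gamma\I\phi(h_\cdot + \varphi_P)\, T_y e^{A_{P,y}} W(\alpha w_{P,y})\, u_\alpha R\phi(h^1_{K,\cdot})\psi\otimes\Gamma\rsp_\h .
\end{align*}
I successively replace $\varphi_P \to \varphi$ (error $O(\alpha^{-2})$ via $\sno \xi_P\sno_\2$), $u_\alpha \to 1$ in $\mathscr G_\Gamma^1$ (exponentially small by the decay of $\psi$), $h_\cdot \to h_{K,\cdot}$ in the outer $\phi$ (producing the factor $K^{-1/2}$ through the commutator bound of Corollary \ref{cor: X h Y bounds}), and $T_y e^{A_{P,y}} \to 1$ (producing $|y|$ times $\sno P_f\Gamma\sno_\Fock \le C(n)\sqrt K$ from Lemma \ref{lem: bounds for P_f:new}, together with $|g_P(y)| = O(\alpha |y|^3)$ from \eqref{eq: bound for g_P}). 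Each replacement error is estimated by splitting $\Gamma = \Gamma^< + \Gamma^>$ as in \eqref{eq: decomposition of Upsilon}, inserting $e^{\kappa\mathbb N}e^{-\kappa\mathbb N}$, and applying Lemmas \ref{lem: exp N bound} and \ref{lem:error:terms:e-kappaN} together with Corollary \ref{cor: Gaussian for errors}, in direct analogy with the treatment of $\mathcal N_{02}$ in the proof of Proposition \ref{prop:overlap}. The cumulative error fits into $C(n,\varepsilon)\alpha^\varepsilon(\sqrt K\alpha^{-6} + K^{-1/2}\alpha^{-5})$.

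Step three extracts the Bogoliubov quadratic form. After the above reductions, the $\phi(\varphi)$ contribution vanishes, because $\varphi$ is independent of the electron variable and pulling the inner product through leaves the factor $\lsp\psi|R\phi(h^1_{K,\cdot})\psi\rsp_\2 = 0$ from $R\psi = 0$. Decomposing $h_{K,x} = h^0_{K,x} + h^1_{K,x}$, the $\Pi_0$-piece reduces by the identity \eqref{eq:phi(g0):term} to a c-number $\lsp h^0_{K,\cdot}|w_{P,y}^0\rsp_\2$, which is $O(|y|)$ on the effective support of $n_{0,1}(y)$ and hence subleading. For the $\Pi_1$-piece I push the outer $\phi(h^1_{K,\cdot})$ through the Weyl operator via $\phi(h^1_{K,\cdot})W(\alpha w_{P,y}) = W(\alpha w_{P,y})(\phi(h^1_{K,\cdot}) + 2\alpha\re\lsp h^1_{K,\cdot}|w_{P,y}\rsp_\2)$; by Lemma \ref{lem: bound for w1 and w0} and $\sno h^1_{K,x}\sno_\2 \lesssim \sqrt K$, the shift is $O(\alpha\sqrt K\,(|y|/\alpha + y^2))$ and contributes at order $\sqrt K\alpha^{-6}$. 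The remaining Wick-contracted piece equals, after carrying out the two electron inner products with $\psi$,
\begin{align*}
\tfrac{2}{\alpha^2}\re\int \D y\, \lsp\Gamma| W(\alpha w_{P,y})(\mathbb H_K - \mathbb N_1)\Gamma\rsp_\Fock
\end{align*}
by the definition \eqref{eq: Bogoliubov Hamiltonian maintext}.

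Step four. Writing $\Gamma = \mathbb U_K^\dagger \gamma$ and using \eqref{eq: identity for tilde w transformation}, the integrand becomes $\lsp\gamma|W(\alpha\widetilde w_{P,y})\,\mathbb U_K(\mathbb H_K - \mathbb N_1)\mathbb U_K^\dagger\gamma\rsp_\Fock$; Lemma \ref{lem:X:gamma:W:chi} applied with $\mathbf X = 1$ and $\mathbf Y = \mathbb U_K(\mathbb H_K - \mathbb N_1)\mathbb U_K^\dagger$ (a quadratic Fock-space operator by construction) replaces $W(\alpha\widetilde w_{P,y})$ by $n_{0,1}(y)$; Lemma \ref{lem: Gaussian lemma} replaces $n_{0,1}(y)$ by the Gaussian $e^{-\lambda\alpha^2 y^2}$; and Proposition \ref{prop:overlap} identifies the resulting Gaussian integral with $\sno\mathcal S_P\Gamma\sno^2_\Fock$ up to admissible error. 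Since $\mathbb H_K - \mathbb N_1$ is self-adjoint, the $\re$ can then be dropped, yielding the claimed main contribution $\frac{2}{\alpha^2}\lsp\Gamma|(\mathbb H_K - \mathbb N_1)\Gamma\rsp_\Fock \sno\mathcal S_P\Gamma\sno^2_\Fock$. The hardest part is the tight error bookkeeping: the two scales $\sqrt K\alpha^{-6}$ and $K^{-1/2}\alpha^{-5}$ arise from the competing Lieb--Yamazaki UV bound (improving as $K^{-1/2}$) and the $P_f$-bound on excited Bogoliubov vectors (worsening as $\sqrt K$, Lemma \ref{lem: bounds for P_f:new}), and every subleading correction from the many replacements must fit simultaneously into both scalings.
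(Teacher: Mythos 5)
The overall architecture of your proposal (drop the $h^{\rm Pek}$ piece by $h^{\rm Pek}\psi=0$, treat the $\alpha^{-2}\mathbb N$ piece as pure error by Lemma \ref{lem:X:gamma:W:chi} and the Gaussian lemma, isolate the $\phi(h_\cdot+\varphi_P)$ piece, decompose along $\Pi_0\oplus\Pi_1$, use \eqref{eq:phi(g0):term} for the $\Pi_0$ part, and recognize $\mathbb H_K-\mathbb N_1$ after Wick contraction) is the same as in the paper, and your bookkeeping of the errors from $\varphi_P\to\varphi$, $u_\alpha\to 1$, $h_\cdot\to h_{K,\cdot}$, $e^{A_{P,y}}\to 1$ and the Weyl shift on $\phi(h^1_{K,\cdot})$ is plausible. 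However, there is a genuine gap in Step~3.

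You dismiss the $\Pi_0$ contribution as \emph{subleading} on the grounds that $\lsp h^0_{K,\cdot}|w^0_{P,y}\rsp_\2 = O(|y|)$. This does not give a small error. After the identity \eqref{eq:phi(g0):term} the prefactor is $\tfrac{2}{\alpha}$, the electron and Fock inner products are $O(1)$, and integrating $|y|$ against $n_{0,1}(y)\sim e^{-\lambda\alpha^2 y^2}$ (Corollary \ref{cor: Gaussian for errors}) gives $\alpha^{-1}\cdot\alpha^{-3}$, so the naive bound is of order $\alpha^{-5}$ — the \emph{same order as the leading term} $\frac{2}{\alpha^{2}}\lsp\Gamma|(\mathbb H_K-\mathbb N_1)\Gamma\rsp\sno\mathcal S_P\Gamma\sno_\Fock^2 \sim \alpha^{-5}$, and in particular much larger than the allowed $\sqrt K\alpha^{-6}+K^{-1/2}\alpha^{-5}$. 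The reason the proposition is nevertheless true is a parity cancellation that your proposal does not invoke: after replacing the Weyl operator by $n_{0,1}(y)$, $T_y$ by $1$, $h_\cdot$ by $h_{K,\cdot}$ and dropping $\xi_P$, the remaining integrand is of the form $\re\lsp h^0_{K,\cdot}|\re(w^0_{P,y})\rsp_\2\,(\cdots)\,n_{0,1}(y)$, which is \emph{odd} in $y$ because $\re(w^0_{P,-y})(z)=-\re(w^0_{P,y})(z)$ by rotation invariance of $\varphi$ (Remark \ref{rem: symmetries_of_w}) while $n_{0,1}$ and the rest are even, so it integrates to zero. Only after exploiting this cancellation do the residual pieces — the $T_y-1$ piece (giving an extra $|y|$, using $\sno h^0_{K,\cdot}\sno_\2\le C\sqrt K$, hence $\sqrt K\alpha^{-6}$), the $h_\cdot-h_{K,\cdot}$ piece (giving $K^{-1/2}\alpha^{-5}$ via Corollary \ref{cor: X h Y bounds}), and the $\xi_P$ piece (giving $\alpha^{-6}$) — fit into the stated error. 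Without this parity argument the proof does not close.

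A smaller remark: your justification that $\phi(\varphi)$ drops out via $R\psi=0$ is essentially right (after $T_y u_\alpha$ is replaced by $1$, $\lsp\psi|R\,\cdot\,\rsp_\2=\lsp R\psi|\cdot\rsp_\2=0$), but it is another point where replacing $T_yu_\alpha$ by $1$ first is not free, and the resulting commutator/translation errors need to be tracked with the same care as above.
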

\begin{proof} Using $h^{\rm Pek} \mathscr G_\Gamma^0 = 0$ and $\mathbb N \mathscr G_\Gamma^0 = \mathbb N_1 \mathscr G_\Gamma^0$ we can decompose $\mathcal G^\Gamma$ into the two terms
\begin{align}
\mathcal G^\Gamma  & \, =\,  - \frac{2}{\alpha} \int \D y\, \re \lsp \mathscr G_\Gamma^0 | (\alpha^{-2} \mathbb N_1 + \alpha^{-1} \phi (h_\cdot + \varphi_P) ) T_y e^{A_{P,y}} W(\alpha w_{P,y}) \mathscr G_\Gamma^1 \rsp_\h \notag\\
&\,  =\,  \mathcal G_1^\Gamma  + \mathcal G_2^\Gamma ,
\end{align}
where the first term will contribute to the error while the second one provides an energy contribution of order $\alpha^{-2}$.\medskip

\noindent \underline{Term $\mathcal G_1^\Gamma $}. 
Using Lemma \ref{lem:error:terms:e-kappaN} this term can be treated in close analogy to the estimation of $\mathcal G_1$ in \cite[Section 3.7]{MMS22}. This results in
\begin{align} \label{eq: bound G_1 new}
| \mathcal G^\Gamma_1 | \le C(\varepsilon) \alpha^{-6 + \varepsilon}.
\end{align}

\noindent \underline{Term $\mathcal G_2^\Gamma$}. Here we start by writing
\begin{align}
\mathcal G_{2}^\Gamma &  =   -  \frac{2}{\alpha^2} \int \D y\,  \re \lsp \mathscr G_\Gamma^0 | \phi (h_{\cdot} + \varphi_P) T_y W(\alpha w_{P,y}) \mathscr G_\Gamma^1 \rsp_\h \notag \\
& \quad  - \, \frac{2}{\alpha^2} \int \D y \, \re \lsp \mathscr G_\Gamma^0 | \phi (h_\cdot + \varphi_P ) T_y ( e^{ A_{P,y}} -1 )  W(\alpha w_{P,y}) \mathscr G_\Gamma^1 \rsp_\h = \mathcal G_{21}^\Gamma + \mathcal G_{22}^\Gamma \label{eq: def of G_22}.
\end{align}
In the first term we insert $1 = \mathbb U_K^\dagger \mathbb U_K$ next to $\mathscr G_\Gamma^0$ and bring $\mathbb U_K^\dagger$ to the right side of the inner product. With $\mathbb U_K\mathscr G^0_\Gamma = \psi \otimes \gamma = \mathscr G_\gamma^0 $ for some $\gamma \in \mathcal F_1^{(\le \mathfrak m)}$, \eqref{eq: transformation of phi} and \eqref{eq: identity for tilde w transformation} this gives
\begin{align}
\mathcal G_{21}^\Gamma & \, =\,  - \frac{2}{\alpha^2}  \int \D y \, \re \lsp \mathscr G_\gamma^0  | \phi(\underline { h_\cdot + \varphi_P } )T_y  W(\alpha \widetilde w_{P,y})  u_\alpha R  \phi(\underline{h_{K,\cdot}^1} ) \mathscr G_\gamma^0 \rsp_\h
\end{align}
where the notation $\underline{\, \cdot\, }$ is defined in \eqref{eq: underline notation}. Next we use $ h_\cdot + \varphi_P = h_\cdot^1 + \varphi + h_\cdot^0 + \xi_P $ to write
\begin{align}
\mathcal G_{21}^\Gamma & \, =\,  - \frac{2}{\alpha^2}  \int \D y \, \re \lsp \mathscr G_\gamma^0 | \phi(\underline { h_\cdot^1 + \varphi } )T_y  W(\alpha \widetilde w_{P,y})  u_\alpha R  \phi(\underline{h_{K,\cdot}^1} )\mathscr G_\gamma^0 \rsp_\h \notag\\
& \quad \,  - \frac{2}{\alpha^2}  \int \D y \, \re \lsp \mathscr G_\gamma^0 
| \phi(\underline { h_\cdot^0 + \xi_P } )T_y  W(\alpha \widetilde w_{P,y})  u_\alpha R  \phi(\underline{h_{K,\cdot}^1} )  \mathscr G_\gamma^0 \rsp_\h \notag\\[1mm]
& \, = \, \mathcal G_{211}^\Gamma  + \mathcal G_{212}^\Gamma  .
\end{align}
Since $ h_x^1 + \varphi \in \text{Ran}(\Pi_1)$ for every $x\in \mathbb R^3$, we can apply Lemma \ref{lem:X:gamma:W:chi} in the first term with 
\begin{align}\label{eq:def:X:Y:in:G21}
\mathbf X = \mathbf X_y = \PP \phi(\underline { h_{\cdot}^1 + \varphi } )T_y  u_\alpha R^{1/2} , \quad \mathbf Y  = R^{1/2} \phi(\underline{h_{K,\cdot}^1} )  \PP,
\end{align}
and thus
\begin{align}\label{eq: G211 Gamma intermediate}
& \bigg| \mathcal G_{211}^\Gamma  + \frac{2}{\alpha^2}  \int \D y \, \re \lsp \mathscr G_\gamma^0 | \mathbf X_y \mathbf Y  \mathscr  G_\gamma^0 \rsp_\h\, n_{0,1}(y) \bigg| \notag\\
&\hspace{1.5cm} \le \frac{2}{\alpha^3} \int \D y\, n_{0,1}(y)\, \mathfrak p_\alpha(y) \sno \mathbf X^\dagger_y  \mathscr G_\gamma^0 \sno_\h  \sno \mathbf Y  \mathscr G_\gamma^0 \sno_\h .
\end{align}
To estimate the error we use Lemma \ref{lem: LY CM} to bound
\begin{subequations}
\begin{align}
\sno \mathbf X^\dagger_y  \mathscr G_\gamma^0 \sno_\h & = \sno R^{1/2} \phi(\underline{ h_{\cdot-y}^1 + \varphi} ) u_\alpha T_{-y} \PP \mathscr G_\gamma^0\sno_\h \le C f_{2,\alpha}(y) \sno (\mathbb N+1)^{1/2} \gamma \sno_\Fock \\[1mm]
\sno \mathbf Y   \mathscr G_\gamma^0 \sno_\h & = \sno R^{1/2} \phi(\underline{h_{K,\cdot}^1 }) \PP \mathscr G_\gamma^0 \sno_\h \le C \sno (\mathbb N+1)^{1/2} \gamma\sno_\Fock \label{eq: bound for Y G}
\end{align}
\end{subequations}
with $f_{2,\alpha}(y) = \sno \nabla u_\alpha T_{-y} \PP \sno_\op + \sno T_{-y}u_\alpha \PP\sno_\op $. From \cite[Eq. (3.183)]{MMS22} we know that
\begin{align}\label{eq:def of f2}
\sno f_{2,\alpha} \sno_\su \le C, \quad \sno |\cdot |^n f_{2,\alpha} \sno_\1 \le C(n) \alpha^{3+n} \quad \text{for all}\ n \in \mathbb N_0,
\end{align}
and thus we can apply Corollary \ref{cor: Gaussian for errors} to estimate the r.h.s. in \eqref{eq: G211 Gamma intermediate}, the result being
\begin{align}
\bigg| \mathcal G_{211}^\Gamma  + \frac{2}{\alpha^2}  \int \D y \, \re \lsp \mathscr G_\gamma^0 | \mathbf X_y \mathbf Y    \mathscr G_\gamma^0 \rsp_\h\, n_{0,1}(y) \bigg|  \le C \alpha^{-6}.
\end{align}
In the remaining term we use $\gamma = \mathbb U_K \Gamma$ such that
\begin{align}
\lsp \mathscr G_\gamma^0 | \mathbf X_y \mathbf Y \mathscr   G_\gamma^0 \rsp_\h = \lsp \mathscr G_\Gamma^0 | \phi(  { h_\cdot^1 + \varphi } )T_y  u_\alpha R  \phi( {h_{K,\cdot}^1} )\mathscr G_\Gamma^0 \rsp_\h.
\end{align}
By the same argument as in the computation of $\mathcal G_{211}$ in \cite[Eqs. (3.172)--(3.187)]{MMS22}, one finds
\begin{align}\label{eq: bound G211 new}
\bigg| \mathcal G_{211}^\Gamma - \frac{1}{\alpha^2} \lsp \Gamma |  (\mathbb H_K -\mathbb N_1) \Gamma  \rsp_\Fock \, \sno \mathcal S_P \Gamma \sno_\Fock^2    \bigg| &   \, \le \,  C (n,\varepsilon) \alpha^{\varepsilon}   \Big( \sqrt K  \alpha^{-6} + K^{-1/2} \alpha^{-5} \Big) .
\end{align}

To estimate $\mathcal G_{212}^\Gamma $ we first note that $h_{\cdot}^0 +  \xi_P \in \text{Ran}(\Pi_0)$ (hence $\underline{h_{\cdot}^0 + \xi_P } = h_{\cdot}^0 + \xi_P $, cf. \eqref{eq: underline notation}). From \eqref{eq:phi(g0):term} for $g_x^0 = h_x^0 + \xi_P$, $\Psi = \PP \mathscr G_\gamma^0$ and $\Phi = T_y  u_\alpha R  \phi( \underline{ h_{K,\cdot}^1}  )  \mathscr G_\gamma^0  $  it follows that 
\begin{align} \label{eq:G212:Gamma}
\mathcal G_{212}^\Gamma  = - \frac{2}{\alpha }  \int \D y \, \re \lsp \mathscr G_\gamma^0 
| \langle h_\cdot^0 + \xi_P |  w_{P,y}^0 \rangle_\2  T_y  W(\alpha \widetilde w_{P,y})  u_\alpha R  \phi(\underline{h_{K,\cdot}^1} )  \mathscr G_\gamma^0 \rsp_\h.
\end{align}
Thus we can apply Lemma \ref{lem:X:gamma:W:chi} with
\begin{align}
\mathbf X =  \mathbf X_y =  \PP \lsp h_\cdot^0 + \xi_P |  w_{P,y}^0 \rsp_\2 T_y  u_\alpha R^{1/2}, \quad \mathbf Y  =  R^{1/2} \phi(\underline{h_{K,\cdot}^1} )  \PP,
\end{align}
to obtain
\begin{align}\label{eq: G212 intermediate new}
& \bigg| \mathcal G_{212}^\Gamma + \frac{2}{\alpha} \int \D y \, \lsp \mathscr G_\gamma^0 | \mathbf X_y \mathbf Y   \mathscr G_\gamma^0 \rsp_\h n_{0,1}(y) \bigg| \notag\\
& \hspace{3cm} \le \frac{C}{\alpha^2 }   \int \D y\,  \sno \mathbf X^\dagger_y \mathscr G_\gamma^0 \sno_\h  \sno \mathbf  Y  \mathscr G_\gamma^0 \sno_\h\, \mathfrak p_\alpha(y) \, n_{0,1}(y) .
\end{align}
To bound the norm
\begin{align}
\sno \mathbf X^\dagger_y \mathscr G_\gamma^0 \sno_\h = \sno R^{1/2}  \langle   w_{P,y}^0 |h_{\cdot-y}^0 +\xi_P   \rangle_\2 u_\alpha T_{-y} \PP \sno_\op
\end{align}
we write out the inner product, use the triangle inequality and Cauchy--Schwarz and then apply Corollary \ref{cor: X h Y bounds},
\begin{align}\label{eq:bound:XGg0}
\sno \mathbf X^\dagger_y \mathscr G_\gamma^0 \sno_\h & \le \int \D z \, |  w_{P,y}^0(z)| \, \big( \sno R^{1/2} h_{\cdot-y}^0(z) u_\alpha T_{-y} \PP \sno_\op  + |\xi_P(x)|  \sno R^{1/2}  u_\alpha T_{-y} \PP \sno_\op  \big)  \notag\\[1mm]
& \le  \sno   w_{P,y}^0 \sno_\2 \bigg[ \bigg( \int \D z  \sno R^{1/2} h_{\cdot-y}^0(z) u_\alpha T_{-y} \PP \sno_\op^2  \bigg)^{1/2} + \sno \xi_P\sno_2  \sno R^{1/2} u_\alpha T_{-y} \PP\sno_\op \bigg] \notag\\[3mm]
& \le C \sno   w_{P,y}^0 \sno_\2 (1+\sno \xi_P \sno_2) f_{2,\alpha}(y)
\end{align}
with $f_{2,\alpha}(y)$ as defined before \eqref{eq:def of f2}. Using $\sno \xi_P\sno_\2 \le C \alpha^{-1} $ and invoking Lemma \ref{lem: bound for w1 and w0} thus gives $\sno \mathbf X^\dagger_y \mathscr G_\gamma^0 \sno_\h \le  C   ( | y| + |y|^3  ) f_{2,\alpha}(y)$. 
Using in addition \eqref{eq: bound for Y G} and \eqref{eq:def of f2}, we can thus bound the error in \eqref{eq: G212 intermediate new} by means of Corollary \ref{cor: Gaussian for errors}. This leads to
\begin{align}\label{eq: bound G212 new0}
& \bigg|\mathcal G_{212}^\Gamma + \frac{2}{\alpha }  \int \D y \, \re \lsp \mathscr G_\gamma^0 | \mathbf X_y \mathbf Y    \mathscr G_\gamma^0 \rsp_\h\, n_{0,1}(y)  \bigg| \le \frac{C}{\alpha^{6}}.
\end{align}
The remaining term on the l.h.s. only contributes to the error. To see this, we replace the weight function by the Gaussian, i.e. we use \eqref{eq:bound:XGg0}, \eqref{eq: bound for Y G} and \eqref{eq:def of f2} in order to be able to apply Lemma \ref{lem: Gaussian lemma}
\begin{align}\label{eq:ref:real:part}
\bigg| \frac{2}{\alpha} \int \D y \, \re \lsp \mathscr G_\gamma^0 | \mathbf X_y \mathbf Y   \mathscr G_\gamma^0 \rsp_\h\, ( n_{0,1}(y) -e^{-\lambda \alpha^2 y^2 })   \bigg| \le \frac{C}{\alpha^{6}}.
\end{align}
Then we write $\mathbf X_y =   {\mathbf X}_{y}^{(1)} +  \mathbf X_y^{(2)} +  {\mathbf X}_y^{(3)}$ with
\begin{subequations}
\begin{align}
\mathbf X_y^{(1)} &   = \PP \lsp h_{K,\cdot}^0  |  w_{P,y}^0 \rsp_\2   u_\alpha R^{1/2} \\[1mm]
   \mathbf X_y^{(2)}   & =  \PP \lsp h_{\cdot}^0 - h_{K,\cdot}^0 + \xi_P |  w_{P,y}^0 \rsp_\2 T_y  u_\alpha R^{1/2} \\[1mm]
  \mathbf X_y^{(3)}   & =  \PP \lsp h_{K,\cdot}^0  |  w_{P,y}^0 \rsp_\2  (T_y - 1 ) u_\alpha R^{1/2}.
\end{align}
\end{subequations}
The contribution from $ \mathbf X_y^{(1)}$ is zero since $n_{0,1}(y)$ is an even function whereas $\re ( w_{P,-y}^0)(z) = - \re(w_{P,y}^0)(z)$, cf. Remark \ref{rem: symmetries_of_w}, and thus
\begin{align}
 \re \langle \mathscr G_\gamma^0 | \mathbf X_{-y}^{(1)} \mathbf Y   \mathscr G_\gamma^0 \rangle_\h  & =   \re \langle \mathscr G_\gamma^0 | \PP \lsp h_{K,\cdot}^0  | \re( w_{P,-y}^0) \rsp_\2   u_\alpha R^{1/2}  \mathbf Y   \mathscr G_\gamma^0 \rangle_\h \notag\\[1mm]
  & =  - \re \langle \mathscr G_\gamma^0 | \PP \lsp h_{K,\cdot}^0  | \re( w_{P,y}^0) \rsp_\2   u_\alpha R^{1/2}  \mathbf Y   \mathscr G_\gamma^0 \rangle_\h \notag\\[1mm]
   & = -\re \lsp \mathscr G_\gamma^0 | \mathbf X_y^{(1)} \mathbf Y \mathscr G_\gamma^0 \rsp_\h
\end{align}
where we used that the part involving $\im ( w_{P,\pm y}^0)$ vanishes because of the real part in front of the inner product. For the second term we proceed with Cauchy--Schwarz and \eqref{eq: bound for Y G}
\begin{align}\label{eq:bound:bfX:2}
& \bigg| \frac{2}{\alpha } \int \D y \, \re \lsp \mathscr G_\gamma^0 |  \mathbf X_y^{(2)} \mathbf Y G_\gamma^0 \rsp_\h\,  e^{-\lambda \alpha^2 y^2} \bigg|   \le  \frac{C}{\alpha } \int \D y  \, \sno ( \mathbf X_y^{(2)})^\dagger \mathscr G_\gamma^0\sno_\h   \, e^{-\lambda \alpha^2 y^2}
\end{align}
and bound the remaining norm similarly as in \eqref{eq:bound:XGg0} by 
\begin{align}
 &  \sno (\mathbf X_y^{(2)})^\dagger \mathscr G_\gamma^0\sno_\h  = \sno R^{1/2} \langle w_{P,y}^0|   ( h^0_{\cdot,-y} - h_{K,\cdot-y}^0 +   \xi_P \rangle_\2 u_\alpha T_{-y} \PP \sno_{\op} \notag\\[1.5mm]
& \quad \quad \le C \sno   w_{P,y}^0 \sno_\2 (K^{-1/2} +\sno \xi_P \sno_2) f_{2,\alpha}(y) \le C (|y|+|y|^3) (K^{-1/2} + \alpha^{-1}) 
\end{align}
where we used $f_{2,\alpha}(y)\le C$ and $\sno \xi_P \sno_\2 \le C\alpha^{-1}$. Thus by Corollary \ref{cor: Gaussian for errors}
\begin{align} 
\bigg| \frac{2}{\alpha} \int \D y \, \re \lsp \mathscr G_\gamma^0 | \mathbf X_y^{(2)} \mathbf Y   \mathscr G_\gamma^0 \rsp_\h\, e^{-\lambda \alpha^2 y^2 }    \bigg| \le C\alpha^{-5} \big( K^{-1/2} + \alpha^{-1} \big). 
\end{align}
In the last term we insert $T_y -1 = \int_0^1 \D s T_{sy} (y \nabla)$, such that
\begin{align}
& \big| \re \lsp \mathscr G_\gamma^0 | \mathbf X_y^{(3)}  \mathbf Y   \mathscr G_\gamma^0 \rsp_\h\, \big| \notag \\[1mm]
& \quad  \le \int_0^1 \D s  \big| \re \lsp \mathscr G_\gamma^0 | \PP \lsp h_{K,\cdot}^0 |  w_{P,y}^0 \rsp_\2   T_{sy} (y \nabla)  u_\alpha R^{1/2} \mathbf Y  \mathscr G_\gamma^0 \rsp_\h\, \big| \notag\\[1mm]
& \quad \le C \sqrt K \, |y| \, \sno w_{P,y}^0\sno_\2  \sno \nabla u_\alpha R^{1/2} \sno_\op \sno \mathbf Y \mathscr G_\gamma^0 \sno_\h \le C \sqrt K (y^2+y^4)
\end{align}
where we used $ \sno h_{K,\cdot}^0  \sno_\2 \le C \sqrt K$, $\sno \nabla u_\alpha R^{1/2} \sno_\op \le C$ and \eqref{eq: bound for Y G}. Again by Corollary \ref{cor: Gaussian for errors}
\begin{align} 
\bigg| \frac{2}{\alpha} \int \D y \, \re \lsp \mathscr G_\gamma^0 | \mathbf X_y^{(3)} \mathbf Y   \mathscr G_\gamma^0 \rsp_\h\, e^{-\lambda \alpha^2 y^2 }    \bigg| \le  C  \sqrt K \alpha^{-6}.
\end{align}
Combining all estimates, we arrive at the final bound for \eqref{eq:G212:Gamma}
\begin{align}\label{eq: bound G212 new}
\mathcal G_{212}^{\Gamma} \, \le \, C  \big( K^{1/2} \alpha^{-6} + K^{-1/2} \alpha^{-5} \big). 
\end{align}

The estimation for the remaining term $\mathcal G^\Gamma_{22}$ in \eqref{eq: def of G_22} is somewhat lengthy, but works analogous, with some obvious modifications, to the bound for $\mathcal G_{22}$ in \cite[Section 3.7]{MMS22}. We leave out the details and only state the result
\begin{align}\label{eq: bound for G_22^< new}
| \mathcal G_{22}^\Gamma | \, \le \, C(n,\varepsilon )\alpha^{\varepsilon} \big( K^{-1/2} \alpha^{-5} + \sqrt K  \alpha^{-6}\big).
\end{align}

In view of \eqref{eq: bound G_1 new}, \eqref{eq: bound G211 new}, \eqref{eq: bound G212 new} and \eqref{eq: bound for G_22^< new}, the proof of Proposition \ref{prop: bound for G} is now complete.
\end{proof}

\subsection{Energy contribution $\mathcal K^\Gamma$ \label{sec: energy K}}

Recall \eqref{eq: K} for the definition of $\mathcal K^\Gamma$.

\begin{proposition}\label{prop: bound for K} Let $\mathcal V_K^{(n+1)}$ be defined as in \eqref{eq:Bogo:subspace} and $\mathbb H_K$ be defined by \eqref{eq: Bogoliubov Hamiltonian maintext}. For all $\varepsilon >0$ and $n\in \mathbb N_0$ there exist constants $C(n,\varepsilon) >0$ and $\alpha_0\ge 1$ such that 
\begin{align}
\bigg| \mathcal K^\Gamma +   \frac{1}{\alpha^2}  \lsp \Gamma | ( \mathbb H_K - \mathbb N_1 ) \Gamma \rsp_\Fock \, \sno \mathcal S_P \Gamma\sno_\Fock^2\, \bigg| \le C(n,\varepsilon ) \, \alpha^\varepsilon   \big( \sqrt K \alpha^{-6} + K^{-1/2} \alpha^{-5} \big)
\end{align}
for all normalized $\Gamma \in \mathcal V_K^{(n+1)}$, $| P | \le \sqrt{2M^{\rm LP}} \alpha$, $K \ge K_0$ and $\alpha \ge \alpha_0$.
\end{proposition}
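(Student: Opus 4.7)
The plan is to decompose $\widetilde H_{\alpha,P} = h^{\rm Pek} + \alpha^{-2}\mathbb N + \alpha^{-1}\phi(h_\cdot + \varphi_P)$ and split $\mathcal K^\Gamma = \mathcal K_1^\Gamma + \mathcal K_2^\Gamma + \mathcal K_3^\Gamma$ accordingly. Since the $\alpha^{-2}$ prefactor in $\mathcal K^\Gamma$ combines with the Gaussian volume $(\pi/\lambda\alpha^2)^{3/2}\simeq\sno\mathcal S_P\Gamma\sno_\Fock^2$ (Proposition \ref{prop:overlap}), only $\mathcal K_1^\Gamma$ produces a leading term of order $\alpha^{-5}$, while $\mathcal K_2^\Gamma$ and $\mathcal K_3^\Gamma$ feed directly into the error. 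In each case the common routine mirrors that of Propositions \ref{prop: bound for E} and \ref{prop: bound for G}: decompose $\Gamma = \Gamma^{<} + \Gamma^{>}$ as in \eqref{eq: decomposition of Upsilon} to replace $e^{A_{P,y}}$ by the identity (with $\Gamma^{>}$ controlled by Corollary \ref{Cor: Upsilon estimates} and $\Gamma^{<}$ by inserting $e^{\kappa\mathbb N}e^{-\kappa\mathbb N}$ and invoking Lemmas \ref{lem: bounds for P_f:new}, \ref{lem: exp N bound}, \ref{lem:error:terms:e-kappaN}); absorb $\mathbb U_K$ via \eqref{eq: identity for tilde w transformation} and apply Lemma \ref{lem:X:gamma:W:chi} to replace $W(\alpha\widetilde w_{P,y})$ by the weight $n_{0,1}(y)$; finally invoke Lemma \ref{lem: Gaussian lemma} to swap $n_{0,1}(y)$ for $e^{-\lambda\alpha^2 y^2}$, with all remainders handled by Corollary \ref{cor: Gaussian for errors}.

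For the principal term $\mathcal K_1^\Gamma$ I use self-adjointness of $h^{\rm Pek}$ to move it onto the left factor and invoke the identity $h^{\rm Pek} R = \QQ$ together with
\begin{align*}
h^{\rm Pek} u_\alpha R \, = \, u_\alpha \QQ \, + \, [-\Delta,u_\alpha]\, R.
\end{align*}
The commutator is of order $\alpha^{-1}$ in operator norm by \eqref{eq: properties of ualpha} and $\sno\nabla R^{1/2}\sno_\op\le C$ from Lemma \ref{lem: bound for R}, and combined with Lieb--Yamazaki (Lemma \ref{lem: LY CM}) applied to $\phi(h_{K,\cdot}^1)\psi$ contributes only to the error. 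After the replacements of the previous paragraph and using $\QQ R = R$, the leading piece is
\begin{align*}
\frac{1}{\alpha^2}\bigg(\int e^{-\lambda\alpha^2 y^2}\D y\bigg)\bigl\langle\phi(h_{K,\cdot}^1)\psi\otimes\Gamma\,\bigl|\,R\,\phi(h_{K,\cdot}^1)\psi\otimes\Gamma\bigr\rangle_\h,
\end{align*}
and the partial inner product over the electron variable factors as
\begin{align*}
\bigl\langle\Gamma\,\bigl|\,\lsp\psi\,|\,\phi(h_{K,\cdot}^1)R\phi(h_{K,\cdot}^1)\psi\rsp_\2\,\Gamma\bigr\rangle_\Fock \, = \, \lsp\Gamma\,|\,(\mathbb N_1 - \mathbb H_K)\Gamma\rsp_\Fock,
\end{align*}
by the very definition of $\mathbb H_K$ in \eqref{eq: Bogoliubov Hamiltonian maintext} and the fact that $\Gamma\in\mathcal F_1$. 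Identifying the Gaussian integral with $\sno\mathcal S_P\Gamma\sno_\Fock^2$ via Proposition \ref{prop:overlap} gives the desired main term $-\alpha^{-2}\lsp\Gamma|(\mathbb H_K - \mathbb N_1)\Gamma\rsp_\Fock\sno\mathcal S_P\Gamma\sno_\Fock^2$. The remaining corrections (from $u_\alpha\to 1$, $T_y\to 1$, $n_{0,1}\to e^{-\lambda\alpha^2 y^2}$) are controlled by Corollary \ref{cor: X h Y bounds}, Lemma \ref{lemma: props_peks} and Corollary \ref{cor: Gaussian for errors}, in complete analogy with the treatment of $\mathcal G_{211}^\Gamma$ in Proposition \ref{prop: bound for G}.

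The remaining two pieces are pure error. In $\mathcal K_2^\Gamma$ the extra $\alpha^{-2}$ prefactor, the bound $\sno(\mathbb N+1)\Gamma\sno_\Fock \le C(n)$ of Corollary \ref{Cor: Upsilon estimates}, and the Lieb--Yamazaki estimate on $\sno R^{1/2}\phi(h_{K,\cdot}^1)\PP\sno_\op$ give an overall contribution of order $\alpha^{-7}$, well inside the target error. For $\mathcal K_3^\Gamma$ one proceeds as for $\mathcal G_{21}^\Gamma$ in Proposition \ref{prop: bound for G}: after Bogoliubov conjugation write $\phi(\underline{h_\cdot + \varphi_P}) = \phi(\underline{h_\cdot^1 + \varphi}) + \phi(h_\cdot^0 + \xi_P)$; the first (range-of-$\Pi_1$) summand is estimated directly by Lemma \ref{lem:X:gamma:W:chi} with $\mathbf X = \phi(\underline{h_\cdot^1 + \varphi})$, while the second (range-of-$\Pi_0$) summand is handled by the identity \eqref{eq:phi(g0):term}, which turns the field operator into the $x$-dependent scalar factor $\lsp h_\cdot^0 + \xi_P|w_{P,y}^0\rsp_\2$ of size $O(|y|+|y|^3)$; Corollary \ref{cor: Gaussian for errors} then produces the advertised $\alpha^\varepsilon(\sqrt K\alpha^{-6} + K^{-1/2}\alpha^{-5})$, the $K^{-1/2}$ factor arising exactly as in $\mathcal G_{22}^\Gamma$ when $h_\cdot$ is replaced by $h_{K,\cdot}$ via \eqref{eq: X h Y difference bound}. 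The main obstacle is the bookkeeping in $\mathcal K_1^\Gamma$: one has to simultaneously track the UV error $K^{-1/2}$ from Lemma \ref{lem: LY CM}, the spatial cutoff error from $u_\alpha$ and $[-\Delta,u_\alpha]$, and the Gaussian replacement error, all while verifying that the Fock quadratic form emerging from the $L^2$-factorization is \emph{exactly} $\mathbb N_1 - \mathbb H_K$ on $\mathcal F_1$, with no spurious contractions against $\psi$ and no spurious contributions from $\mathcal F_0$. This clean identification relies crucially on the $\Pi_1$-projection built into the form factor $h_{K,\cdot}^1$ and on the fact that $\Gamma\in\mathcal V_K^{(n+1)}\subset\mathcal F_1$.
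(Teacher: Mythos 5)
Your proposal is correct and follows essentially the same route as the paper: the same splitting $\mathcal K^\Gamma = \mathcal K_1^\Gamma + \mathcal K_2^\Gamma + \mathcal K_3^\Gamma$, the same treatment of $e^{A_{P,y}}-1$ via the $\Gamma^</\Gamma^>$ decomposition and the $e^{\kappa\mathbb N}e^{-\kappa\mathbb N}$ trick, and the same replacement of $W(\alpha\widetilde w_{P,y})$ by $n_{0,1}(y)$ and then by the Gaussian. Where the paper simply cites \cite[Eqs. (3.210)--(3.225)]{MMS22} for the identification of the leading term of $\mathcal K_{11}^\Gamma$ with $-\alpha^{-2}\lsp\Gamma|(\mathbb H_K-\mathbb N_1)\Gamma\rsp_\Fock\sno\mathcal S_P\Gamma\sno_\Fock^2$, you make the underlying resolvent algebra explicit via $h^{\rm Pek}u_\alpha R = u_\alpha\QQ + [-\Delta,u_\alpha]R$ and $\QQ R = R$, which is precisely the mechanism that produces the quadratic form $\lsp\psi|\phi(h^1_{K,\cdot})R\phi(h^1_{K,\cdot})\psi\rsp_\2 = \mathbb N_1 - \tilde{\mathbb H}_K$ on $\mathcal F_1$.
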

\begin{proof} We split this contribution into three terms
\begin{align}
\mathcal K^\Gamma & \, = \, \frac{1}{\alpha^2} \int \D y \,  \lsp \mathscr G_\Gamma^1 | \big( h^{\rm Pek} + \alpha^{-2} \mathbb N + \alpha^{-1} \phi(h_\cdot + \varphi_P) \big)  T_y e^{A_{P,y}} W(\alpha w_{P,y}) \mathscr G_\Gamma^1 \rsp_\h \notag\\
& \, = \,  \mathcal K_1^\Gamma + \mathcal K_2^\Gamma + \mathcal K_3^\Gamma
\end{align}
and note that $\mathcal K_1^\Gamma$ provides the energy contribution of order $\alpha^{-2}$.\medskip

\noindent \underline{Term $\mathcal K_1^\Gamma$}. We start again by separating the main term as follows,
\begin{align}
\mathcal K_1^\Gamma &\,  =\,  \frac{1}{\alpha^2} \int \D y\, \lsp  \mathscr G_\Gamma^1 | h^{\rm Pek} T_y W(\alpha w_{P,y})  \mathscr G_\Gamma^1 \rsp_\h   \notag\\
& \quad +  \frac{1}{\alpha^2} \int \D y \, \lsp \mathscr G_\Gamma^1 |h^{\rm Pek} T_y (e^{A_{P,y}}-1)   W(\alpha w_{P,y})  \mathscr G_\Gamma^1 \rsp_\h  \,  =\,  \mathcal K_{11}^\Gamma + \mathcal K_{12}^\Gamma.
\end{align}
Inserting $\mathscr G_\Gamma^1 = u_\alpha R \phi(h_{K,\cdot}^1) \PP \mathbb U_K^\dagger \gamma$ for $\gamma \in \mathcal F_1^{( \mathfrak m  )}$ we can write the first term as
\begin{align}
 \mathcal K_{11}^\Gamma & \, = \, \frac{1}{\alpha^2} \int \D y \, \lsp \mathscr G_\gamma^0 | \mathbf X_y   W(\alpha \widetilde w_{P,y}) \mathbf Y \mathscr G_\gamma^0 \rsp_\h
\end{align}
where
\begin{align}
  \mathbf X_y = \PP \phi( \underline{h_{K,\cdot}^1 }) R u_\alpha h^{\rm Pek} T_y u_\alpha R^{1/2}, \quad   \mathbf Y =  R^{1/2} \phi( \underline {h_{K,\cdot}^1} ) \PP.
\end{align}
An application of Lemma \ref{lem:X:gamma:W:chi} leads to
\begin{align}\label{eq:K11 Gamma bound intermediate}
& \bigg| 	  \mathcal K_{11}^\Gamma  -  \frac{1}{\alpha^2} \int \D y \, \lsp \mathscr G_\gamma^0 | \mathbf X_y   \mathbf Y \mathscr G_\gamma^0 \rsp_\h\, n_{0,1}(y) \bigg| \notag\\
& \hspace{2cm} \le \frac{C}{\alpha^3} \int \D y \, \sno \mathbf X^\dagger_y \mathscr G_\gamma^0 \sno_\h \sno \mathbf Y \mathscr G_\gamma^0 \sno_\h \, \mathfrak p_\alpha(y) \,  n_{0,1}(y).
\end{align}
Using \cite[Eq. (3.222)]{MMS22} and Lemma \ref{lem: LY CM}, one further verifies that
\begin{align}
\sno \mathbf  X^\dagger_y \mathscr G_\gamma^0 \sno_\h  \le C f_{3,\alpha}(y) \sno (\mathbb N + 1 )^{1/2} \gamma\sno_\Fock , \quad \sno \mathbf Y \mathscr G_\gamma^0 \sno_\h \le C \sno (\mathbb N+1)^{1/2} \gamma \sno_\Fock
\end{align}
where
\begin{align}\label{eq: def f_3}
f_{3,\alpha}(y) &  =  \sno u_\alpha T_{y} u_\alpha \sno_{\op} + \sno (\nabla u_\alpha ) T_{y} u_\alpha \sno_{\op} + \sno  u_\alpha T_{y} (\nabla u_\alpha) \sno_{\op}  + \sno (\nabla u_\alpha) T_{y} (\nabla u_\alpha) \sno_{\op}
\end{align}
satisfies (see \cite[Eq. (3.220)]{MMS22})
\begin{align} \label{eq: bound for f3}
\sno f_{3,\alpha}\sno_{\su}\le 4 \quad \text{and} \quad \sno |\cdot|^n f_{3,\alpha} \sno_\1 \le C(n) \alpha^{3+n}\quad \text{for all}\ n\in \mathbb N_0.
\end{align}
With the aid of Corollary \ref{cor: Gaussian for errors} we can thus bound the error in \eqref{eq:K11 Gamma bound intermediate} to obtain
\begin{align}
& \bigg| 	  \mathcal K_{11}^\Gamma  -  \frac{1}{\alpha^2} \int \D y \, \lsp \mathscr G_\gamma^0 | \mathbf X_y   \mathbf Y \mathscr G_\gamma^0 \rsp_\h\, n_{0,1}(y) \bigg| \le \frac{C}{\alpha^6} .
\end{align}
In the inner product we transform again with the Bogoliubov transformation $\mathbb U_K$, 
\begin{align}
\lsp \mathscr G_\gamma^0 | \mathbf X_y   \mathbf Y \mathscr G_\gamma^0 \rsp_\h = \lsp \mathscr G_\Gamma^0 | \PP \phi( {h_{K,\cdot}^1 }) R u_\alpha h^{\rm Pek} T_y u_\alpha R\phi(  {h_{K,\cdot}^1} ) \PP \mathscr G_\Gamma^0 \rsp_\h.
\end{align}
In the estimation of the remaining expression, we can follow closely the argument from \cite[Eqs. (3.210)--(3.225)]{MMS22}. We state the result without further details,
\begin{align}\label{eq: bound K11 new}
\bigg| \mathcal K_{11}^\Gamma +  \frac{1}{\alpha^2} \lsp \Gamma | ( \mathbb H_K - \mathbb N_1 ) \Gamma \rsp_\Fock \,  \sno \mathcal S_P \Gamma\sno_\Fock^2 \, \bigg| \, \le\, C(n,\varepsilon)  \sqrt K \alpha^{ - 6 + \varepsilon} . 
\end{align}

Using Lemma \ref{lem:error:terms:e-kappaN}, the term $\mathcal K_{12}^\Gamma$ is estimated following the same steps from the bound of $\mathcal K_{12}$ in \cite[Section 3.8]{MMS22}. This leads to
\begin{align}\label{eq: bound K12 new}
|\mathcal K_{12}^\Gamma | \le C(n,\varepsilon)  \sqrt K  \alpha^{-6+\varepsilon}
\end{align}
Also the contributions $\mathcal K_{2}^\Gamma$ and $\mathcal K_{3}^\Gamma$ can be estimated similarly as their counterparts in \cite[Section 3.8]{MMS22}. Here one finds
\begin{align}\label{eq: bound K2 K3 new}
| \mathcal K_{2}^\Gamma | + | \mathcal K_3^\Gamma| \le C(n,\varepsilon) \alpha^{-6 + \varepsilon} .
\end{align}

The proof of Proposition \ref{prop: bound for K} follows from combining \eqref{eq: bound K11 new}, \eqref{eq: bound K12 new} and \eqref{eq: bound K2 K3 new}.
\end{proof}

\subsection{Finishing the proof of Proposition \ref{theorem: main estimate 2}\label{Sec: concluding the proof}}

Combining Propositions \ref{prop: bound for E}, \ref{prop: bound for G} and \ref{prop: bound for K}, we have for all normalized $\Gamma\in \mathcal V_K^{(n+1)}$ that
\begin{align}\label{eq: conclusion bound}
\bigg| \frac{\mathcal E^\Gamma + \mathcal G^\Gamma + \mathcal K^\Gamma }{\sno \mathcal S_P \Gamma\sno_\Fock^2 } - \frac{\langle \Gamma | \mathbb H_K \Gamma \rangle_\Fock} {\alpha^2} + \frac{3}{2\alpha^2} \bigg| & \, \le \, C(n,\varepsilon) \, \alpha^{\varepsilon} \bigg( \frac{ K^{-1/2}\alpha^{-5} + \sqrt K \alpha^{-6} }{\sno \mathcal S_P \Gamma\sno_\Fock^2} \bigg)
\end{align}
for all $|P| \le \sqrt{2M^{\rm LP}} \alpha$, $K \ge K_0 $ and $\alpha \ge \alpha_0$. Applying Proposition \ref{prop:overlap} for $K\le \alpha$ and $\alpha \ge \alpha(n)$ with $\alpha (n)$ large enough, we further obtain $\sno \mathcal S_P \Gamma\sno_\Fock^2 \ge C \alpha^{-3}$ for some constant $C>0$. Hence the right side in \eqref{eq: conclusion bound} is bounded by $C(n,\varepsilon) \, \alpha^\varepsilon (K^{-1/2} \alpha^{-2} + K^{1/2} \alpha^{-3})$. In view of Lemma \ref{lem: energy identity} this completes the proof of Proposition \ref{theorem: main estimate 2}.

\section{Proofs of Auxiliary Lemmas \label{Sec: Remaining Proofs}}

\begin{proof}[Proof of Lemma \ref{lem: regularized Hessian}] For the proof of Items $(i)$ -- $(iv)$ see \cite[Lemma 2.2]{MMS22}. 

To show $(v)$, first note that because of $(iii)$ the operator $1-H_K^{\rm Pek}\ge 0$ is trace-class, hence compact, and thus the spectrum of $H_K^{\rm Pek}$ below one is purely discrete and possibly accumulates at one. Moreover, it follows from $\text{Ker}(R^{1/2}) = \text{Span}\{ \psi \}$, $\psi>0$ and
\begin{align}
\lsp   v | ( 1 - H_K^{\rm Pek} )  v \rsp_\2 & \, = 4 \bigg\|  R^{1/2}  \int \D y \,  v(y)h_{K,\cdot }(y) \psi \bigg\|^2_\2 \label{eq 1-H}
\end{align}
that $v\in \text{Ker}(1-H_K^{\rm Pek})$ if and only if $\int \D y \,  v(y)h_{K,x  }(y) = 0$ for all $x\in \mathbb R^3$. This is the case if and only if $\text{Supp}(\widehat v) \subset [K,\infty)$. Since $\{ v \in L^2(\mathbb R^3)\, |\, \text{Supp}(\widehat v) \subset [0,K) \}$ corresponds to an infinite dimensional subspace, the spectrum $\sigma (H_K^{\rm Pek}) \cap (0,1)$ consists in fact of infinitely many eigenvalues (of finite multiplicity) with accumulation point at one. 

The bound $|\lambda_K^{(n)} - \lambda_\infty^{(n)}| \le CK^{-1/2}$, $n\in \mathbb N_0$, follows from the min-max principle in combination with $|\langle v | (H_K^{\rm Pek}-H^{\rm Pek}) v\rangle_{\2} | \le C K^{-1/2}$ for $v\in \text{Ran}(\Pi_1)$. The latter has been shown in \cite[Proof of Lemma 2.2]{MMS22}. 

To show $(vii)$,  we write $ (1-\lambda_K^{(n)})^{1/2} \mathfrak u_K^{(n)} = (1-H_K^{\rm Pek})^{1/2} \mathfrak u_K^{(n)}$ and use $(iv)$ to obtain 
\begin{align}
(1-\lambda_K^{(n)}) \sno \nabla \mathfrak u_K^{(n)} \sno_\2^2  \le \sno (-i\nabla) (1-H_K^{\rm Pek})^{1/2} \sno_\op^2 \le \sno (- i \nabla) (1-H_K^{\rm Pek})^{1/2 } \sno_\HS^2\le C K.
\end{align}
\end{proof}

\begin{proof}[Proof of Lemma \ref{lem:linear:independence}] We argue by contradiction and assume that $\{\mathcal S_P \Gamma_K^{(j)} \}_{j=0}^n$ are linearly dependent, meaning that there is a set of complex coefficients $(c_j)_{j=0}^n$ with $\max_{0\leq j \leq n} |c_j| > 0$ such that
\begin{align}\label{lin_dep}
0 =  \sum_{j =0  }^n\, c_j \, \mathcal S_P \Gamma_K^{(j)}.
\end{align}
By Proposition \ref{prop:overlap} together with $K\leq \alpha$ and $\langle \Gamma_{K}^{(i)} |  \Gamma_{K}^{(j)} \rangle_\Fock = \delta_{ij} $, we have
\begin{align}\label{eq:overlap}
\bigg| \lsp \mathcal S_P \Gamma_K^{(i)} |\mathcal S_P \Gamma^{(j)}_K \rsp_\Fock - \delta_{ij} \bigg(\frac{\pi}{\lambda  \alpha^2}\bigg)^{3/2}  \bigg| & \, \le\,   C (n, \varepsilon )  \alpha^{-\frac{7}{2}+\varepsilon}.
\end{align}
After taking the inner product of \eqref{lin_dep} with $\mathcal{S}_P \Gamma_K^{(\ell)}$ where $\ell$ is such that $|c_\ell| \ge  |c_j| $ for all $1\le j\le n$, and dividing by $|c_\ell|$, we find
\begin{align}
 \sno \mathcal S_P \Gamma_K^{(\ell)} \sno^2_\Fock   \le  \sum_{\substack{ j =0 \\ j\neq \ell } }^n\,  \, \Big| \lsp \mathcal S_P \Gamma_K^{(\ell)}| \mathcal S_P \Gamma_K^{(j)} \rsp_\Fock \Big|.
\end{align}
Invoking \eqref{eq:overlap} it follows that
\begin{align}\label{eq:low on t}
  \bigg( \frac{\pi}{\lambda \alpha^2} \bigg)^{3/2} - C (n,\varepsilon) ~\alpha^{-\frac{7}{2}+\varepsilon }    \le  C(n,\varepsilon) n \, \alpha^{-\frac{7}{2}+\varepsilon},
\end{align}
which, for $\varepsilon < \tfrac{1}{2}$ and $\alpha \ge \alpha(n)$ with $\alpha(n)$ large enough, is a contradiction. Hence, the set $\{\mathcal S_P \Gamma_K^{(j)} \}_{j=0}^n$ is linearly independent and its linear span has dimension $n+1$.
\end{proof}

\begin{proof}[Proof of Lemma \ref{lem: bounds for P_f:new}] Let $p  := -i\nabla$ and $\Gamma_K^{(n)} $, $ n \in \mathbb N_0$, as in \eqref{eq:excited:states:Gamma}. As explained in Section \ref{sec:Bog:exc:spectrum}, the tuple $\mathfrak J_K^{(n)}$ has maximal length $|\mathfrak J_K^{(n)}|\le \mathfrak m$ and consists of numbers from $\{1,\ldots , n \}$, thus we can write $\mathfrak J_K^{(n)} =\{ j_1, \ldots j_m\}$ with $m\le \mathfrak m$ and $1\le j_\ell \le n $ for all $1\le \ell \le m$. From \cite[Lemma 3.14]{MMS22} we know that 
\begin{align}\label{eq:bound:Pf:Omega}
\sno P_f \mathbb U_K^\dagger \Omega \sno_\Fock \le C \sqrt K.
\end{align}
By unitarity of $\mathbb U_K$, we can thus estimate 
\begin{align}\label{eq:bound:Pf:Gamma}
 & \sno P_f \Gamma_K^{(n)} \sno_\Fock   = \sno ( \mathbb U_K P_f \mathbb U_K^\dagger) a^\dagger(\mathfrak u_K^{(j_1)}) \ldots a^\dagger (\mathfrak u_K^{(j_m)}) \Omega \sno_\Fock \notag \\[1mm]
  &  \quad \le  \sno  a^\dagger(\mathfrak u_K^{(j_1)}) \ldots a^\dagger (\mathfrak u_K^{(j_m)}) ( \mathbb U_K P_f \mathbb U_K^\dagger) \Omega \sno_\Fock + \sno [\mathbb U_K P_f \mathbb U_K^\dagger , a^\dagger(\mathfrak u_K^{(j_1)}) \ldots a^\dagger (\mathfrak u_K^{(j_m)})] \Omega \sno_\Fock \notag\\[1mm]
& \quad  \le C_m \sqrt K + \sno [\mathbb U_K P_f \mathbb U_K^\dagger , a^\dagger(\mathfrak u_K^{(j_1)}) \ldots a^\dagger (\mathfrak u_K^{(j_m)})] \Omega \sno_\Fock.
\end{align}
where we used \eqref{eq: standard estimates for a and a*}, $\sno \mathfrak u_K^{(j_\ell)} \sno_\2 = 1$, $\mathbb U_K P_f \mathbb U_K^\dagger \Omega \in \mathcal F^{(\le 2)}$ and \eqref{eq:bound:Pf:Omega}.
In the second term we use \eqref{eq: def of U} to compute 
\begin{align}\label{eq: U Pf Omega identity:new}
\mathbb{U}_K P_f \mathbb{U}_K^{\dagger} & = \text{Tr}_{L^2}(B_K p B_K) + \sum_{k,l}  \lsp g_k |( A_K p A_K  + B_K  p B_K)  g_l \rsp_\2  a^\dagger(g_k) a(g_l) \notag \\
  & \quad  + \sum_{k,l} \Big( \lsp g_k |A_K p B_K  g_l \rsp_\2 a^\dagger({g_k}) a^\dagger ( \overline{g_l}) +   \lsp g_k | B_K p A_K  g_l \rsp_\2 a (\overline{g_k}) a (g_l) \Big)
\end{align}
where $(g_k)_{k\in \mathbb N}$ is a suitable ONB of $L^2(\mathbb{R}^3)$. The first term is zero by rotational invariance (see also the proof of \cite[Lemma 3.14]{MMS22}). With this we evaluate the commutator applied to the vacuum state as
\begin{align}
[\mathbb U_K P_f \mathbb U_K^\dagger , a^\dagger(\mathfrak u_K^{(j_1)}) \ldots a^\dagger (\mathfrak u_K^{(j_m)})] \Omega =   \mathscr C_1 + \mathscr C_2 
\end{align}
with
\begin{align}
\mathscr C_1 & =  \sum_{k,l} \lsp g_k |(A_K p A_K  + B_K  p B_K)  g_l \rsp_\2  \Big[ a^\dagger(g_k) a(g_l)   , a^\dagger(\mathfrak u_K^{(j_1)}) \ldots a^\dagger (\mathfrak u_K^{(j_m)})\Big]\Omega \notag\\
& =  \sum_{\ell = 1}^m \bigg( \prod_{\substack {k=1 \\ k \neq \ell}}^m  a^\dagger (\mathfrak u_K^{(j_k)}) \bigg) a^\dagger((A_K p A_K  + B_K  p B_K) \mathfrak u_{K}^{(j_\ell)})  \Omega,
\end{align}
and $\mathscr C_2 = 0$ for $m\in \{0,1\}$ while
\begin{align}
\mathscr C_2 & =  \sum_{k,l}  \lsp g_k | B_K p A_K g_l \rsp_\2 \Big[  a (\overline{g_k}) a (g_l), a^\dagger(\mathfrak u_K^{(j_1)}) \ldots a^\dagger (\mathfrak u_K^{(j_m)})\Big]\Omega \notag\\
&   = \sum_{\ell = 1}^m a (A_K p B_K \mathfrak u_{K}^{(j_\ell)})  \bigg( \prod_{\substack {k=1 \\ k \neq \ell}}^m  a^\dagger (\mathfrak u_K^{(j_k)}) \bigg)  \Omega \hspace{2.5cm} \text{for}\ m\ge 2.
\end{align}
We proceed with bounding the first part of the commutator
\begin{align}
 \sno \mathscr C_1 \sno_\Fock & 
 \le C_m \sum_{\ell=1}^m  \big( \sno  A_K p A_K \mathfrak u_{K}^{(j_\ell)} \sno_\2 + \sno  B_K p B_K \mathfrak u_{K}^{(j_\ell)} \sno_\2\big), 
\end{align}
where the second summand is bounded by $\sno B_K p B_K \sno_\op \le \sno B_K \sno_\HS \sno p B_K \sno_\HS \le C K^{1/2}$, cf. Lemma \ref{lem: regularized Hessian}. To bound the first summand, recall that $\mathfrak u_K^{(j_\ell)}$ is a normalized eigenfunction of $H^{\rm Pek}_K$ with eigenvalue $1 > \lambda^{(j_\ell)}_K \ge \beta $, hence also an eigenfunction of $A_K$ whose eigenvalue is uniformly bounded in $K$ and $j_\ell$, see \eqref{eq: A and B on Pi0}. By Lemma \ref{lem: regularized Hessian} we obtain
\begin{align}
\sno A_K p A_K \mathfrak u_{K}^{(j_\ell)} \sno_\2 \le C  \sno p A_K \mathfrak u_{K}^{(j_\ell)} \sno_\2  \le C \sno p  \mathfrak u_{K}^{(j_\ell)} \sno_\2 \le C  \sqrt K  (1-\lambda^{(j_\ell)}_K  )^{-1/2} .
\end{align}
With $\lambda_K^{(j_n)} \le \lambda_K^{(n)}$ as $j_\ell \le n$, we can combine the bounds to
\begin{align}
\sno \mathscr C_1 \sno_\Fock  \le C \sqrt K (1- \lambda_K^{(n)})^{-1/2}.
\end{align}
We proceed (for $m\ge 2$) with
\begin{align}
\sno \mathscr C_2 \sno_\Fock & \le \sum_{\ell = 1}^m \sno a (A_k p B_K \mathfrak u_{K}^{(j_\ell)})  \bigg( \prod_{\substack {k=1 \\ k \neq \ell}}^m  a^\dagger (\mathfrak u_K^{(j_k)}) \bigg)  \Omega \sno_\Fock \notag\\
& \le C_m  \sum_{\ell = 1}^m \sno A_K p B_K \mathfrak u_{K}^{(j_\ell)}  \sno_\2 \le C_m \sno p B_K \sno_{\HS} \le C_m  \sqrt K,
\end{align}
and thus, in combination with \eqref{eq:bound:Pf:Gamma}, we have
\begin{align}\label{eq:Pf:Gamma:n:bound}
 \sno P_f \Gamma_K^{(n)} \sno_\Fock \le C \sqrt K   (1-\lambda^{(n)}_K )^{-1/2}.
\end{align}
for all $n\in \mathbb N_0$ and $K\ge K_0$.

Next let $\Gamma = \sum_{ j=0}^n c_j \Gamma_K^{(j)}$ with $\sum_{j=0}^n |c_j|^2 =1$, and use \eqref{eq:Pf:Gamma:n:bound} to estimate
\begin{align}
\sno P_f \Gamma \sno_\Fock^2 \le C \sqrt K \sum_{j=0}^n |c_j | (1-\lambda^{(j)}_K )^{-1/2}    \le C \sqrt K \sqrt {n+1} ( 1-\lambda^{(n)}_K )^{-1/2}
\end{align}
where we used $\lambda_K^{(j)} \le \lambda_K^{(n)}$ for $j\le n$ and $\sum_{j=0}^n |c_j |\le \sqrt{n+1}$. The last factor is bounded by
\begin{align}
\frac{1}{ 1-\lambda_K^{(n)} } \le 
\frac{1}{ 1-\lambda_\infty^{(n)} } \bigg( 1 + \frac{\lambda_\infty^{(n)}-\lambda_K^{(n)} }{1-\lambda_\infty^{(n)}} \bigg)^{-1} \le \frac{2}{1-\lambda_\infty^{(n)}}
\end{align}
for all $n \in \mathbb N_0$ and $K\ge K(n)$ for large enough $K(n)$, where we used Lemma \ref{lem: regularized Hessian} (vii). This completes the proof of the lemma.
\end{proof}

\begin{proof}[Proof of Lemma \ref{lem:X:gamma:W:chi}] With \eqref{eq: definition of F} and \eqref{eq: def of tilde w:0} -- \eqref{eq: def of tilde w}, we have
\begin{align}
& \lsp  \mathscr G_\gamma^0 |  \mathbf  X W(\alpha \widetilde w_{P,y} ) \mathbf Y \mathscr G_\xi^0 \rsp_\Fock = \lsp \mathscr G_\gamma^0 | \mathbf  X e^{a^\dagger(\alpha \widetilde w_{P,y}^1 ) } e^{-a(\alpha \widetilde w_{P,y}^1 ) } \mathbf  Y \mathscr G_\xi^0 \rsp_\Fock n_{0,1}(y) \notag\\[1mm]
& \quad = \lsp \mathscr G_\gamma^0 | \mathbf  X  \mathbf  Y \mathscr G_\xi^0 \rsp_\Fock n_{0,1}(y) +  \lsp \mathscr G_\gamma^0 | \mathbf  X \big(  e^{a^\dagger(\alpha \widetilde w_{P,y}^{ 1 } ) } e^{-a(\alpha \widetilde w_{P,y}^{ 1 } ) } -1 \big) \mathbf Y \mathscr G_\xi^0 \rsp_\Fock n_{0,1}(y) 
\end{align}
since $\mathbf  X^\dagger \mathscr G_\gamma^0$ and $\mathbf  Y  \mathscr G_\xi^0$ correspond to the vacuum in the $\mathcal F_0$ component. To estimate the error, we write
\begin{align}
&  \lsp \mathscr G_\gamma^0 | \mathbf  X \big(  e^{a^\dagger(\alpha \widetilde w_{P,y}^{ 1 } ) } e^{-a(\alpha \widetilde w_{P,y}^{ 1 } ) } -1 \big) \mathbf  Y \mathscr G_\xi^0 \rsp_\Fock \notag\\[1mm]
&  =  \lsp e^{a(\alpha \widetilde w_{P,y}^{ 1} ) } \mathbf  X^\dagger \mathscr G_\gamma^0 |   \big(  e^{-a (\alpha \widetilde w_{P,y}^{1} ) }  -1 \big) \mathbf Y \mathscr G_\xi^0 \rsp_\Fock   +  \lsp (e^{a (\alpha \widetilde w^{1}_{P,y} ) } -1  ) \mathbf  X ^\dagger \mathscr G_\gamma^0 |  \mathbf  Y \mathscr G_\xi^0 \rsp_\Fock, 
\end{align}
truncate the Taylor expansion of the exponential series at order $\mathfrak m + 2$ (all other terms vanish since $\mathbf  X^\dagger \mathscr G_\gamma^0, \mathbf  Y  \mathscr G_\xi^0 \in  L^2(\mathbb R^3) \otimes \mathcal F_1^{(\le \mathfrak  m + 2 )})$, and apply Cauchy--Schwarz. We show the details for the first term (the second term works the same way)
\begin{align}
& \big| \lsp e^{a(\alpha \widetilde w_{P,y}^{ 1} ) } \mathbf  X^\dagger \mathscr G_\gamma^0 |   \big(  e^{-a (\alpha \widetilde w_{P,y}^{1} ) }  -1 \big) \mathbf  Y \mathscr G_\xi^0 \rsp_\Fock  \big| \notag\\
& \hspace{3cm} \le \sum_{k=0}^{\mathfrak m + 2 } \sum_{ \ell  = 1 }^{ \mathfrak m + 2  }\frac{1}{k! \ell! } \sno a(\alpha \widetilde w^{ 1 }_{P,y} )^k \mathbf X^\dagger \mathscr G_\gamma^0 \sno_\Fock \,  \sno a(\alpha \widetilde w^{ 1 }_{P,y} )^\ell  \mathbf  Y \mathscr G_\xi^0 \sno_\Fock .\label{eq:label:x}
 \end{align}
Next we use \eqref{eq: standard estimates for a and a*} together with $\sno \alpha \widetilde w_{P,y}^{ 1 }\sno_\2 \le C \alpha^{-1}(1 + \alpha^2 y^2)$, cf. Lemma \ref{lem: bound for w1 and w0}, to obtain
\begin{align}
 \eqref{eq:label:x} &  \le  \sum_{k=0}^{\mathfrak m + 2  } \sum_{\ell =1}^{\mathfrak m + 2}  (C\alpha^{-1})^{k+\ell}  (1 + \alpha^2 y^2 )^{k+\ell}   \sno \mathbb N^{k/2} \mathbf X \mathscr G_\gamma^0 \sno_\Fock \sno \mathbb N^{\ell/2} \mathbf Y_y \mathscr G_\xi^0 \sno_\Fock \notag\\
 & \le \alpha^{-1} \sum_{k=0}^{\mathfrak m + 2} \sum_{\ell =1}^{\mathfrak m + 2 }   \frac{C^{k+\ell} (\mathfrak m+2 )^{k/2 + \ell/2}}{k!\ell! } (1 + \alpha |y| )^{2(k+\ell)} \sno \mathbf X^\dagger \mathscr G_\gamma^0 \sno_\Fock \sno \mathbf Y \mathscr G_\xi^0 \sno_\Fock \notag\\[2.5mm]
 &   \le C \alpha^{-1} \, \mathfrak p_\alpha (y ) \, \sno \mathbf X^\dagger \mathscr G_\gamma^0  \sno_\Fock \sno \mathbf Y \mathscr G_\xi^0 \sno_\Fock
\end{align}
with $\mathfrak p_\alpha(y) =  1 + (\alpha |y|)^{4\mathfrak m + 8}$, where we used $\alpha \ge 1$.
\end{proof}
\begin{proof}[Proof of Lemma \ref{lem:error:terms:e-kappaN}] Using \eqref{eq: BCH for Weyl} and $e^{-\kappa \mathbb N }a^\dagger(f) e^{\kappa \mathbb N} = a^\dagger(e^{-\kappa }f)$ one finds
\begin{align}
e^{-\kappa \mathbb N } W(\alpha \widetilde w_{P,y})e^{\kappa \mathbb N} & =  e^{a^\dagger(  \alpha e^{-\kappa}  \widetilde w_{P,y}) } e^{-a(  \alpha e^{-\kappa}  \widetilde w_{P,y})} e^{-a(2 \alpha \sinh(\kappa) \widetilde  w_{P,y} )} e^{ -\tfrac{1}{2} \alpha^2 \sno \widetilde w_{P,y}\sno_\2^2 } \notag\\[1mm]
& = W ( \alpha e^{-\kappa}  \widetilde w_{P,y})   e^{-a(2 \alpha \sinh(\kappa) \widetilde  w_{P,y} )} \exp\big( \tfrac{\alpha^2 ( e^{-2\kappa} - 1 )}{2}  \sno \widetilde w_{P,y} \sno^2_\2 \big),
\end{align}
and thus
\begin{align}
& \sno e^{-\kappa \mathbb N } W(\alpha \widetilde w_{P,y}) \gamma \sno_\Fock =   \sno   e^{ - a ( 2  \alpha \sinh(\kappa) \widetilde  w_{P,y} ) } e^{-\kappa \mathbb N }  \gamma \sno_\Fock\, \exp\big( \tfrac{\alpha^2 ( e^{-2\kappa} - 1 )}{2}  \sno \widetilde w_{P,y} \sno^2_\2 \big).
\end{align}
For $\kappa = 1/ (16 e b \alpha^\delta)$ we can find an $\alpha$-independent constant $\eta>0$ such that for $\alpha$ large enough
\begin{align}
 \exp\big( \tfrac{\alpha^2 ( e^{-2\kappa} - 1 )}{2}  \sno \widetilde w_{P,y} \sno^2_\2 \big) \le \exp\big( - \tfrac{ \eta \alpha^{2(1-\delta)}}{2}  \sno \widetilde w_{P,y} \sno^2_\2  \big)  =  n_{\delta,\eta}(y).
\end{align}
For normalized $\gamma \in \mathcal F^{(\le \mathfrak m )}$, and since $\sno   \widetilde w_{P,y} \sno_\2 \le C( |y|  +  |y|^3)$, cf. Lemma \ref{lem: bound for w1 and w0}, we can further estimate 
\begin{align}
 \sno    e^{ -a ( 2  \alpha \sinh(\kappa) \widetilde  w_{P,y} ) }  e^{-\kappa \mathbb N } \gamma \sno_\Fock  &  \le    \sum_{k=0}^{\mathfrak m}  \frac{ 2^k  \sinh(\kappa)^k \mathfrak m^{k/2} } {k!}  \alpha^k \sno \widetilde  w_{P,y} \sno_\2^k  \notag\\
&   \le   \sum_{k=0}^{\mathfrak m}  \frac{C^k 2^k  \sinh(\kappa)^k  \mathfrak m^{k/2}  } {k!} \big( \alpha |y|  + \alpha |y|^3)^k   \le C \mathfrak p_\alpha(y)
\end{align}
where we used $\alpha\ge 1$ in the last step.
\end{proof}

\hspace{3mm}

\noindent\textbf{Acknowledgments}. We are grateful to  Krzysztof  My\'sliwy and Morris Brooks for valuable discussions. 

\end{spacing}

\end{document}